\DeclareMathOperator{\Tr}{Tr}
\renewcommand{\proj}[1]{| #1 \rangle\!\langle #1 |}
\renewcommand{\braket}[2]{\langle #1 | #2 \rangle}
\renewcommand{\ket}[1]{|#1 \rangle}
\newtheorem{theorem}{Theorem}
\newtheorem{corollary}[theorem]{Corollary}
\newtheorem{lemma}[theorem]{Lemma}
\newtheorem{definition}[theorem]{Definition}
\newtheorem{claim}[theorem]{Claim}
\newtheorem{remark}[theorem]{Remark}
\def \ketbra#1#2{|#1\rangle\! \langle #2|}
\def \sandwich#1#2#3{\langle#1|#2|#3\rangle}
\def \KeyGen {\normalfont{ \texttt{KeyGen}}}
\def \Enc {\normalfont{ \texttt{Enc}}}
\def \Eval {\normalfont{ \texttt{Eval}}}
\def \Dec {\normalfont{ \texttt{Dec}}}
\def \CL {\normalfont{ \texttt{CL}}}
\def \CNOT {\normalfont{\texttt{CNOT}}}
\def \XGate {\normalfont{\texttt{X}}}
\def \ZGate {\normalfont{\texttt{Z}}}
\def \A {\normalfont{\textsf{A}}}
\def \O {\normalfont{\textsf{O}}}
\def \R {\normalfont{\textsf{R}}}
\def \B {\normalfont{\textsf{B}}}
\def \X {\normalfont{\textsf{X}}}
\def \Y {\normalfont{\textsf{Y}}}
\def \E {\normalfont{\textsf{E}}}
\def \Accept {\normalfont{\texttt{Accept}}}
\def \Abort {\normalfont{\texttt{Abort}}}
\newcommand\ylhS[1]{[YH:{\let\helpcmd\sout\parhelp#1\par\relax\relax}] }
\long\def\parhelp#1\par#2\relax{%
	\helpcmd{#1}\ifx\relax#2\else\par\parhelp#2\relax\fi%
}
\begin{document}
\begin{CJK*}{UTF8}{gbsn}
\title{Privacy and correctness trade-offs for information-theoretically secure quantum homomorphic encryption}

\author{Yanglin Hu (胡杨林)}
\orcid{0009-0000-1105-589X}
\affiliation{ Centre of Quantum Technologies, National University of Singapore, Singapore}

\author{Yingkai Ouyang}  
\orcid{0000-0003-1115-0074}
\affiliation{ 
Centre of Quantum Technologies, National University of Singapore, Singapore}

\author{Marco Tomamichel}  
\orcid{0000-0003-1115-0074}
\affiliation{ 
Centre of Quantum Technologies, National University of Singapore, Singapore}
\affiliation{Department of Electrical and Computer Engineering, National University of Singapore}

\maketitle
\end{CJK*}

\begin{abstract} 
    Quantum homomorphic encryption, which allows computation by a server directly on encrypted data, is a fundamental primitive out of which more complex quantum cryptography protocols can be built. For such constructions to be possible, quantum homomorphic encryption must satisfy two privacy properties: data privacy which ensures that the input data is private from the server, and circuit privacy which ensures that the ciphertext after the computation does not reveal any additional information about the circuit used to perform it, beyond the output of the computation itself. While circuit privacy is well-studied in classical cryptography and many homomorphic encryption schemes can be equipped with it, its quantum analogue has received little attention. Here we establish a definition of circuit privacy for quantum homomorphic encryption with information-theoretic security. Furthermore, we reduce quantum oblivious transfer to quantum homomorphic encryption. By using this reduction, our work unravels fundamental trade-offs between circuit privacy, data privacy and correctness for a broad family of quantum homomorphic encryption protocols, including schemes that allow only the computation of Clifford circuits.
\end{abstract}

\section{Introduction} 

Given the difficulty of building reliable quantum computers at scale, it is reasonable to expect the first such quantum computers to be controlled by a few service providers with the prerequisite infrastructure and resources~\cite{Fitzsimons_2017}. In such a scenario, individual users will, in contrast, only hold quantum computers with far more limited capabilities and must delegate complex computations to large servers. However, there is also an inherent lack of trust between the service providers and the users: individuals would like to keep their data private from large corporations, while the service providers would like to keep their exact implementation private from competitors. Protocols such as blind quantum computing~\cite{Aharonov_2008,Broadbent_2009,Morimae_2013,Reichardt_2013,Fitzsimons_2017,Mantri_2017} are proposed to help with the situation; they require only minimal quantum capabilities of the client but require extensive communication. Quantum homomorphic encryption~\cite{Yu_2014,Broadbent_2015,Dulek_2016,Tan_2016,Ouyang_2018,Mahadev_2020,Ouyang_2022}, which allows a server to compute on encrypted data of a client without first decrypting it, offers an alternative solution to this problem. Here, all communication can be done in one round, but the client needs a quantum computer for encoding the input and decoding the output.

Because classical homomorphic encryption~\cite{Gentry_2009,Gentry_2009_thesis,Gentry_2010} can build a broad range of more complex classical cryptographic primitives such as multiparty secure computation and private information retrieval, it has been called the ``Swiss army knife'' of classical cryptography~\cite{Barak_2012,Lindell_2017,Esmaeilzade_2022}. These reductions\footnote[1]{Suppose that we have two protocols, protocol $A$ and protocol $B$. If we can use protocol $B$ to realise protocol $A$, then we can reduce protocol $A$ to protocol $B$~\cite{Reingold_2004}. If protocol $A$ is impossible, then protocol $B$ is also impossible. Thus limitations of protocol $A$ also apply to protocol $B$. } rely crucially on the data and circuit privacy of homomorphic encryption schemes. While data privacy is inherent in homomorphic encryption, circuit privacy encapsulates the property that no additional information about the computation circuit is leaked beyond its action on the encoded data.

It is natural to conjecture a quantum analogue of the classical reduction, i.e. reducing quantum oblivious transfer to quantum homomorphic encryption. In this paper, we progress in understanding the extent to which quantum homomorphic encryption can be analogously a ``Swiss army knife'' of quantum cryptography. In particular, we explore the limitations of quantum homomorphic encryption in the paradigm of information-theoretic security.

Quantum homomorphic encryption that allows the delegation of an {\em arbitrary} quantum computation while also assuring the privacy of the client's encrypted data cannot exist because its existence would violate well-known information-theoretic bounds, such as Nielsen's no-programming bound~\cite{Yu_2014} or fundamental coding-type bounds~\cite{Lai_2018,Newman_2018} such as Nayak's bound~\cite{Nayak_1999} which bounds the amount of classical information that can be stored in a quantum state. 
On the other hand, if we consider quantum homomorphic encryption schemes that support only Clifford computations, such schemes exist with asymptotically perfect data privacy and correctness~\cite{Ouyang_2018}. 
Hence, there is the hope that, by restricting ourselves to quantum homomorphic encryption schemes that support a limited set of operations~\cite{Tan_2016,Tan_2017,Ouyang_2018,Lai_2018,Ouyang_2020,Ouyang_2022}, such schemes can still have enough functionality (data privacy, correctness and circuit privacy) to serve as a Swiss army knife. Here we show that this is not possible. 
Namely, even if we restrict ourselves to quantum homomorphic encryption protocols that can perform only two-qubit Clifford gates, we find that data privacy, circuit privacy and correctness cannot be simultaneously achieved. 
In particular, we obtain non-trivial trade-offs between these parameters for such computationally-restricted quantum homomorphic encryption schemes.

Our main contributions are as follows:
\begin{itemize}
    \item We introduce a formal definition of circuit privacy for quantum homomorphic encryption schemes. (See Definition~\ref{def:circuit_privacy}.) 
    We achieve this by introducing a quantum counterpart of the simulation paradigm~\cite{Lindell_2017} in classical cryptography. Roughly speaking, the simulation paradigm is a pattern of using a simulator to compare a possibly insecure actual protocol with a naturally secure ideal protocol.

    \item We give an explicit reduction from quantum oblivious transfer to quantum homomorphic encryption by constructing a quantum oblivious transfer protocol with a quantum homomorphic encryption protocol. In this reduction, we use only quantum homomorphic encryption protocols that perform delegated Clifford circuits and additionally utilize genuine random classical bits. (See Theorem~\ref{thm:construction}).
    \item The reduction allows us to inherit no-go results for quantum oblivious transfer~\cite{Chailloux_2013,Chailloux_2016,Amiri_2021} to quantum homomorphic encryption. We find that, for any information-theoretically secure quantum homomorphic encryption scheme support (at least) Clifford operations, it holds that
    \begin{align}
        \label{eq:nogo-bound}
            \epsilon_d + \epsilon_c + 4\sqrt{\epsilon} \geq \frac{1}{2} ,
    \end{align}
    where $\epsilon_d$, $\epsilon_c$ and $\epsilon$ are parameters describing data privacy, circuit privacy and correctness, respectively, and ideally, we would want them to all be small.
\end{itemize}

It is worth emphasizing that our results apply only to quantum homomorphic encryption schemes with information-theoretic security.
In particular, quantum homomorphic encryption schemes based on computational hardness assumptions~\cite{Broadbent_2015,Dulek_2016,Mahadev_2020} might be able to support better trade-offs for these parameters. Notably, circuit privacy for computational-secure semi-honest quantum homomorphic encryption was discussed in~\cite{Dulek_2016}.

Our paper is structured as follows. In Section~\ref{sec:definitions}, we give formal definitions of quantum cryptographic primitives. In Section~\ref{subsec:definitions_of_qhe}, we give the scheme of quantum homomorphic encryption and define the correctness, data privacy and circuit privacy. In Section~\ref{subsec:definitions_of_qot}, we discuss two types of quantum oblivious transfer, standard and semi-random oblivious transfer, and show their equivalence. In Section~\ref{sec:reduction}, we reduce standard oblivious transfer to quantum homomorphic encryption. In Section~\ref{sec:bounds_for_qot}, we present a bound for semi-random oblivious transfer. In Section~\ref{sec:bounds_for_qhe}, we present bounds for quantum homomorphic encryption. In Section~\ref{sec:lower_bounds_for_qhe}, we obtain our lower bound for quantum homomorphic encryption by reduction. In Section~\ref{sec:upper_bounds_for_qhe}, we derive our upper bound.

\section{Quantum homomorphic encryption and oblivious transfer}\label{sec:definitions}

\subsection{Notations}

When $x\in \{0,1\}$ denotes a bit, we let $\overline{x}=x \oplus 1\in \{0,1\}$ denote its complement. An $n$-bit string is a binary vector $(x_1,...,x_n)\in \{0,1\}^n$. A random bit is denoted by $\$ $. We use Latin capital letters in a sans serif font, such as $\X$, to denote the system and also its Hilbert space. The set of density matrices of a system $\X$ is denoted by $\mathscr{S}(\X)$. The set of completely positive trace-preserving maps from a system $\X$ to a system $\Y$ is denoted by ${\rm CPTP}(\X, \Y)$. The set of unitary channels on $\X$ is denoted by $\mathscr{U}(\X)$. The adjoint channel $\mathcal{N}^*\in{\rm CPTP}(\Y,\X)$ of a channel $\mathcal{N}\in{\rm CPTP}(\X,\Y)$ is defined by the relation $\Tr(\rho \mathcal{N}(\sigma))=\Tr(\mathcal{N}^*(\rho)\sigma)$ for all $\rho\in\mathscr{S}(\Y)$ and $\sigma\in\mathscr{S}(\X)$. The identity channel of $n$ dimensions is denoted by $\mathcal{I}_n$. The identity channel on $\X$ is denoted by $\mathcal{I}_{\X}$. For simplicity, we will omit identity channels $\mathcal{I}_{\X}$ and $\mathcal{I}_{\X}$ if it does not cause any confusion. The set of unitary operators on $\X $ is denoted by ${\rm U}(\X)$. The identity operator acting on $n$ dimensions and on $\X$ are denoted by $\mathbbm{I}_n$ and  $\mathbbm{I}_{\X}$, respectively. The Schatten $1$-norm is defined by $\|A\|_1 = \sqrt{A^\dagger A}$. The trace distance is defined by $\Delta(A,B)=\frac{1}{2}\|A-B\|_1$. The fidelity is defined by $F(A,B)=\|\sqrt{A}\sqrt{B}\|_1$. The Hermitian conjugate of a term in an equation is denoted by a $\textnormal{h.c.}$ following the term. Table~\ref{table:notations} summarizes the notations. 
\begin{table}[!htbp]
    \centering
    \begin{tabular}{c|c}
        \hline
        Notation & Description \\
        \hline
        $\$$ & A random bit. \\
        $\X$ & The system and also its Hilbert space. \\
        $\mathscr{S}(\X)$ & The set of density matrices on $\X$. \\
        ${\rm CPTP}(\X,\Y)$ & The set of CPTP maps from $\X$ to $\Y$. \\
        $\mathscr{U}(\X)$ & The set of unitary channels on $\X$. \\
        $\mathcal{N}^*$ & The adjoint channel of $\mathcal{N}$. \\
        $\mathcal{I}_n$ & The identity channel of $n$ dimensions. \\
        $\mathcal{I}_{\X}$ & The identity channel on $\X$. \\
        ${\rm U}(\X)$ & The set of unitary matrices on $\X$. \\
        $\mathbbm{I}_n$ & The identity operator of $n$ dimensions. \\
        $\mathbbm{I}_{\X}$ & The identity operator on $\X$. \\
        $\|\,\cdot\,\|_1$ & The Schatten $1$-norm. \\ 
        $\Delta(\,\cdot\,,\,\cdot\,)$ & The trace distance between two states. \\ 
        $F(\,\cdot\,,\,\cdot\,)$ & The fidelity between two states. \\
        $\textnormal{h.c.}$ & The Hermitian conjugate of a term. \\
        \hline
    \end{tabular}
    \caption{Notations}
    \label{table:notations}
\end{table}

\subsection{Classical homomorphic encryption}
For the formal definition of classical homomorphic encryption, interested readers may refer to~\cite{Gentry_2009_thesis, Lindell_2017}. Here we only introduce classical homomorphic encryption informally, emphasising its scheme and circuit privacy. In a classical homomorphic encryption protocol, Alice, the user, encrypts the input and sends the ciphertext to Bob, the server. Next, Bob evaluates a function on the ciphertext without decryption and sends the evaluated ciphertext back to Alice. Finally, Alice decrypts the evaluated ciphertext and obtains the output. 

The correctness of classical homomorphic encryption requires that the output is the same as the function computed on the input. Data privacy requires that Bob cannot distinguish the ciphertexts corresponding to different inputs. 

Circuit privacy is defined in a simulation paradigm. It is defined by comparing a possibly insecure actual protocol to a naturally secure ideal protocol. Alice knows the input, the ciphertext, and the modified ciphertext in the actual protocol. In the ideal protocol, Alice knows the input and the function computed by the input. Bob's circuit is private if Alice does not learn more information about the circuit in the actual protocol than in the ideal protocol. This happens when a simulator can simulate the results of the actual protocol with the results of the ideal protocol. Hence, circuit privacy quantifies the simulator's performance.

In classical cryptography, we can reduce classical oblivious transfer to classical homomorphic encryption. For a generic but simple classical reduction from classical oblivious transfer to classical homomorphic encryption, interested readers may refer to~\cite{Esmaeilzade_2022}. The essence of the reduction is that classical homomorphic encryption ensures not only data privacy but also circuit privacy. We will use this idea in the quantum analogue of classical reduction. 

\subsection{Quantum homomorphic encryption}\label{subsec:definitions_of_qhe}

In this subsection, we introduce relevant definitions of quantum homomorphic encryption. The scheme of quantum homomorphic encryption is similar to that of classical homomorphic encryption, which is presented in Definition~\ref{def:qhe_scheme}. Both data privacy and circuit privacy are essential if we treat quantum homomorphic encryption as a ``Swiss army knife'' primitive. We formally define correctness, data privacy and circuit privacy as a quantum analogue of their classical counterparts in Definition~\ref{def:correctness}, Definition~\ref{def:data_privacy} and Definition~\ref{def:circuit_privacy}, respectively.

Figure~\ref{fig:qhe_scheme} describes a quantum homomorphic encryption scheme. In a quantum homomorphic encryption protocol, Alice computes a function $\KeyGen$ to obtain the key, uses the key and encryption map $\Enc$ to encrypt Alice's input and then sends the encryption to Bob. Bob applies $\Eval$ to evaluate Bob's channel $\mathcal{F}$ on the encryption and sends the evaluated encryption back to Alice. Alice decrypts the evaluated encryption with the key and a decryption map $\Dec$ and obtains Alice's output.  
\begin{figure}[!htpb]
    \centering
    \includegraphics[scale=0.35]{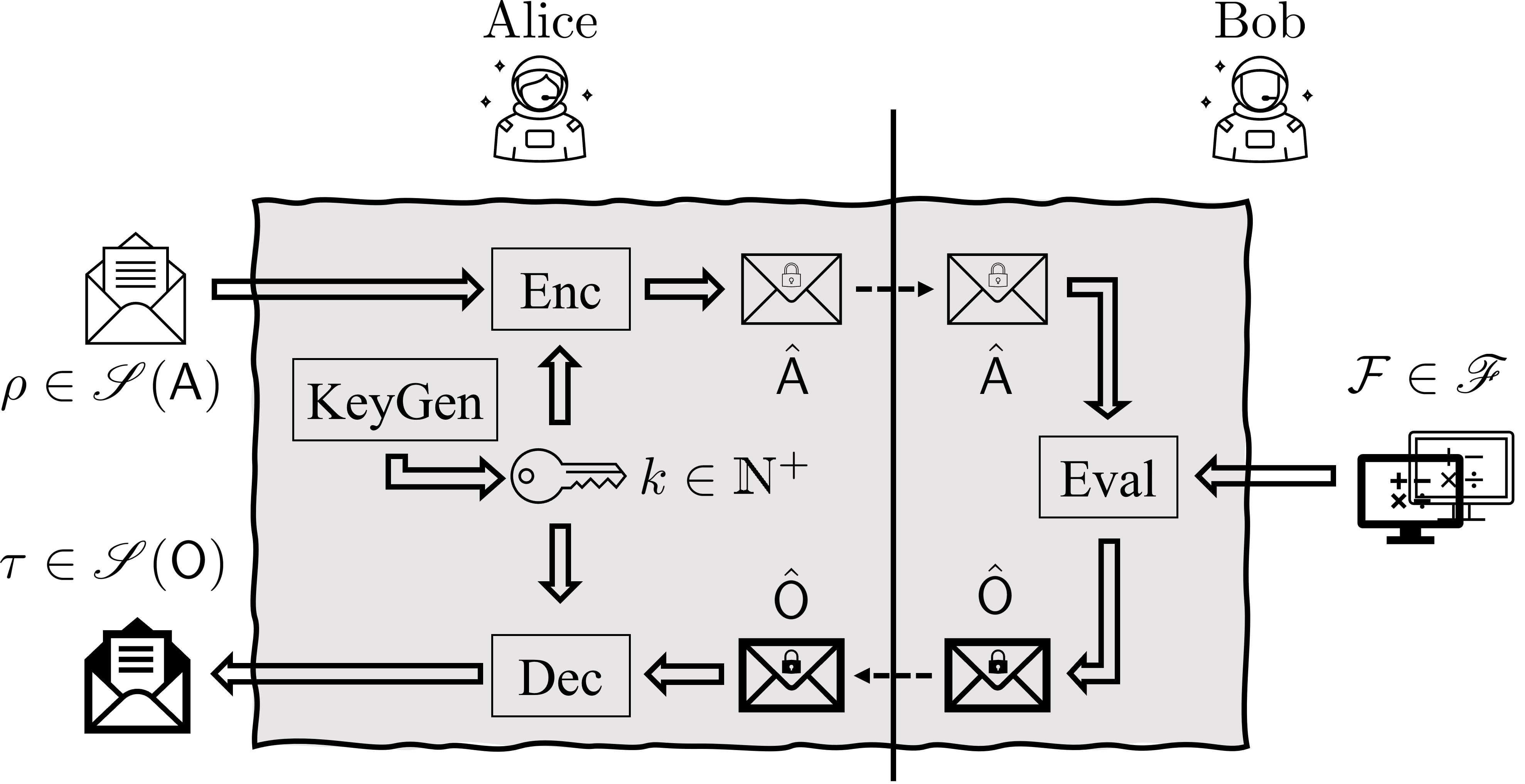}
    \caption{The scheme of quantum homomorphic encryption. }
    \label{fig:qhe_scheme}
\end{figure} 
\begin{definition}[Scheme for quantum homomorphic encryption]\label{def:qhe_scheme}
    A quantum homomorphic encryption protocol is a secure two-party computation protocol where Alice requires Bob to evaluate a channel in one round of communication. It is characterized by a set of channels and four maps $(\mathscr{F}, \KeyGen,\Enc, \Eval, \Dec)$.
    
    \begin{itemize}
        \item $\mathscr{F}\subset {\rm CPTP}(\A,\O)$ is a set of channels that Alice would like to delegate.
        \item $\KeyGen: \emptyset\rightarrow \{0,1\}^{L}$ is the key generation map which generates a random key $k\in \{0,1\}^L$ where $L$ is the length of the key. Without loss of generality, we define the space of all keys as the space of all $L$-bit strings $\{0,1\}^{L}$. 
        \item $\Enc:\{0,1\}^{L} \times\mathscr{S}(\A) \rightarrow \mathscr{S}(\hat{\A})$ is the encryption map which takes the key $k$ and Alice's unencrypted input $\rho\in \mathscr{S}(\A)$ and gives out Alice's encrypted message $\sigma\in \mathscr{S}(\hat{\A})$. 
        \item $\Eval:\mathscr{F}\times  \mathscr{S}(\hat{\A})\rightarrow \mathscr{S}(\hat{\O})$ is the evaluation map which takes in a channel $\mathcal{F}\in \mathscr{F}$ and Alice's encrypted message $\sigma\in \mathscr{S}(\hat{\A})$ and gives out Bob's encrypted message $\theta\in \mathscr{S}({\hat{\O}})$.
        \item $\Dec: \{0,1\}^{L}\times \mathscr{S}(\hat{\O})\rightarrow  \mathscr{S}(\O)$ is the decryption map which takes in the key $k$ and Bob's encrypted message $\theta\in \mathscr{S}({\hat{\O}})$ and gives out Alice's unencrypted output $\tau\in \mathscr{S}(\O)$.
    \end{itemize}
    For simplicity, we will denote $\Enc(k,\,\cdot\,)$ by $\Enc_k$, $\Eval(\mathcal{F},\,\cdot\,)$ by $\hat{\mathcal{F}}(\,\cdot\,)$, $\{\Eval(\mathcal{F},\,\cdot\,): \mathcal{F}\in\mathscr{F}\}$ by $\hat{\mathscr{F}}$ and $\Dec(k,\,\cdot\,)$ by $\Dec_k(\,\cdot\,)$. 
\end{definition}

\begin{table}[!htbp]
    \centering
    \begin{tabular}{c|c|c}
        \hline
        Function & Input & Output\\
        \hline
        $\KeyGen$ & $\emptyset$ & $k\in\{0,1\}^{L}$ \\
        $\Enc$ & $k\in \{0,1\}^L$, $\rho\in \mathscr{S}(\A)$ & $\sigma\in\mathscr{S}(\hat{\A})$ \\
        $\Enc_k$ & $\rho\in \mathscr{S}(\A)$ & $\sigma\in\mathscr{S}(\hat{\A})$ \\
        $\Eval$ & $\mathcal{F}\in \mathscr{F}$, $\sigma\in\mathscr{S}(\hat{\A})$ & $\theta\in\mathscr{S}(\hat{\O})$ \\
        $\hat{\mathcal{F}}$ & $\sigma\in\mathscr{S}(\hat{\A})$ & $\theta\in\mathscr{S}(\hat{\O})$ \\
        $\Dec$ & $k\in\{0,1\}^L$, $\theta\in\mathscr{S}(\hat{\O})$ & $\tau\in\mathscr{S}(\O)$ \\ 
        $\Dec_k$ & $\theta\in\mathscr{S}(\hat{\O})$ & $\tau\in\mathscr{S}(\O)$ \\ 
        \hline
    \end{tabular}
    \caption{channels in the quantum homomorphic encryption protocol}
\end{table}

For a quantum homomorphic encryption protocol to be meaningful, it needs to return the correct output. This is quantified by a property known as correctness.

\begin{definition}[Correctness of quantum homomorphic encryption]\label{def:correctness}
    A quantum homomorphic encryption protocol $(\mathscr{F}, \KeyGen, \Enc, \Eval, \Dec)$ is $\epsilon$-correct if, for every channel $\mathcal{F}\in \mathscr{F}$, every Alice's input and its purification $\ket{\psi'}\in {\A \R_{\A}}$ and every key $k\in\{0,1\}^L$, 
    \begin{align}
        \Delta\left(((\Dec_k \hat{\mathcal{F}} \Enc_k)\otimes \mathcal{I}_{\R_{\A}})[\proj{\psi'}_{\A \R_{\A}}] , (\mathcal{F}\otimes\mathcal{I}_{\R_{\A}})(\proj{\psi'}_{\A \R_{\A}})\right)\leq \epsilon. 
    \end{align}
    Specifically when $\epsilon=0$, we say that the protocol is perfectly correct.
\end{definition}

Quantum homomorphic encryption needs to protect an honest Alice's data when a malicious Bob strives to learn it, as is shown in Figure~\ref{fig:qhe_cheating_bob}. A quantum homomorphic encryption protocol is data private if a malicious Bob cannot distinguish different inputs of an honest Alice. The trace distance between encrypted states describes their indistinguishability.

\begin{figure}[!htpb]
    \centering
    \includegraphics[scale=0.35]{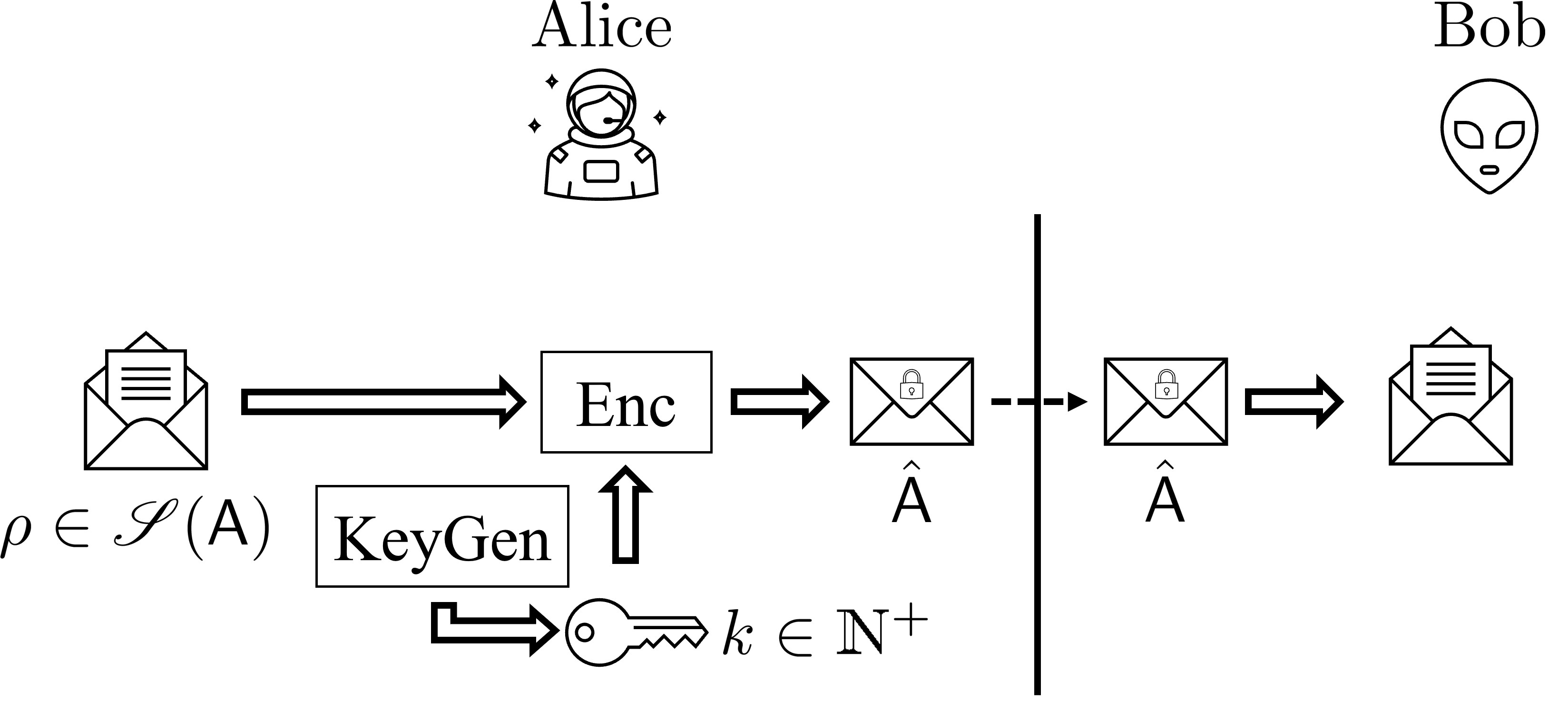}
    \caption{An honest Alice and malicious Bob in quantum homomorphic encryption}
    \label{fig:qhe_cheating_bob}
\end{figure}

\begin{definition}[Information-theoretic data privacy of quantum homomorphic encryption]\label{def:data_privacy}
    A quantum homomorphic encryption protocol $(\mathscr{F}, \KeyGen,\Enc, \Eval, \Dec)$ is $\epsilon_d$-data private if for any two Alice's inputs $\rho\in \mathscr{S}(\A)$ and $\rho'\in\mathscr{S}(\A)$, the trace distance between the averages of Alice messages $\mathbbm{E}(\Enc_k[\rho])$ and $\mathbbm{E}(\Enc_k[\rho'])$ over the distribution over keys $k\in\{0,1\}^L$ is bounded by $\epsilon_d$ 
    \begin{align}
        \Delta(\mathbbm{E}(\Enc_k[\rho]),\mathbbm{E}(\Enc_k[\rho']))\leq \epsilon_d, 
    \end{align}
    Specifically when $\epsilon_d=0$, we say that the protocol is perfectly data private. 
\end{definition}

Quantum homomorphic encryption needs to protect an honest Bob's circuit when a malicious Alice strives to learn Bob's circuit more than what an honest Alice's output indicates, as is shown in Figure~\ref{fig:qhe_cheating_alice}. However, even an honest Alice can learn some information about Bob's circuit. Therefore, we identify the information indicated by an honest Alice's output with an ideal protocol in Figure~\ref{fig:qhe_ideal}. In the ideal protocol, we imagine a trusted Charlie. Alice sends the input, and Bob sends the channel to Charlie. Charlie applies Bob's channel to Alice's input and sends the output to Alice. Alice ought to learn no more information in the actual protocol than what Alice can learn in the ideal protocol in a circuit private protocol. This circuit private case happens if a channel can turn Alice's output in the ideal protocol into Alice's output in the actual protocol. The trace distance between the two states can quantify the channel's performance. 

Now we describe the actual protocol and the ideal protocol in detail. In the actual protocol, Alice sends her message $\sigma\in \mathscr{S}({\hat{\A}})$ (whose purification is $\ket{\psi}\in {\hat{\A}\R_{\hat{\A}}}$) to Bob. Bob applies $\hat{\mathcal{F}}\in \hat{\mathscr{F}}$ on $\sigma$ according to Bob's channel $\mathcal{F}$, and sends Bob's message $\hat{\mathcal{F}}(\sigma)\in\mathscr{S}({\hat{\O}})$ to Alice. Alice finally possesses the joint state $\hat{\mathcal{F}}[\proj{\psi}]\in \mathscr{S}({\hat{\O}\R_{\hat{\A}}})$ of Bob's message and Alice's referencing system. 

\begin{figure}[!htpb]
    \centering
    \includegraphics[scale=0.35]{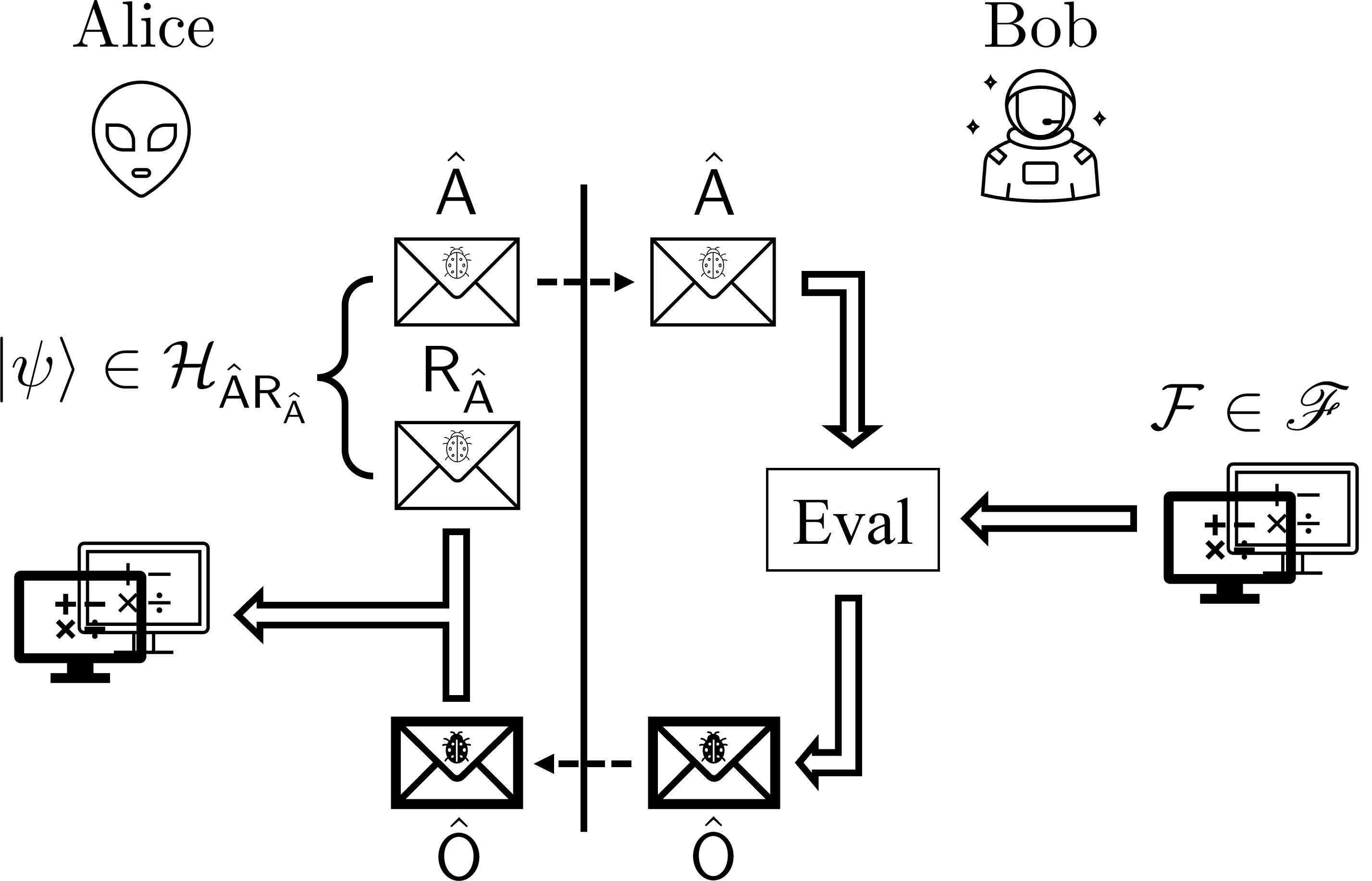}
    \caption{The actual protocol with a malicious Alice and an honest Bob}
    \label{fig:qhe_cheating_alice}
\end{figure}

In the ideal protocol, Alice sends Alice's input $\rho'\in\mathscr{S}(\A)$ (whose purification is $\ket{\psi'}\in{\A \R_{\A}}$) and Bob sends Bob's channel $\mathcal{F}\in \mathscr{F}$ to Charlie. Charlie applies $\mathcal{F}$ on $\rho'$ according to Bob's channel $\mathcal{F}$ and sends Charlie's message $\mathcal{F}(\rho')\in\mathscr{S}(\O)$ to Alice. Alice finally possesses the joint state $\mathcal{F}(\proj{\psi'})\in \mathscr{S}({\O \R_{\A}})$ of Charlie's message and Alice's referencing system. Alice further applies a post-processing channel $\mathcal{N}\in {\rm CPTP}({\O \R_{\A}},{\hat{\O} \R_{\hat{\A}}})$ on $\mathcal{F}(\proj{\psi'})$ and obtains Alice's output $\mathcal{N}(\mathcal{F}(\proj{\psi'}))\in \mathscr{S}({\hat{\O} \R_{\hat{\A}}})$. 

\begin{figure}[!htpb]
    \centering
    \includegraphics[scale=0.35]{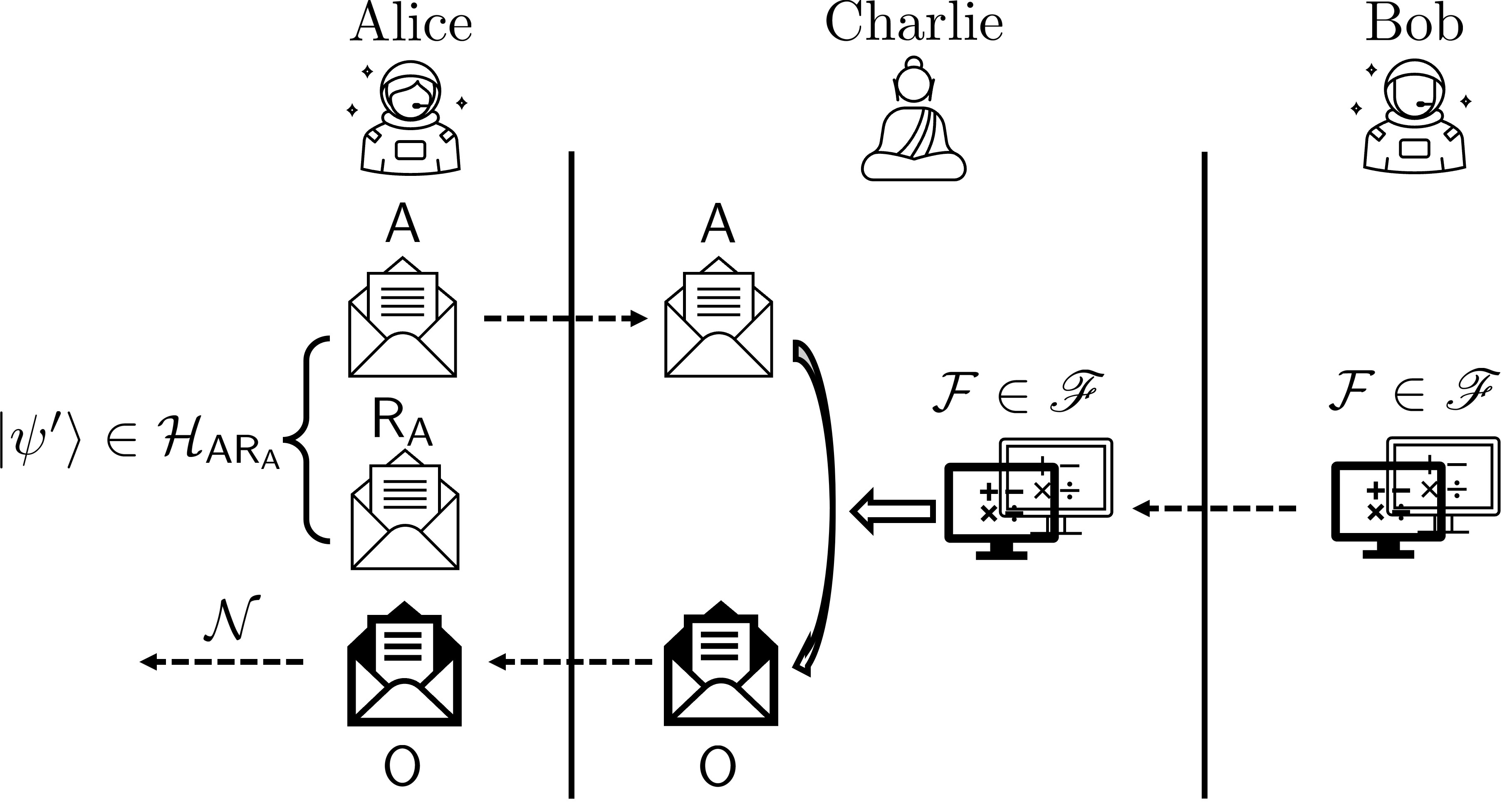}
    \caption{The ideal protocol with a trusted Charlie }
    \label{fig:qhe_ideal}
\end{figure}

\begin{definition}[Information-theoretic circuit privacy of quantum homomorphic encryption]\label{def:circuit_privacy}
    A quantum homomorphic encryption protocol $(\mathscr{F}, \KeyGen,\Enc, \Eval,\Dec)$ is $\epsilon_c$-circuit private if for any purification of Alice's message $\ket{\psi}\in {\hat{\A} \R_{\hat{\A}}}$ in the actual protocol, there exists a purification of Alice's input $\ket{\psi'}\in {\A \R_{\A}}$ and a post-processing channel $\mathcal{N}\in {\rm CPTP}({\O \R_{\A}},{\hat{\O} \R_{\hat{\A}}})$ in the ideal protocol such that for any $\mathcal{F}\in \mathscr{F}$
    \begin{align}\label{eqn:circuit_privacy}
        \Delta\left((\hat{\mathcal{F}}\otimes \mathcal{I}_{\R_{\hat{\A}}})[\proj{\psi}_{\hat{\A} \R_{\hat{\A}}}],\mathcal{N}\left((\mathcal{F}\otimes \mathcal{I}_{\R_{\A}})(\proj{\psi'}_{\A \R_{\A}})\right) \right)\leq \epsilon_c, 
    \end{align}
    Specifically when $\epsilon_c=0$, we call that the protocol is perfectly circuit private. 
\end{definition}
Our definition of circuit privacy is a good definition for the trivial protocol where Alice sends plaintexts to Bob. The trivial protocol where Alice sends plaintexts to Bob has no data privacy and perfect circuit privacy. We can write circuit privacy in terms of an optimization problem which is useful both theoretically and numerically. 
\begin{remark}\label{rmk:circuit_privacy}
    The circuit privacy can be written in the form of an optimization problem. 
    \begin{align}
        \epsilon_c = \max_{\ket{\psi}\in {\hat{\A}\R_{\hat{\A}}}} \min_{\substack{\ket{\psi'}\in{\A\R_{\A}} \\ \mathcal{N}\in {\rm CPTP}(\O\R_{\hat{\A}},\hat{\O}\R_{\hat{\A}})}} \max_{\mathcal{F} \in \mathscr F} \Delta\left((\hat{\mathcal{F}}\otimes \mathcal{I}_{\R_{\hat{\A}}})[\proj{\psi}_{\hat{\A}\R_{\hat{\A}}}],\mathcal{N}((\mathcal{F}\otimes \mathcal{I}_{\R_{\A}})(\proj{\psi'}_{\A\R_{\A}}))\right). 
    \end{align}
\end{remark}

\subsection{Quantum oblivious transfer}\label{subsec:definitions_of_qot}

Quantum oblivious transfer is a vital quantum cryptographic primitive. Following~\cite{Chailloux_2013,Chailloux_2016,Amiri_2021}, we consider two types of quantum oblivious transfer protocols,
namely standard oblivious transfer and semi-random oblivious transfer in our work which we define in Definition~\ref{def:standard_ot} and Definition~\ref{def:semi_random_ot}, respectively.
In Theorem~\ref{thm:equivalence_between_ot_and_srot}, we prove that standard oblivious transfer and semi-random oblivious transfer are equivalent by constructing reductions between them in both directions.

In standard oblivious transfer, Bob possesses two data bits, and Alice interacts with Bob to learn one specific data bit, as shown in Figure~\ref{fig:standard_ot}. Furthermore, Alice does not want Bob to know which data bit Alice desires. Bob does not want Alice to know both data bits. Next, we define standard oblivious transfer with $\delta$-completeness and soundness against a cheating Alice or Bob following Ref.~\cite[Definition~6]{Chailloux_2013}. 

\begin{figure}[!htpb]
	\centering
	\includegraphics[scale=0.35]{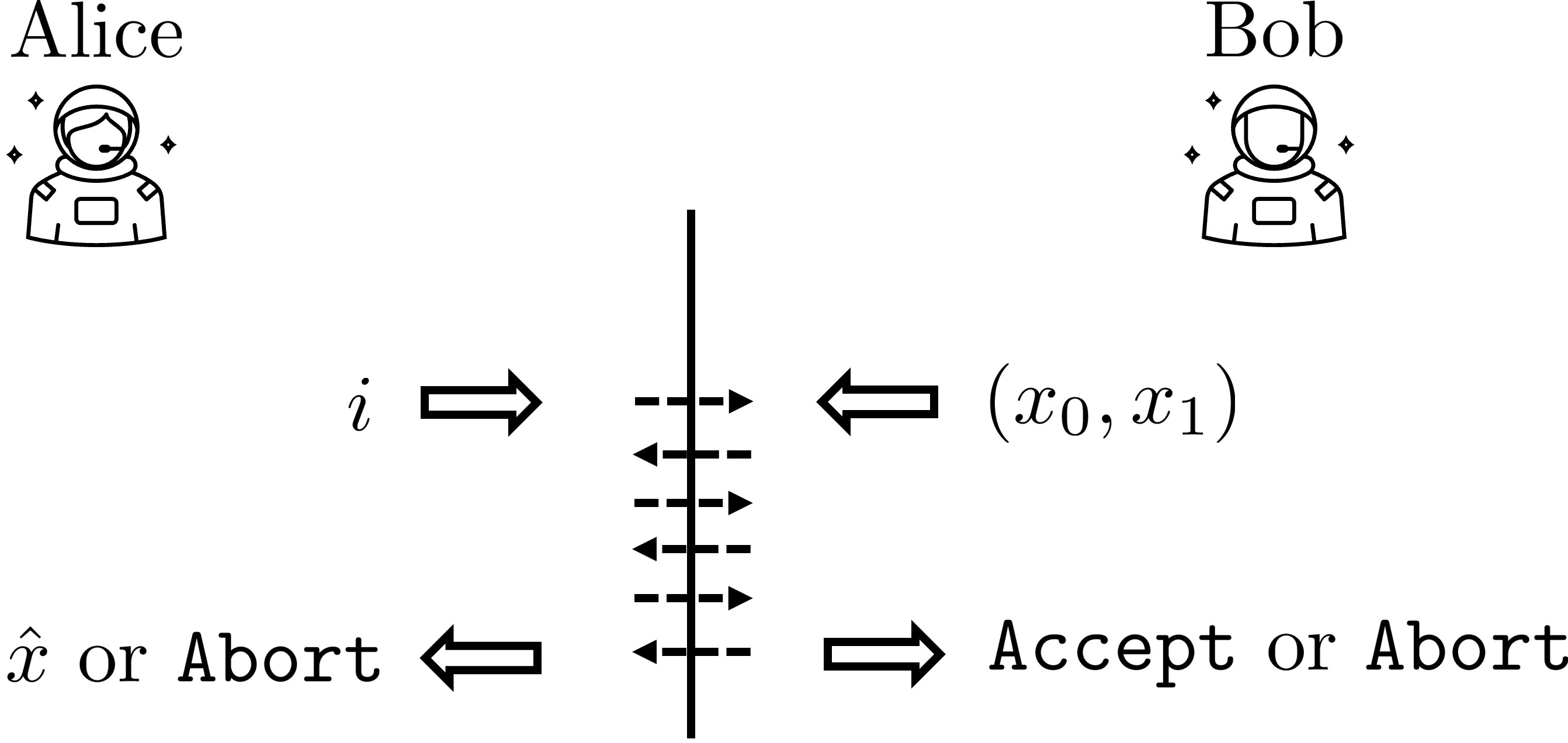}
	\caption{Standard oblivious transfer. Alice inputs $i$ and Bob inputs $(x_0,x_1)$. Then Alice and Bob perform rounds of communication. Alice outputs $\hat{x}$ or $\Abort$ and Bob outputs $\Accept$ or $\Abort$.  }
	\label{fig:standard_ot}
\end{figure}

\begin{definition}[Standard oblivious transfer]\label{def:standard_ot}
    A standard oblivious transfer protocol with $\delta$-completeness, $P_A^\star$-soundness against a cheating Alice and $P_B^\star$-soundness against a cheating Bob is a two-party protocol where Alice begins with $i\in \{0,1\}$ and Bob begins with $(x_0,x_1)\in \{0,1\}^2$, and Alice ends with ${\rm output}_A \in\{0,1\}\cup\{\Abort\}$, and Bob ends with ${\rm output}_B\in\{{\Accept},{\Abort}\}$. If Alice does not output $\Abort$, we say that Alice accepts and outputs $\hat{x}\in\{0,1\}$.
    \begin{align}
        \begin{array}{cccc}
            {\rm StandardOT}:& {\rm Alice} \times {\rm Bob} & \rightarrow & {\rm Alice} \times {\rm Bob}  \\
            & \{0,1\} \times \{0,1\}^2 & \rightarrow & \{0,1\}\cup\{\Abort\} \times \{{\Accept},{\Abort}\}  \\
            & i\times (x_0,x_1)  & \mapsto & {\rm output}_A\times {\rm output}_B  \\
        \end{array}. 
    \end{align}
    The following properties should be satisfied
    \begin{itemize}
        \item Completeness: If Alice and Bob are both honest, then both parties accept, and with a probability of at least $1-\delta$, $\hat{x}=x_{i}$. That is, when both are honest, 
        \begin{align}
            \Pr[({\rm Both\ accept})] =1,
        \end{align}
        and
        \begin{align}
            \Pr[(\hat{x}=x_{i})] \geq 1-\delta. 
        \end{align}
        Note that the above equations should hold for any choice $(x_0,x_1)$. 
        \item Soundness against a cheating Alice: Suppose that Bob's $(x_0,x_1)$ is uniformly random. With a probability of at most $P_{A}^{\star}$, a cheating Alice can guess $(\hat{x}_0,\hat{x}_1)$ for an honest Bob's $(x_0,x_1)$ correctly and Bob accepts. That is, when only Bob is honest, 
        \begin{align}
            \Pr[((\hat{x}_0,\hat{x}_1)=(x_0,x_1)) \land ({\rm Bob\ accepts}) ] \leq P_A^\star. 
        \end{align}
        \item Soundness against a cheating Bob: Suppose that Alice's $i$ is uniformly random. With a probability of at most $P_{B}^{\star}$, a cheating Bob can guess $\hat{i}$ for an honest Alice's $i$ correctly and Alice accepts. That is, when only Alice is honest, 
        \begin{align}
            \Pr[(\hat{i}=i)\land({\rm Alice\ accepts})] \leq P_B^\star. 
        \end{align}
    \end{itemize}
\end{definition}

In semi-random oblivious transfer, Bob possesses two data bits, and Alice interacts with Bob to learn one data bit uniformly at random and the index of the data bit, as is shown in Figure~\ref{fig:semi_random_ot}. Furthermore, Alice does not want Bob to know the index of the data bit. Bob does not want Alice to know both data bits. We formally define semi-random oblivious transfer with $\delta$-completeness by extending~\cite[Definition~2]{Amiri_2021}. 

\begin{figure}[!htpb]
	\centering
	\includegraphics[scale=0.35]{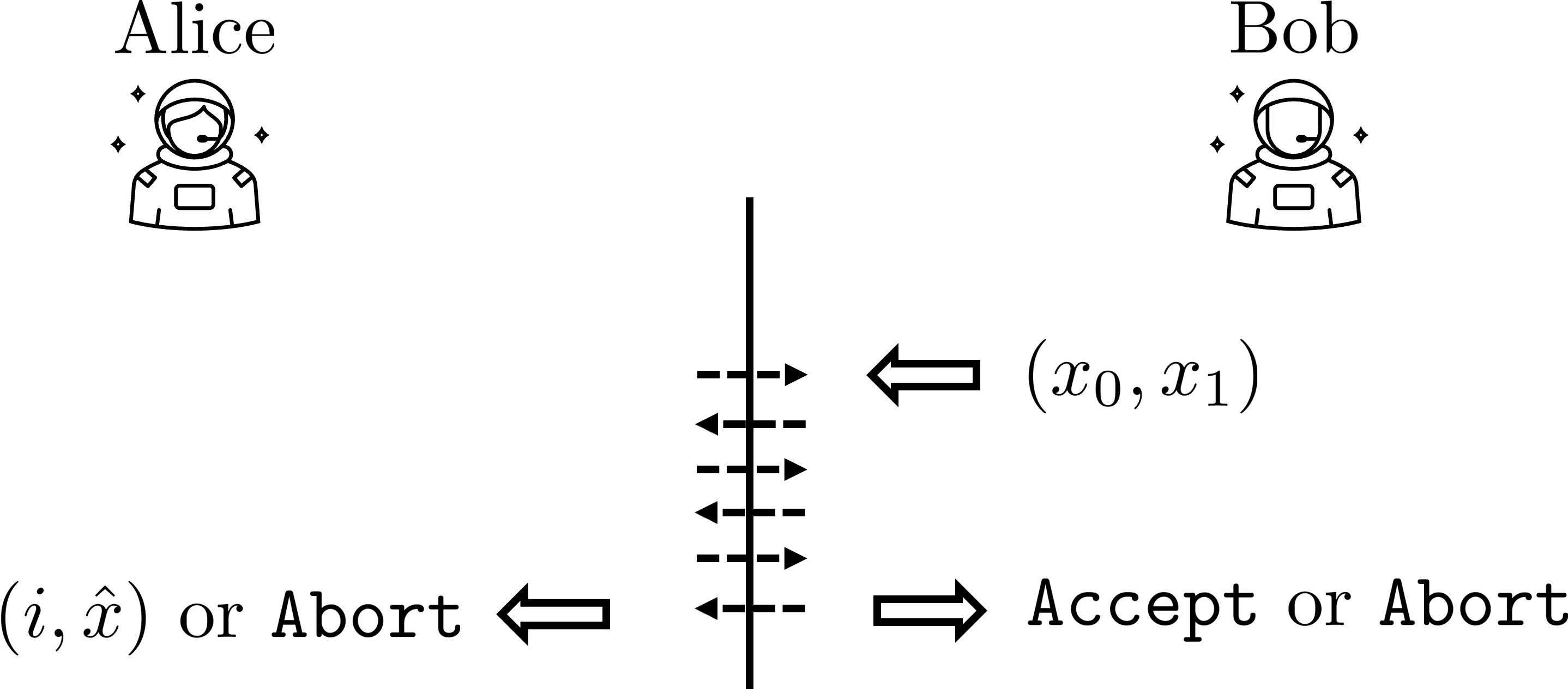}
	\caption{Semi-random oblivious transfer. Bob inputs $(x_0,x_1)$. Then Alice and Bob perform rounds of communication. Alice outputs $(i,\hat{x})$ or $\Abort$ and Bob outputs $\Accept$ or $\Abort$.  }
	\label{fig:semi_random_ot}
\end{figure}

\begin{definition}[Semi-random oblivious transfer]\label{def:semi_random_ot}
    A semi-random oblivious transfer protocol is a two-party protocol where Bob begins with $(x_0,x_1)\in\{0,1\}^2$, Alice ends with ${\rm output}_A\in \{0,1\}^2\cup\{\Abort\}$, and Bob ends with $ {\rm output}_B\in \{{\Accept},{\Abort}\}$. If Alice does not output $\Abort$, we say that Alice accepts and outputs $(i,\hat{x})\in\{0,1\}^2$.
    \begin{align}
        \begin{array}{cccc}
            {\rm SemirandomOT}:& {\rm Bob} & \rightarrow & {\rm Alice}\times {\rm Bob}  \\
            & \{0,1\}^2 & \rightarrow & \{0,1\}^2\cup\{\Abort\} \times \{{\Accept},{\Abort}\} \\
            & (x_0,x_1) & \mapsto & {\rm output}_A \times {\rm output}_B
        \end{array}, 
    \end{align}
    The protocol has $\delta$-completeness, $P_A^\star$-soundness against a cheating Alice and $P_B^\star$-soundness against a cheating Bob if the following holds.
    \begin{itemize}
        \item Completeness: If Alice and Bob are both honest, then both parties accept, $i$ is uniformly random and with a probability of at least $1-\delta$, $\hat{x}=x_{i}$. That is, when both parties are honest,  
        \begin{align}
            \Pr[({\rm Both\ accept})] =1,
        \end{align}
        \begin{align}
            \Pr[(i=0)]=\Pr[(i=1)]=\frac{1}{2},  
        \end{align}
        and 
        \begin{align}
            \Pr[(\hat{x}=x_{i})] \geq 1-\delta. 
        \end{align}
        Note again that above equations should hold for any $(x_0,x_1)$. 
        \item Soundness against a cheating Alice: Suppose that Bob's $(x_0,x_1)$ is uniformly random. With a probability of at most $P_{A}^{\star}$, a cheating Alice can guess $(\hat{x}_0,\hat{x}_1)$ for an honest Bob's $(x_0,x_1)$ correctly and Bob accepts. That is, when only Bob is honest, 
        \begin{align}
            \Pr[((\hat{x}_0,\hat{x}_1)=(x_0,x_1)) \land ({\rm Bob\ accepts})] \leq P_A^\star. 
        \end{align}
        \item Soundness against a cheating Bob: With a probability of at most $P_{B}^{\star}$, a cheating Bob can guess $\hat{i}$ for an honest Alice's $i$ correctly and Alice accepts. That is, when only Alice is honest, 
        \begin{align}
            \Pr[(\hat{i}=i)\land({\rm Alice\ accepts})] \leq P_B^\star. 
        \end{align}
    \end{itemize}
\end{definition}
Following the same technique in~\cite[Proposition~9,10]{Chailloux_2013} and~\cite[Proposition~1]{Amiri_2021}, we prove that standard oblivious transfer is equivalent to semi-random oblivious transfer in Theorem~\ref{thm:equivalence_between_ot_and_srot}.
\begin{theorem}\label{thm:equivalence_between_ot_and_srot}
    A standard oblivious transfer protocol with $\delta$-completeness, $P_A^\star$-soundness against a cheating Alice and $P_B^\star$-soundness against a cheating Bob is equivalent to a semi-random oblivious transfer protocol with $\delta$-completeness, $P_A^\star$-soundness against a cheating Alice and $P_B^\star$-soundness against a cheating Bob.
\end{theorem}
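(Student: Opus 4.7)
The plan is to establish the equivalence by constructing explicit reductions in both directions and then checking that the triple of parameters $(\delta, P_A^\star, P_B^\star)$ is preserved in each reduction.

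The easier direction is from standard OT to semi-random OT. Given a standard OT protocol, I would have Alice privately sample a uniform index $i \in \{0,1\}$, then have the two parties execute the standard OT with input $i$ for Alice and $(x_0, x_1)$ for Bob, and finally have Alice output $(i, \hat{x})$ where $\hat{x}$ is her standard OT output. Completeness is immediate from completeness of standard OT, and the index is uniform by construction. Both soundness conditions transfer essentially verbatim: a cheating Alice (respectively Bob) for the induced semi-random OT is already a cheating Alice (respectively Bob) for the underlying standard OT with the same winning probability, so $P_A^\star$ and $P_B^\star$ carry over without change.

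The more delicate direction is from semi-random OT to standard OT, for which I would follow the one-time-pad trick from~\cite{Chailloux_2013,Amiri_2021}. Given Alice's desired index $i$ and Bob's pair $(x_0, x_1)$, Bob first samples uniform bits $(r_0, r_1)$ and the parties run semi-random OT on these inputs, so that Alice obtains $(j, r_j)$. Alice then sends the classical bit $c = i \oplus j$, and Bob replies with $a_0 = x_0 \oplus r_c$ and $a_1 = x_1 \oplus r_{\bar{c}}$. A short case analysis on the value of $i$ shows $a_i = x_i \oplus r_j$ in both cases, so Alice recovers $\hat{x} = a_i \oplus r_j$, and completeness is inherited from the underlying semi-random OT.

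The main work will be checking that the two soundness parameters are preserved under this second reduction. For a cheating Bob the only new message beyond the semi-random OT transcript is $c$, and since $i$ is uniform and independent of $j$, guessing $i$ from $c$ is equivalent to guessing $j$, so the bound $P_B^\star$ transfers directly. For a cheating Alice, reconstructing both $x_0$ and $x_1$ from $(a_0, a_1)$ requires recovering both pad bits $(r_0, r_1)$, so her success probability is controlled by the soundness of the underlying semi-random OT against a cheating Alice on uniform Bob-inputs. The main obstacle I anticipate is arguing this last reduction carefully against an arbitrary quantum strategy, ensuring that deviations by Alice in the outer protocol can always be absorbed into the underlying primitive without inflating the guessing probabilities, so that $P_A^\star$ genuinely upper-bounds the reduced game.
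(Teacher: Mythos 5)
Your proposal matches the paper's proof: the same round-robin construction for the easy direction, and the same one-time-pad trick (Bob pads with uniform bits via semi-random OT, Alice shifts her index with $c=i\oplus j$, Bob re-encrypts $x_0,x_1$ accordingly) for the harder direction, differing only in the indexing convention for the ciphertexts and the pad bits. The soundness arguments you sketch are exactly the ones the paper uses, including the observation that $c$ leaks nothing about $i$ because $j$ is uniform and unknown to Bob, and that learning both $x$'s is equivalent to learning both pads; the subtlety you flag about a general (quantum) cheating Alice is also present in the paper but handled at about the same level of rigor you anticipate.
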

 The proof works by constructing reductions between semi-random and standard oblivious transfer. 
 We can easily reduce semi-random oblivious transfer to standard oblivious transfer. A semi-random oblivious transfer protocol can be viewed as Alice chooses the index of the data bit Alice wants to learn and performs a standard oblivious transfer protocol with Bob. It is trickier to reduce standard oblivious transfer to semi-random oblivious transfer. A semi-random oblivious transfer protocol can work as a subprotocol to generate keys in a standard oblivious transfer protocol. Initially, Bob holds two keys. After a semi-random oblivious transfer protocol, Alice learns one key uniformly at random and the key index, while Bob does not know the index of the key. Alice can then encrypt Alice's request with the key index, and Bob can encrypt Bob's two data bits with two keys. In this way, Alice can only decrypt one data bit, and Bob cannot decrypt Alice's request. We provide the detailed proof in Appendix~\ref{sec:equivalence_between_ot_and_srot}.

\section{Bounds for quantum oblivious transfer}\label{sec:bounds_for_qot}

In this section, we extend the bound for quantum oblivious transfer in~\cite[Eq.~(27)]{Amiri_2021} to $\delta$-correctness with the same technique. We first describe the general scheme of semi-random oblivious transfer and then bound Alice's and Bob's cheating probabilities by proposing their cheating strategies. 

The general scheme of semi-random oblivious transfer with $N$ rounds of communication~\cite[Section~IIIA]{Amiri_2021} is illustrated in Figure~\ref{fig:qot_scheme}. Alice and Bob keep their memories private and exchange their messages publicly.
Each time either Alice or Bob receives the message, they apply a unitary that acts jointly on the memory and the message. In the last step, they measure their state to obtain their output. Any semi-random oblivious transfer can be described by such a general scheme.

\begin{figure}[!htpb]
	\centering
	\includegraphics[scale=0.35]{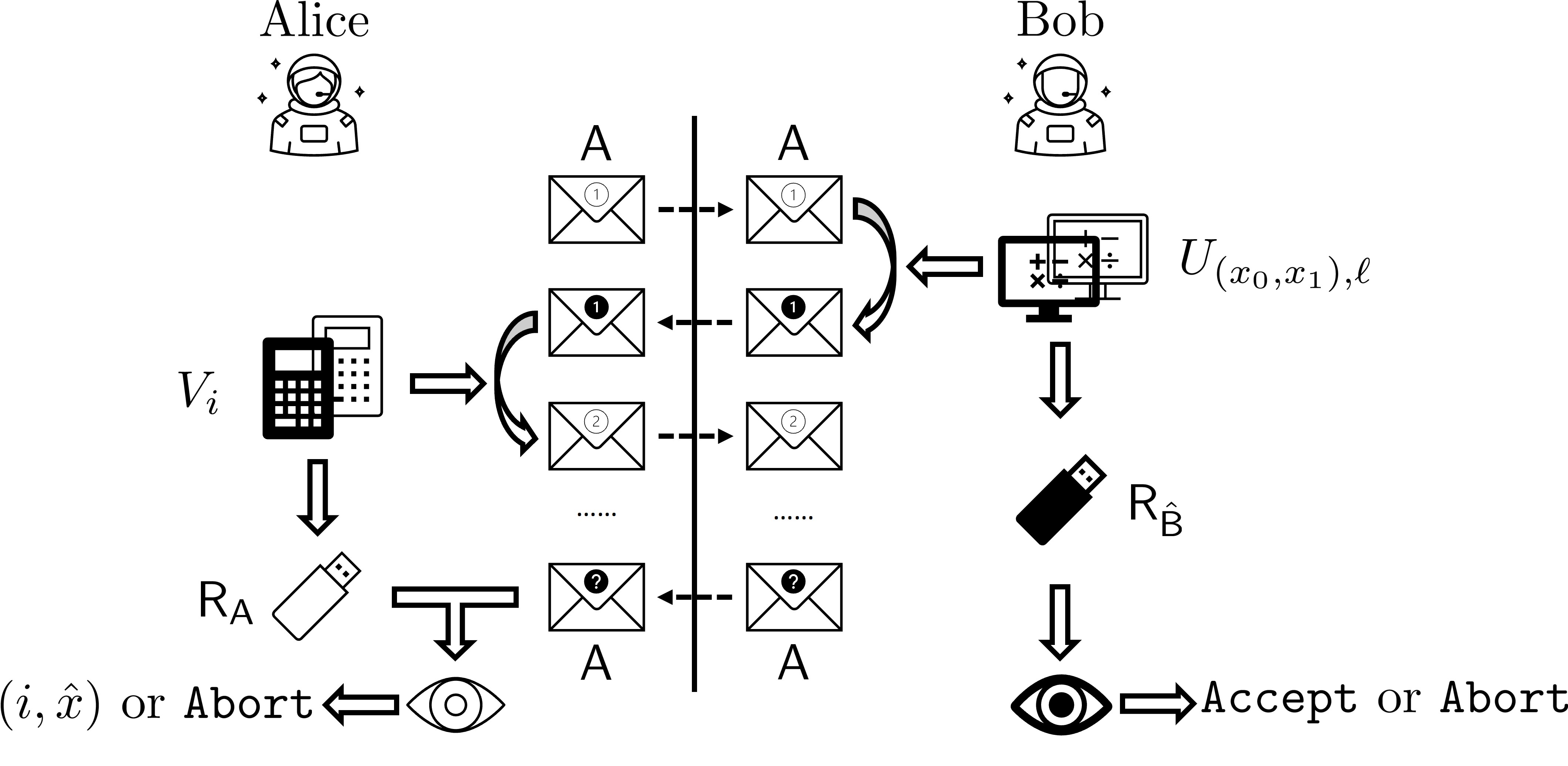}
	\caption{The scheme of quantum oblivious transfer. }
	\label{fig:qot_scheme}
\end{figure}

Protocol~\ref{prot:srot_general_scheme} formally describes the general scheme depicted in Figure~\ref{fig:qot_scheme}. 
Alice begins with the state $\ket{\psi}\in {\R_{\A} \A}$, where $\R_{\A}$ is Alice's referencing system and $\A$ is Alice's message. Bob begins with the state $\ket{0} \in {\R_{\B}}$ and the data bits $(x_0,x_1)\in\{0,1\}^2$, where $\R_{\B}$ is Bob's referencing system. The joint state of Alice and Bob is $\ket{\psi}\ket{0}\ket{x_0,x_1}\in{\A \R_{\A}\B \R_{\B}}$. Bob prepares the program $\ket{x_0,x_1}\in{\B}$. Then Alice and Bob repeat $N$ rounds of communication: Alice sends $\A$ to Bob; Bob applies a unitary $U_{(x_0,x_1),{\ell}}\in {\rm U}(\A\R_{\B})$ that acts trivially on $\R_{\A}$ and depends on $(x_0,x_1)$; Bob sends $\A$ back to Alice; Alice applies a unitary $ V_{{\ell}}\in {\rm U}(\A \R_{\A})$ that acts trivially on $\R_{\B}$. After $N$ iterations, Alice and Bob's joint state is $\ket{\psi^{(x_0,x_1)}}= V_{N}U_{(x_0,x_1),N}...V_{1} U_{(x_0,x_1),1}\ket{\psi}\in \mathscr{S}({\A \R_{\A}\R_{\B}})$. Alice's final state is $\sigma_{(x_0,x_1)}=\Tr_{\R_{\B}}(\proj{\psi^{(x_0,x_1)}})\in \mathscr{S}({\A \R_{\A}})$. We denote the maximum fidelity between distinct pairs of Alice's final states by $f$, i.e. 
\begin{align}\label{eqn:maxinum_fidelity}
    f =\max\left\{F(\sigma_{(x_0,x_1)}, \sigma_{(x_0',x_1')}),(x_0,x_1)\neq (x_0',x_1')\right\}.
\end{align}
Alice measures a projective measurement $\{\Pi_{\rm Alice}^{\Accept},\Pi_{\rm Alice}^{\Abort}\}$ on $\A\R_{\A}$ to determine if Alice accepts, where $\Pi_{\rm Alice}^{\Accept}$ is the projector onto Alice's accepting subspace while $\Pi_{\rm Alice}^{\Abort}$ is the projector onto Alice aborting subspace. Similarly, Bob measures $\{\Pi_{\rm Bob}^{\Accept},\Pi_{\rm Bob}^{\Abort}\}$ on $\R_{\B}$ to determine if Bob accepts, where $\Pi_{\rm Bob}^{\Accept}$ is the projector onto Bob's accepting subspace while $\Pi_{\rm Bob}^{\Abort}$ is the projector onto Bob aborting subspace. If both accept, Alice performs a positive operator valued measure $\{N_{(i,\hat{x})}=E_{(i,\hat{x})}^\dagger E_{(i,\hat{x})}\}_{i,\hat{x}\in\{0,1\}}$ on $\R_{\A}$, where $E_{(i,\hat{x})}$ relates the post-measurement state to the pre-measurement state. Alice's output is the measurement outcome $(i,\hat{x})$.

\begin{algorithm}
    \caption{Scheme of semi-random oblivious transfer with $N$ rounds of communication}
    \label{prot:srot_general_scheme}
    \begin{algorithmic}[1]
        \Require $(x_0,x_1)\in\{0,1\}^2$. \Comment{Bob's input}
        \Ensure ${\rm output}_{A}\in\{0,1\}^2\cup\{\Abort\}$, ${\rm output}_B\in\{\Accept,\Abort\}$. \Comment{Alice's and Bob's outputs}
        \State Alice: Prepare $\ket{\psi}\in {\R_{\A} \A}$ 
        \State Bob: Prepare $\ket{0}\in {\R_{\B}}$.  
        \For{$\ell = 1$ to $\ell = N$} \Comment{Alice and Bob performs $N$ rounds of communication. }
            \State Alice: Send $\A$ to Bob. 
            \State Bob: Perform $U_{(x_0,x_1),{\ell}}\in {\rm U}(\A\R_{\B})$. 
            \State Bob: Send $\A$ back to Alice. 
            \State Alice: Perform $ V_{{\ell}}\in {\rm U}(\A \R_{\A})$. 
        \EndFor
        \State Alice: Measure $\{\Pi_{\rm Alice}^{\Accept},\Pi_{\rm Alice}^{\Abort}\}$ on $\A \R_{\A}$ with outcome $A$. \Comment{Alice measures to determine whether to accept. }
        \State Bob: Measure $\{\Pi_{\rm Bob}^{\Accept},\Pi_{\rm Bob}^{\Abort}\}$ on $ \R_{\B}$ with outcome $B$. \Comment{Bob measures to determine whether to accept. }
        \State Bob: ${\rm output}_B\leftarrow B$. 
        \If{$A=\Abort$}
            \State Alice: ${\rm output}_A\leftarrow {\Abort}$. 
        \Else
            \State Alice: Measure $\{N_{(i,\hat{x})}\}_{i,\hat{x}\in\{0,1\}}$ on $\A\R_{\A}$ with outcome $(i,\hat{x})$. \Comment{Alice measures to determine $(i,\hat{x})$. }
            \State Alice: ${\rm output}_A\leftarrow (i,\hat{x})$. 
        \EndIf
    \end{algorithmic}
\end{algorithm}

Applying a similar method as in~\cite[Section IIIB,C,D]{Amiri_2021}, we can obtain a bound for the semi-random oblivious transfer protocol. 
\begin{theorem}\label{thm:srot_bound}
    Any semi-random oblivious transfer protocol and any standard oblivious transfer protocol with $\delta$-completeness, $P_A^\star$-soundness against a cheating Alice and $P_B^\star$-soundness against a cheating Bob satisfies 
    \begin{align}\label{eqn:srot_bound}
        P_A^\star+2P_B^\star + 4\sqrt{\delta}\geq 2.
    \end{align}
\end{theorem}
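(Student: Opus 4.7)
The plan is to prove the bound~\eqref{eqn:srot_bound} first for semi-random oblivious transfer, in the canonical form of Protocol~\ref{prot:srot_general_scheme}, and then transfer it to standard oblivious transfer via the equivalence in Theorem~\ref{thm:equivalence_between_ot_and_srot}. Let $\sigma_{(x_0,x_1)}\in \mathscr{S}(\A\R_{\A})$ be Alice's honest final reduced state on input $(x_0,x_1)$ and let $f$ be the maximum fidelity over distinct pairs as defined in~\eqref{eqn:maxinum_fidelity}. The strategy is to use $f$ as a common bridge: small $f$ forces a cheating Bob to succeed with large probability, while $f$ close to $1$ forces a cheating Alice to succeed with large probability, and $\delta$-completeness ties these state-level statements back to measurement outcomes.

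For the cheating-Bob bound I would construct a strategy in which Bob holds a coherent superposition over two of his inputs, purifying his local operations so that he keeps a purification of whichever $\sigma_{(x_0,x_1)}$ Alice ends up with. After Alice's final measurement outputs $(i,\hat x)$, Bob uses the fact that his conditional reduced states on the two values of $i$ admit fidelity bounded by $f$ (via the Schmidt structure of the joint purification), and performs the Helstrom measurement to estimate $\hat i$. The resulting guessing probability, combined with the requirement that Alice accept, yields a lower bound roughly of the form $P_B^\star \geq \tfrac12(1+\sqrt{1-f^2})$ modulo corrections of order $\sqrt\delta$.

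For the cheating-Alice bound I would invoke Uhlmann's theorem: starting from the state $\sigma_{(x_0,x_1)}$ that she actually possesses after an honest run, Alice applies a local unitary on $\A \R_{\A}$ mapping her purification to one close to the purification for a pair $(x_0',x_1')$ that maximises the fidelity in~\eqref{eqn:maxinum_fidelity}. Because $(x_0,x_1)$ and $(x_0',x_1')$ differ in the bit of the complementary index, an honest-style decoding on the rotated state recovers that bit with probability at least $1-\delta$ up to a trace-distance defect of $\sqrt{1-f^2}$ guaranteed by Uhlmann. Averaging over a uniform prior on $(x_0,x_1)$ and over $i$, this yields $P_A^\star \geq 1-\delta - \sqrt{1-f^2}$ after a mild rearrangement.

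Combining the two inequalities eliminates $f$: rewriting the cheating-Bob bound as $\sqrt{1-f^2}\leq 2P_B^\star - 1 + O(\sqrt\delta)$ and substituting into the cheating-Alice bound gives $P_A^\star + 2P_B^\star \geq 2 - 4\sqrt\delta$ once the error terms are tidied using Fuchs--van de Graaf (in the form $1-F \leq \Delta$ and $1-F\leq\tfrac12(1-F^2)^{1/2}\cdot\text{const}$). The main obstacle is the careful propagation of $\delta$-completeness through the Uhlmann rotation: the honest guarantee is a statement about measurement statistics, not about quantum states, so it must be converted into a trace-distance bound on the post-rotation states with a square-root loss, and it is exactly this loss that produces the factor $4\sqrt\delta$ rather than a correction linear in $\delta$.
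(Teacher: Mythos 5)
Your high-level architecture recognizably resembles the paper's: both bound $P_A^\star$ and $P_B^\star$ in terms of the maximum fidelity $f$ defined in Eq.~\eqref{eqn:maxinum_fidelity}, use a coherent superposition attack for Bob together with Uhlmann for the state purification, and then eliminate $f$ to combine. However, your two intermediate bounds have the monotonicity in $f$ reversed, and one of them is demonstrably false.

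Concretely, the paper derives $P_B^\star \geq \tfrac12(1+f-2\sqrt{2\delta})$ and $P_A^\star \geq 1-f-\sqrt{\delta(1-\delta)}$: a \emph{large} $f$ (two of Alice's reduced states nearly indistinguishable) lets Bob coherently superpose the corresponding inputs and learn $i$ from the collapse of his off-diagonal coherence, while a \emph{small} $f$ lets Alice distinguish all four states $\sigma_{(x_0,x_1)}$ with a pretty good measurement. You claim the opposite: $P_B^\star \geq \tfrac12(1+\sqrt{1-f^2})$, which is \emph{decreasing} in $f$, and $P_A^\star \geq 1-\delta-\sqrt{1-f^2}$, which is \emph{increasing} in $f$. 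The Bob bound fails on a simple example: consider a protocol where Alice learns both bits perfectly, so $\delta=0$ and the four $\sigma_{(x_0,x_1)}$ are mutually orthogonal, giving $f=0$. Your bound would force $P_B^\star \geq 1$, but Bob holds no coherence whatever (the purifications $\ket{\phi^{(x_0,x_1)}}$ of orthogonal states are orthogonal), so $P_B^\star=\tfrac12$. Because you flip the $f$-dependence in \emph{both} bounds, the final elimination step looks formally consistent, but it is not supported by either intermediate inequality.

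The cheating-Alice argument is also a different — and unworkable — route from the paper's. You propose that Alice, holding $\sigma_{(x_0,x_1)}$, applies a local Uhlmann rotation toward the purification of a fidelity-maximising partner $(x_0',x_1')$ and then re-runs honest decoding to extract the complementary bit. But Alice does not know $(x_0,x_1)$, so she cannot select the right Uhlmann unitary, and in any case a fixed local unitary on her system cannot increase her information about Bob's input. The paper instead has Alice perform the pretty good measurement $\{N^{\rm PGM}_{(\hat x_0,\hat x_1)}\}$ on all four hypotheses at once and uses the Audenaert bound to turn pairwise fidelities into a guessing-probability lower bound; the $\delta$-completeness constraint enters there by bounding the fidelity between complementary pairs via Holevo--Helstrom and Fuchs--van de Graaf (Eq.~\eqref{eqn:fidelity_complement}), which is where the $\sqrt{\delta}$ appears. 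Your account of where the $\sqrt{\delta}$ comes from (a measurement-statistics-to-state bound) points in the right direction, but the attack it hangs on does not work, and the $\delta$-bookkeeping you sketch would actually produce a linear-in-$\delta$ correction rather than the stated $4\sqrt{\delta}$.
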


In order to show a violation of soundness, it suffices to exhibit a specific cheating strategy. Therefore, we will only deal with certain strategies of cheating Alice and Bob in the proof. Alice's cheating strategy involves performing a pretty-good measurement in the last step to try to learn Bob's input. Bob can input a well-chosen superposition at the beginning and measure at the end to try to learn Alice's input.

\begin{proof}
    Here we prove a bound for semi-random oblivious transfer, and due to the equivalence between semi-random oblivious transfer and standard oblivious transfer in Theorem~\ref{thm:equivalence_between_ot_and_srot}, the same bound applies to standard oblivious transfer. Consider a semi-random oblivious transfer protocol with $\delta$-completeness described by Protocol~\ref{prot:srot_general_scheme}. The positive operator valued measure $\{N_{(i,\hat{x})}\}_{i,\hat{x}\in\{0,1\}}$ must satisfy 
    \begin{align}\label{eqn:required_measurement}
        \Pr[({\rm Alice\ output\ }(i,\hat{x}) )|{\rm Bob\ input\ }(x_0,x_1)]=\Tr(N_{(i,\hat{x})}\sigma_{(x_0,x_1)})=\left\{
        \begin{array}{cc}
            \frac{1- \theta_{i,(x_0,x_1)}}{2} & \hat{x}=x_{i}\\
            \frac{\theta_{i,(x_0,x_1)}}{2} & \hat{x}\neq x_{i}
        \end{array}
        \right. 
        .
    \end{align}
    where $\theta_{i,(x_0,x_1)}$ is the error probability that depends on $i$ and $(x_0,x_1)$. In order to satisfy the $\delta$-completeness of standard oblivious transfer, $\theta_{i,(x_0,x_1)}$ must satisfy for any $(x_0,x_1)$
    \begin{align}
        0\leq \sum_{i\in\{0,1\}}\theta_{i,(x_0,x_1)} \leq 2\delta. 
    \end{align}
    
    We first discuss the soundness against Alice. Suppose that Alice is malicious and Bob is honest. Alice can follow Protocol~\ref{prot:srot_general_scheme} through Step 14 while performing the pretty good measurement $\{N_{(\hat{x}_0,\hat{x}_1)}^{\rm PGM}=S^{-\frac{1}{2}}\sigma_{(\hat{x}_0,\hat{x}_1)}S^{-\frac{1}{2}}\}_{\hat{x}_0,\hat{x}_1\in\{0,1\}}$ where $S=\sum_{x_0,x_1\in \{0,1\}}\sigma_{(x_0,x_1)}$ instead of the required measurement $\{N_{{(i,\hat{x})}}\}_{i,\hat{x}\in\{0,1\}}$ in Step 15 and guess $(\hat{x}_0,\hat{x}_1)$ for $(x_0,x_1)$. A cheating Alice will not be caught by an honest Bob since Alice follows the protocol until Bob accepts. If $(x_0,x_1)$ is uniformly random, then the probability that Alice can guess correctly is bounded in~\cite[Theorem~3.1]{Audenaert_2014},
    and satisfies the inequality
    \begin{align}\label{eqn:correct_probability_PGM}
        \Pr[{(\hat{x}_0,\hat{x}_1)}=(x_0,x_1)] \geq 1- \frac{1}{8}\sum_{\substack{x_0,x_1,x_0',x_1'\in \{0,1\}\\ (x_0,x_1)\neq (x_0',x_1')}} F(\sigma_{(x_0,x_1)},\sigma_{(x_0',x_1')}). 
    \end{align}
    As shown in Eq.~\eqref{eqn:required_measurement}, the required measurement $\{N_{(i,\hat{x})}\}_{i,\hat{x}\in\{0,1\}}$ distinguishes $\sigma_{(x_0,x_1)}$ and $\sigma_{(\overline{x}_0,\overline{x}_1)}$ with a probability of at least $1-\delta$. Therefore, the Holevo-Helstrom theorem~\cite{Helstrom_1967,Holevo_1973} (see~\cite[Theorem 3.4]{Watrous_2018} for a modern version) implies that
    \begin{align}
        \frac{1}{2}(1+\Delta(\sigma_{(x_0,x_1)},\sigma_{(\overline{x}_0,\overline{x}_1)}))\geq 1-\delta. 
    \end{align}
    Applying the Fuchs-van de Graaf inequality in~\cite[Theorem 1]{Fuches_1999} 
    \begin{align}
        \Delta(\sigma_{(x_0,x_1)},\sigma_{(\overline{x}_0,\overline{x}_1)}) \leq \sqrt{1-F(\sigma_{(x_0,x_1)},\sigma_{(\overline{x}_0,\overline{x}_1)})^2}, 
    \end{align}
    we obtain
    \begin{align}\label{eqn:fidelity_complement}
        F(\sigma_{(x_0,x_1)},\sigma_{(\overline{x}_0,\overline{x}_1)})\leq 2\sqrt{\delta(1-\delta)}. 
    \end{align}
    Recall the maximum fidelity $f$ defined in \eqref{eqn:maxinum_fidelity}. By reordering terms in Eq.~\eqref{eqn:correct_probability_PGM}, we obtain
    \begin{align}
        \Pr[{(\hat{x}_0,\hat{x}_1)}=(x_0,x_1)]\geq 1 & - \frac{1}{8}\sum_{\substack{x_0,x_1,x_0',x_1'\in\{0,1\}\\(\overline{x}_0,\overline{x}_1)=(x_0',x_1')}} F(\sigma_{(x_0,x_1)},\sigma_{(x_0',x_1')}) \nonumber\\ 
        & - \frac{1}{8}\sum_{\substack{x_0,x_1,x_0',x_1'\in\{0,1\}\\
        (x_0,x_1)\neq (x_0',x_1')\\(\overline{x}_0,\overline{x}_1)\neq(x_0',x_1') }} F(\sigma_{(x_0,x_1)},\sigma_{(x_0',x_1')}). 
    \end{align}
    Applying Eq.~\eqref{eqn:fidelity_complement}, we obtain 
    \begin{align}
        \Pr[(\hat{x}_0,\hat{x}_1)=(x_0,x_1)]\geq 1 -f-\sqrt{\delta(1-\delta)}. 
    \end{align}
    Therefore, a cheating Alice can guess an honest Bob's $(x_0,x_1)$ and both parties accept with a probability of at least  $1-f-\sqrt{\delta(1-\delta)}$, and thus 
    \begin{align}\label{eqn:soundness_against_alice}
        P_A^\star \geq 1- f - \sqrt{\delta(1-\delta)}. 
    \end{align}
    
    Second we discuss the soundness against Bob. Suppose that Alice is honest and Bob is malicious. Let $(x_0,x_1)$ and $(x_0',x_1')$ be the two data bit strings corresponding to the maximum fidelity $f$ defined in \eqref{eqn:maxinum_fidelity}, i.e. $f=F(\sigma_{(x_0,x_1)}, \sigma_{(x_0',x_1')})$. There is at least one different bit between $(x_0,x_1)$ and $(x_0',x_1')$. Let us assume that $x_1=0$ and $x_1'=1$, and other cases follow analogously. According to Uhlmann's theorem~\cite{Uhlmann_1976} (or see a textbook e.g.~\cite[Theorem 3.22]{Watrous_2018}), we can find a purification $\ket{\phi^{(x_0,0)}}\in {\A \R_{\A}\R_{\B}}$ of $\sigma_{(x_0,0)}$ and a purification $\ket{\phi^{(x_0',1)}}\in {\A \R_{\A}\R_{\B}}$ of $\sigma_{(x_0',1)}$ such that $f = |\braket{\phi^{(x_0',1)}}{\phi^{(x_0,0)}}|$. Due to the unitary equivalence of purifications~\cite[Theorem~2.12]{Watrous_2018}, Bob can perform local unitaries $U_{(x_0,0)}\in{\rm U}(\R_{\B})$ and $U_{(x_0',1)}\in{\rm U}(\R_{\B})$ such that $\ket{\phi^{(x_0,0)}}=U_{(x_0,0)}\ket{\psi^{(x_0,0)}}$ and $\ket{\phi^{(x_0',1)}}=U_{(x_0',1)}\ket{\psi^{(x_0',1)}}$. 
    
    Based on the above facts, Bob can cheat by considering the two inputs $(x_0,0)$ and $(x_0',1)$, preparing the program in superposition and otherwise honestly following the protocol, and making a measurement on his state in the end to try to learn Alice's input. More precisely, instead of following Protocol~\ref{prot:srot_general_scheme}, Bob will prepare the superposition $\ket{(x_0,0)}+\ket{(x_0',1)}\in {\B}$ and apply the controlled unitary  $\sum_{x_0'',x_1''\in\{0,1\}}U_{(x_0'',x_1''),\ell}\otimes \proj{x_0'',x_1''}_{\B} \in {\rm U}(\A \B\R_{\B} )$ in Step 5 of each round of communication. Before Step 9, the joint state of both parties is the superposition $\ket{\psi^{(x_0,0)}}\ket{x_0,0}+\ket{\psi^{(x_0',1)}}\ket{x_0',1}$. A cheating Bob will not be caught by an honest Alice because $\Pi^{\Accept}_{\rm Alice}\ket{\psi^{(x_0,0)}}=\ket{\psi^{(x_0,0)}}$ and $\Pi^{\Accept}_{\rm Alice}\ket{\psi^{(x_0',1)}}=\ket{\psi^{(x_0',1)}}$, hence
    $\Pi^{\Accept}_{\rm Alice}\left(\ket{\psi^{(x_0,0)}}\ket{x_0,0}+\ket{\psi^{(x_0',1)}}\ket{x_0',1}\right)=\ket{\psi^{(x_0,0)}}\ket{x_0,0}+\ket{\psi^{(x_0',1)}}\ket{x_0',1}$. After Step 15, Bob wants to learn $i$. This can be done by a unitary $U_{(x_0,0)}\otimes \proj{x_0,0}_{\B}+ U_{(x_0',1)}\otimes \proj{x_0',1}_{\B}+ \mathbbm{I}_{\R_{\B}} \otimes (\mathbbm{I}_{\B}- \proj{x_0,0}_{\B}-\proj{x_0',1}_{\B}) \in {\rm U}( \B \R_{\B})$ followed by a measurement $\{N_{\hat{i}}^{\rm OPT}\}_{\hat{i}\in\{0,1\}}$ on $\B$. This can bound the probability that the estimate $\hat{i}$ is equal to $i$. The joint state of Alice and Bob $\ket{\Phi}\in{\A \R_{\A}\B \R_{\B}\E}$ before Alice and Bob measure in Step 15 is
    \begin{align}
        \ket{\Phi}= \frac{1}{\sqrt{2}}\left(\ket{\phi^{(x_0,0)}} \ket{x_0,0} + \ket{\phi^{(x_0',1)}} \ket{x_0',1}\right). 
    \end{align}
    The reduced density matrix $\rho_{\B}\in \mathscr{S}(\B)$ is 
    \begin{align}\label{eqn:before_measurement}
        \rho_{\B} = \frac{1}{2} \left(\proj{x_0,0}_{\B}+\proj{x_0',1}_{\B}\right)+\frac{1}{2}\left(\braket{\phi^{(x_0',1)}}{\phi^{(x_0,0)}}\ketbra{x_0,0}{x_0',1}_{\B}+ \textnormal{h.c.} \right).
    \end{align}
    where $\textnormal{h.c.}$ stands for the Hermitian conjugate of the previous term. After Alice performs the required measurement $\{N_{i,\hat{x}}\}_{i,\hat{x}\in\{0,1\}}$ on ${\A \R_{\A}}$ to obtain $(i,\hat{x})$ in Step 15, the post-measurement reduced density matrices $\rho_{\B}^{(i)}\in\mathscr{S}(\B)$ given $i$ is 
    \begin{align}
        \rho_{\B}^{(i)} = \sum_{\hat{x}\in\{0,1\}} \Tr_{\A \R_{\A} \R_{\B}}(E_{(i,\hat{x})}\proj{\Phi}E_{(i,\hat{x})}^\dagger),
    \end{align}
    or more explicitly
    \begin{align}
        \rho_{\B}^{(i)} = & \ 
        \frac{1}{2} \left(\proj{x_0,0}_{\B}+\proj{x_0',1}_{\B}\right) \nonumber\\
        & \ +\sum_{\hat{x}\in\{0,1\}}\left(\sandwich{\phi^{(x_0',1)}}{ N_{(i,\hat{x})}}{\phi^{(0,0)}}\ketbra{(x_0,0)}{(x_0',1)}_{\B}+\textnormal{h.c.}\right).
    \end{align}
    The difference between $\rho_{\B}^{(0)}$ and $\rho_{\B}^{(1)}$ is  
    \begin{align}
        & \rho_{\B}^{(0)} - \rho_{\B}^{(1)} = \nonumber \\
        & 
        \left(
        \begin{matrix}
            0 & \sum_{\hat{x}\in\{0,1\}}\sandwich{\phi^{(x_0',1)}}{ N_{(0,\hat{x})}-N_{(1,\hat{x})}}{\phi^{(x_0,0)}} \\
            \sum_{\hat{x}\in\{0,1\}}\sandwich{\phi^{(x_0,0)}}{ N_{(0,\hat{x})}-N_{(1,\hat{x})}}{\phi^{(x_0',1)}} & 0 \\
        \end{matrix}\right). 
    \end{align}
    The trace norm of a two-dimensional Hermitian matrix with only off-diagonal elements is just the absolute value of its off-diagonal elements. Hence we obtain
    \begin{align}
        \Delta(\rho_{\B}^{(0)},\rho_{\B}^{(1)}) = \left|\sum_{\hat{x}\in\{0,1\}}\sandwich{\phi^{(x_0',1)}}{ N_{(0,\hat{x})}}{\phi^{(x_0,0)}}-\sum_{\hat{x}\in\{0,1\}}\sandwich{\phi^{(x_0',1)}}{ N_{(1,\hat{x})}}{\phi^{(x_0,0)}}\right|. 
    \end{align}
    Since $\sum_{i,\hat{x}\in\{0,1\}} N_{(i,\hat{x})} = \mathbbm{I}_{\A \R_{\A}}$, we obtain 
    \begin{align}
        \Delta(\rho_{\B}^{(0)},\rho_{\B}^{(1)}) = \left|\braket{\phi^{(0,1)}}{\phi^{(0,0)}}-2\sum_{\hat{x}\in\{0,1\}}\sandwich{\phi^{(0,1)}}{ N_{(1,\hat{x})}}{\phi^{(0,0)}}\right|. 
    \end{align}
    Applying the reverse triangle inequality, we obtain
    \begin{align}\label{eqn:trace_bound}
        \Delta(\rho_{\B}^{(0)},\rho_{\B}^{(1)}) \geq \left|\braket{\phi^{(x_0',1)}}{\phi^{(x_0,0)}}\right|-2\sum_{\hat{x}\in\{0,1\}}\left|\sandwich{\phi^{(x_0',1)}}{N_{(1,\hat{x})}}{\phi^{(x_0,0)}}\right|.
    \end{align}
    Furthermore, applying the Cauchy-Schwarz inequality, we obtain
    \begin{align}
        \left|\sandwich{\phi^{(x_0',1)}}{N_{(1,\hat{x})}}{\phi^{(x_0,0)}}\right|\leq \sqrt{\left|\sandwich{\phi^{(x_0,0)}}{N_{(1,\hat{x})}}{\phi^{(x_0,0)}}\right|\cdot \left|\sandwich{\phi^{(x_0',1)}}{N_{(1,\hat{x})}}{\phi^{(x_0',1)}}\right|}\leq \frac{\sqrt{2\delta}}{2}, 
    \end{align}
    where the last inequality follows from Eq.~\eqref{eqn:required_measurement}. 
    Substituting into Eq.~\eqref{eqn:trace_bound}, we find
    \begin{align}
        \Delta(\rho_{\B}^{(0)},\rho_{\B}^{(1)}) \geq f-2\sqrt{2\delta}.
    \end{align}
    Therefore, we obtain $\Pr[\hat{i}=i] \geq \frac{1}{2}( 1+f-2\sqrt{2\delta})$. Hence, a cheating Bob can guess an honest Alice's $i$ and both parties accept with a probability of at least
    \begin{align}\label{eqn:soundness_against_bob}
        P_B^\star \geq \frac{1}{2}(1+f-2\sqrt{2\delta}). 
    \end{align}
    
    Combining Eq.~\eqref{eqn:soundness_against_alice} with Eq.~\eqref{eqn:soundness_against_bob} and eliminating $F$, we obtain a bound for semi-random oblivious transfer
    \begin{align}
        P_A^\star+2P_B^\star + \sqrt{\delta(1-\delta)} +2\sqrt{2\delta}\geq 2.
    \end{align}
    By further using 
    \begin{align}
        \sqrt{\delta(1-\delta)} +2\sqrt{2\delta}\leq 4\sqrt{\delta},
    \end{align}
    we have
    \begin{align}
        P_A^\star+2P_B^\star + 4\sqrt{\delta}\geq 2.
    \end{align}
    
\end{proof}

\section{Reduction from quantum oblivious transfer to quantum homomorphic encryption}\label{sec:reduction}

In this section, we reduce quantum oblivious transfer to quantum homomorphic encryption. We construct a set of channels which can realize quantum oblivious transfer in Definition~\ref{def:sot_channels}. We then use a quantum homomorphic encryption protocol with the set of channels as a black-box subprotocol to construct a quantum oblivious transfer protocol, as shown in Figure~\ref{fig:reduction}. Alice inputs the index of the data bit, Bob inputs two data bits to the black-box subprotocol, and the black-box subprotocol outputs the desired data bit to Alice. We complete the reduction by translating the correctness, data privacy and circuit privacy of quantum homomorphic encryption into the completeness, soundness against Bob and soundness against Alice of quantum oblivious transfer in Lemma~\ref{lemma:completeness}, Lemma~\ref{lemma:soundness_against_bob} and Lemma~\ref{lemma:soundness_against_alice}, respectively.

\begin{figure}[!htpb]
	\centering
	\includegraphics[scale=0.35]{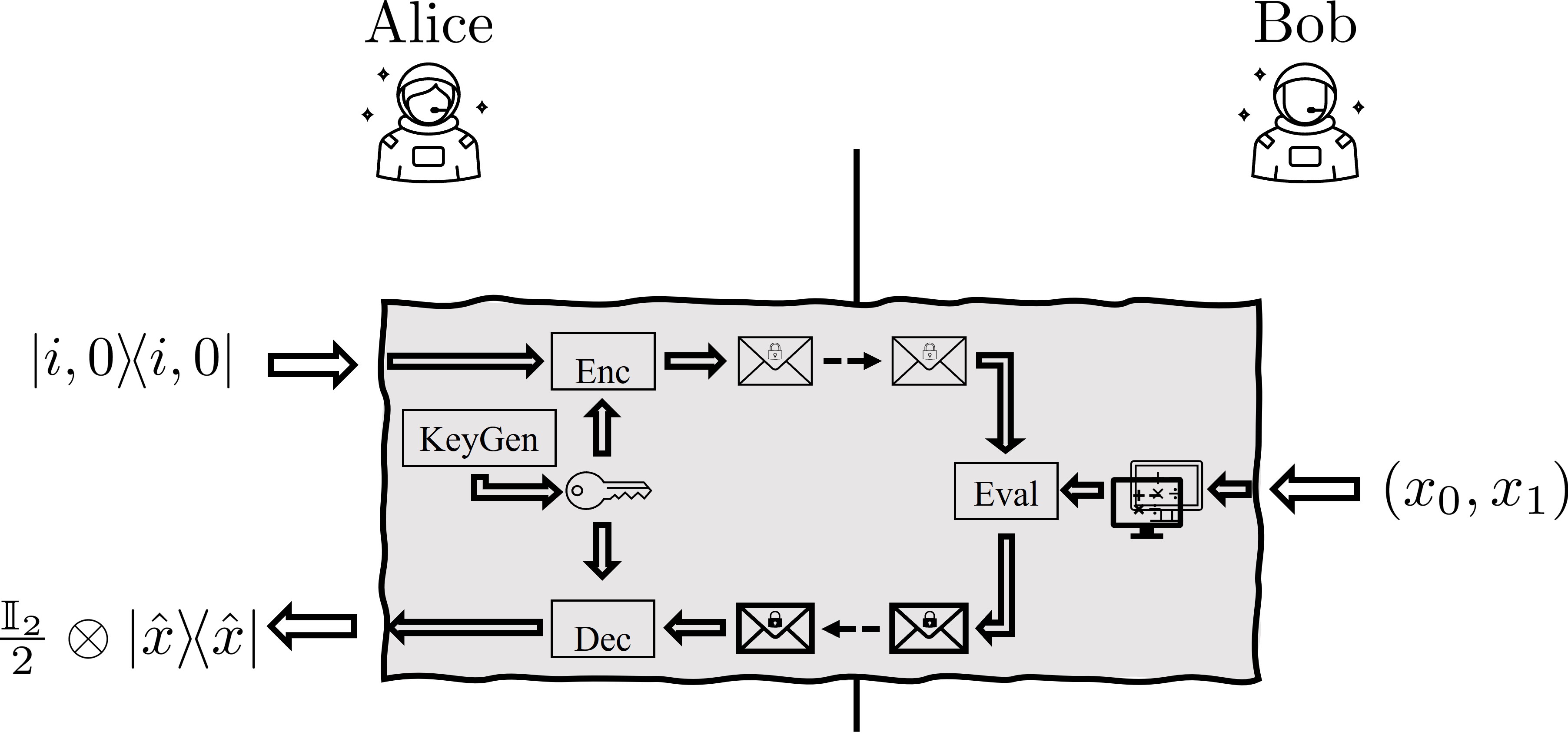}
	\caption{The reduction from quantum oblivious transfer to quantum homomorphic encryption}
	\label{fig:reduction}
\end{figure}

We now define the set of strong oblivious transfer channels. Roughly speaking, the set of strong oblivious transfer channels can realize standard oblivious transfer in a classical manner. 

\begin{definition}[Strong oblivious transfer channels]\label{def:sot_channels}
    Strong oblivious transfer channels $\mathcal{F}_{(x_0,x_1)}\in{\rm CPTP}(\A,\O)$, where $(x_0,x_1)\in\{0,1\}^2$, can be compactly given by
    \begin{align}
        \mathcal{F}_{(x_0,x_1)}(\rho)= \sum_{i,i'\in \{0,1\}}\Tr\left(\proj{i,i'}_{\A}\rho\right)\frac{\mathbbm{I}_2}{2}\otimes\proj{x_{i\oplus i'}\oplus i'}_{\O_2}.
    \end{align}
    where $\O_2$ denotes the second qubit in the output.  
\end{definition}
\begin{remark}
     Strong oblivious transfer channels $\mathcal{F}_{(x_0,x_1)}$ can be explicitly expressed as 
    \begin{align}
        \mathcal{F}_{(0,0)}(\rho) =& \frac{\mathbbm{I}_2}{2}\otimes \left( \Tr\left(\mathbbm{I}_2\otimes\proj{0}_{\A_2}  \rho\right)\proj{0}_{\O_2}+\Tr\left(\mathbbm{I}_2\otimes\proj{1}_{\A_2}  \rho\right)\proj{1}_{\O_2}\right), \\
        \mathcal{F}_{(0,1)}(\rho) =& \frac{\mathbbm{I}_2}{2}\otimes \left(\Tr\left(\proj{0}_{\A_1}\otimes \mathbbm{I}_2 \rho\right)\proj{0}_{\O_2}+ \Tr\left(\proj{1}_{\A_1}\otimes \mathbbm{I}_2 \rho\right)\proj{1}_{\O_2}\right),\\
        \mathcal{F}_{(1,0)}(\rho) =&\frac{\mathbbm{I}_2}{2}\otimes \left( \Tr\left(\proj{0}_{\A_1}\otimes\mathbbm{I}_2  \rho\right)\proj{1}_{\O_2}+ \Tr\left(\proj{1}_{\A_1} \otimes\mathbbm{I}_2 \rho\right)\proj{0}_{\O_2}\right),\\
        \mathcal{F}_{(1,1)}(\rho) =&\frac{\mathbbm{I}_2}{2}\otimes \left( \Tr\left(\mathbbm{I}_2\otimes\proj{0}_{\A_2}  \rho\right)\proj{1}_{\O_2}+\Tr\left(\mathbbm{I}_2\otimes\proj{1}_{\A_2}  \rho\right)\proj{0}_{\O_2}\right). 
    \end{align}
    where $\A_1$ and $\A_2$ denotes the first and second qubit in the input, respectively. 
\end{remark}
\begin{remark}
    When $\rho = \proj{i,0}_{\A}$, then $\mathcal{F}_{(x_0,x_1)}$ satisfies 
    \begin{align}
        \mathcal{F}_{(x_0,x_1)}(\rho) =\frac{\mathbbm{I}_2}{2}\otimes  \proj{x_i}_{\O_2}
    \end{align}
    or
    \begin{align}
        \mathcal{F}_{(0,0)}(\rho) =& \frac{\mathbbm{I}_2}{2}\otimes \proj{0}_{\O_2}, \\
        \mathcal{F}_{(0,1)}(\rho) =& \frac{\mathbbm{I}_2}{2}\otimes \proj{i}_{\O_2},\\
        \mathcal{F}_{(1,0)}(\rho) =& \frac{\mathbbm{I}_2}{2}\otimes \proj{\overline{i}}_{\O_2},\\
        \mathcal{F}_{(1,1)}(\rho) =& \frac{\mathbbm{I}_2}{2}\otimes \proj{1}_{\O_2}. 
    \end{align}
\end{remark}

We explicitly construct $\mathcal{F}_{(x_0,x_1)}$ with Cliffords ${\normalfont{\texttt{CL}}}$, completely dephasing channels ${\mathcal{D}}$ and completely depolarising channels ${\mathcal{P}}$ in Figure~\ref{fig:sot_from_clifford}. 

\begin{figure}[!htpb]
    \centering
    \begin{subfigure}[t]{0.6\textwidth}
        \centering
        \includegraphics[scale=0.4]{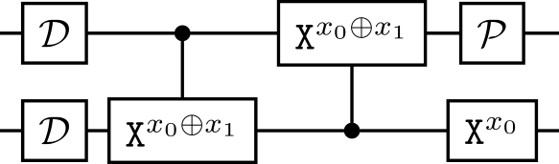}
        \caption{The compact form of $\mathcal{F}_{(x_0,x_1)}$ }
	    \label{fig:F_x}
    \end{subfigure}
    
    \begin{subfigure}[t]{0.3\textwidth}
        \centering
        \includegraphics[scale=0.4]{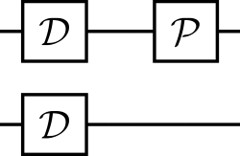}
        \caption{The explicit form of $\mathcal{F}_{(0,0)}$}
        \label{fig:F_00}
    \end{subfigure}
    \begin{subfigure}[t]{0.3\textwidth}
        \centering
        \includegraphics[scale=0.4]{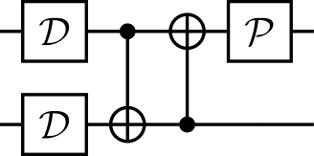}
        \caption{The explicit form of $\mathcal{F}_{(0,1)}$}
        \label{fig:F_01}
    \end{subfigure}
    
    \begin{subfigure}[t]{0.3\textwidth}
        \centering
        \includegraphics[scale=0.4]{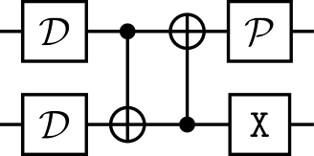}
        \caption{The explicit form of $\mathcal{F}_{(1,0)}$}
        \label{fig:F_10}
    \end{subfigure}
    \begin{subfigure}[t]{0.3\textwidth}
        \centering
        \includegraphics[scale=0.4]{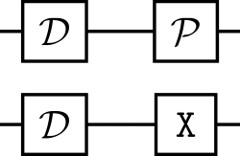}
        \caption{The explicit form of $\mathcal{F}_{(1,1)}$}
        \label{fig:F_11}
    \end{subfigure}
    \caption{(a) The compact form of $\mathcal{F}_{(x_0,x_1)}$ and (b)--(e) the explicit form of $\mathcal{F}_{(0,0)}$, $\mathcal{F}_{(0,1)}$, $\mathcal{F}_{(1,0)}$, and $\mathcal{F}_{(1,1)}$ of strong oblivious transfer channels. 
    We apply the two $\CNOT$ gates if and only if $x_0\neq x_1$ and apply the $\XGate$ gate if and only if $x_0=1$ in both (a) and (b)--(e). Thus (a) is a compact form of (b)--(e). }
    \label{fig:sot_from_clifford}
\end{figure}

The completely dephasing channel $\mathcal{D}$ can further be constructed by applying $\mathbbm{I}_2$ and $\ZGate$ uniformly at random. The completely depolarising channel $\mathcal{P}$ can be constructed by applying $\mathbbm{I}_2$, $\ZGate$, $\XGate$ and $\XGate\ZGate$ uniformly at random. That motivates us to define the set of strong oblivious transfer Cliffords. Equivalently, strong oblivious transfer Cliffords can realize strong oblivious transfer channels. 
\begin{definition}[Strong oblivious transfer Cliffords]
    Strong oblivious transfer Cliffords  $\mathcal{F}_{(x_0,x_1)}^{(r_0,r_1,r_2,r_3)}\in \CL$, where $(x_0,x_1,r_0,r_1,r_2,r_3)\in\{0,1\}^6$, are given by circuits in Figure~\ref{fig:F_xr}.
\end{definition}
\begin{figure}[!htpb]
    \centering
    \includegraphics[scale=0.4]{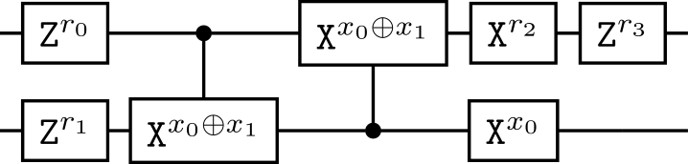}
    \caption{Strong oblivious transfer Cliffords $\mathcal{F}_{(x_0,x_1)}^{(r_0,r_1,r_2,r_3)}$. }
    \label{fig:F_xr}
\end{figure}

\begin{lemma}\label{lemma:sot_from_clifford}
    A quantum homomorphic encryption protocol that allows to delegate $\mathcal{F}_{(x_0,x_1)}^{(r_0,r_1,r_2,r_3)}$ and with $\epsilon$-correctness, $\epsilon_d$-data privacy, $\epsilon_c$-circuit privacy can simulate a quantum homomorphic encryption protocol that allows to delegate $\mathcal{F}_{(x_0,x_1)}$
    with $\epsilon$-correctness, $\epsilon_d$-data privacy and $\epsilon_c$-circuit privacy.  
\end{lemma}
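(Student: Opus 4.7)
The plan is to let Bob sample $(r_0,r_1,r_2,r_3)\in\{0,1\}^4$ uniformly at random and invoke the Clifford protocol's evaluation on $\mathcal{F}_{(x_0,x_1)}^{(r_0,r_1,r_2,r_3)}$; in expectation over the randomness this implements $\mathcal{F}_{(x_0,x_1)}$ while preserving the three quantitative parameters. The crucial channel identity is
\begin{align*}
\mathcal{F}_{(x_0,x_1)} = \frac{1}{16}\sum_{r_0,r_1,r_2,r_3 \in \{0,1\}} \mathcal{F}_{(x_0,x_1)}^{(r_0,r_1,r_2,r_3)},
\end{align*}
which I would verify by substituting into the circuit of Figure~\ref{fig:F_xr} the Pauli-twirl representations $\mathcal{D}(\rho) = \frac{1}{2}\rho + \frac{1}{2}Z\rho Z$ for the completely dephasing channel and $\mathcal{P}(\rho) = \frac{1}{4}\sum_{P\in\{\mathbbm{I}_2,Z,X,XZ\}} P\rho P$ for the completely depolarizing channel; the four random bits collectively implement the two dephasings and the one depolarization appearing in Figure~\ref{fig:sot_from_clifford}. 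The simulating scheme then keeps $\KeyGen$, $\Enc$, and $\Dec$ verbatim, while Bob's new evaluation is $\frac{1}{16}\sum_{r_0,r_1,r_2,r_3}\hat{\mathcal{F}}_{(x_0,x_1)}^{(r_0,r_1,r_2,r_3)}$, realized operationally by sampling $(r_0,r_1,r_2,r_3)$ uniformly and running the Clifford evaluation.

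The three properties then transfer by linearity of the maps $\Dec_k$ and $\Enc_k$ combined with joint convexity of the trace distance. For correctness, I would average the bound $\Delta((\Dec_k \hat{\mathcal{F}}_{(x_0,x_1)}^{(r_0,r_1,r_2,r_3)} \Enc_k)(\proj{\psi'}),\mathcal{F}_{(x_0,x_1)}^{(r_0,r_1,r_2,r_3)}(\proj{\psi'})) \leq \epsilon$ over $(r_0,r_1,r_2,r_3)$, pull the average inside each argument via linearity, and invoke the channel identity above to obtain the same bound for $\mathcal{F}_{(x_0,x_1)}$. Data privacy is inherited trivially: neither $\KeyGen$ nor $\Enc$ is altered, so the quantity $\Delta(\mathbbm{E}(\Enc_k[\rho]),\mathbbm{E}(\Enc_k[\rho']))$ is literally unchanged. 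For circuit privacy, any Alice-chosen purification $\ket{\psi}$ determines, via the $\epsilon_c$-circuit-privacy of the Clifford-capable protocol, a single pair $(\ket{\psi'},\mathcal{N})$ that satisfies Eq.~\eqref{eqn:circuit_privacy} for every Clifford $\mathcal{F}_{(x_0,x_1)}^{(r_0,r_1,r_2,r_3)}$ simultaneously; averaging those sixteen bounds and using joint convexity of $\Delta$ yields the circuit-privacy inequality for $\mathcal{F}_{(x_0,x_1)}$ with the same pair.

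The only subtlety worth flagging is the order of quantifiers in Definition~\ref{def:circuit_privacy}: the purification $\ket{\psi'}$ and simulator $\mathcal{N}$ are chosen before the circuit $\mathcal{F}$, which is precisely what lets us reuse one $(\ket{\psi'},\mathcal{N})$ across all sixteen Cliffords in the fiber above each $\mathcal{F}_{(x_0,x_1)}$. Had the simulator been allowed to depend on $\mathcal{F}$, the convexity step would produce sixteen distinct simulators instead of one, and the argument would break. Beyond verifying this point, I foresee no genuine obstacle; the remainder is channel linearity, joint convexity of trace distance, and the Pauli-twirl identity.
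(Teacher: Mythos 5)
Your proposal is correct and follows essentially the same route as the paper's own proof: keep $\KeyGen$, $\Enc$, $\Dec$ unchanged, let Bob sample $(r_0,r_1,r_2,r_3)$ uniformly, use the channel identity $\mathcal{F}_{(x_0,x_1)}=\frac{1}{16}\sum_{r_0,r_1,r_2,r_3}\mathcal{F}_{(x_0,x_1)}^{(r_0,r_1,r_2,r_3)}$ (and its $\hat{\mathcal{F}}$ analogue), then transfer correctness and circuit privacy by averaging and joint convexity of trace distance while noting data privacy is literally unchanged. Your explicit remark about the quantifier order in Definition~\ref{def:circuit_privacy}---that $(\ket{\psi'},\mathcal{N})$ is chosen before $\mathcal{F}$, allowing one simulator to be reused across all sixteen Cliffords in a fiber---is exactly the point the paper's proof relies on implicitly, and is worth making explicit.
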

\begin{proof}
    Suppose that there is a quantum homomorphic encryption protocol $Q$ that delegates $\mathcal{F}_{(x_0,x_1)}^{(r_0,r_1,r_2,r_3)}$ with $\epsilon$-correctness, $\epsilon_d$-data privacy, $\epsilon_c$-circuit privacy. The quantum homomorphic encryption protocol $Q'$ that delegates $\mathcal{F}_{(x_0,x_1)}$ can be constructed from $Q$ as follows. 
    The key generation, encryption and decryption maps of $Q'$ are the same as that of $Q'$. The evaluation map of $Q'$ requires further randomization, i.e. Bob fixes $(x_0,x_1)$, generates $(r_0, r_1, r_2, r_3)$ uniformly at random and evaluates $\mathcal{F}_{(x_0,x_1)}^{(r_0,r_1,r_2,r_3)}$. 
    Namely,
    \begin{align}\label{eqn:unencrypted_unitaries_and_channels}
        \mathcal{F}_{(x_0,x_1)}=\frac{1}{16}\sum_{r_0,r_1,r_2,r_3\in\{0,1\}}\mathcal{F}_{(x_0,x_1)}^{(r_0,r_1,r_2,r_3)}, 
    \end{align}
    and hence
    \begin{align}\label{eqn:encrypted_unitaries_and_channels}
        \hat{\mathcal{F}}_{(x_0,x_1)}=\frac{1}{16}\sum_{r_0,r_1,r_2,r_3\in\{0,1\}}\hat{\mathcal{F}}_{(x_0,x_1)}^{(r_0,r_1,r_2,r_3)}.
    \end{align}
    The correctness of $Q$ implies that for any $\mathcal{F}_{(x_0,x_1)}^{(r_0,r_1,r_2,r_3)}$, any $\ket{\psi'}\in{\A\R_{\A}}$ and any key $k=\KeyGen(\kappa)$
    \begin{align}
        \epsilon\geq\Delta\left(\Dec_k\hat{\mathcal{F}}_{(x_0,x_1)}^{(r_0,r_1,r_2,r_3)}\Enc_k(\proj{\psi'}_{\A\R_{\A}}),\mathcal{F}_{(x_0,x_1)}^{(r_0,r_1,r_2,r_3)}(\proj{\psi'}_{\A\R_{\A}}) \right).
    \end{align}
    Averaging over the uniform distribution of $(r_0,r_1,r_2,r_3)$, applying the convexity of the trace distance in~\cite[Theorem~9.3]{Nielsen_2010} and substituting Eq.~\eqref{eqn:unencrypted_unitaries_and_channels} and Eq.~\eqref{eqn:encrypted_unitaries_and_channels}, we conclude that for any $\mathcal{F}_{(x_0,x_1)}$, any $\ket{\psi'}\in{\A\R_{\A}}$ and any key $k=\KeyGen(\kappa)$ 
    \begin{align}
        \epsilon\geq\Delta\left(\Dec_k \hat{\mathcal{F}}_{(x_0,x_1)}\Enc_k(\proj{\psi'}_{\A\R_{\A}}),\mathcal{F}_{(x_0,x_1)}(\proj{\psi'}_{\A\R_{\A}}) \right), 
    \end{align}
    which is exactly the correctness of $Q'$.
    
    The data privacy of $Q$ translates directly to the data privacy of $Q'$, because both the key generation and encryption maps are identical. That is, for any $\rho\in\mathscr{S}(\A)$
    \begin{align}
        \epsilon_d \geq \Delta\left(\mathbbm{E}(\Enc_k(\rho)), \mathbbm{E}(\Enc_k(\rho'))\right), 
    \end{align}
    for both $Q$ and $Q'$. 
    
    The circuit privacy of $Q$ requires that for any $\ket{\psi}\in {\hat{\A}\R_{\hat{\A}}}$, there must exist $\ket{\psi'}\in{\A\R_{\A}}$ and $\mathcal{N}\in {\rm CPTP}(\A\R_{\hat{\A}},\hat{\A}\R_{\hat{\A}})$ such that for any $\mathcal{F}_{(x_0,x_1)}^{(r_0,r_1,r_2,r_3)} $
    \begin{align}
        \epsilon_c\geq  \Delta\left(\hat{\mathcal{F}}_{(x_0,x_1)}^{(r_0,r_1,r_2,r_3)}[\proj{\psi}_{\hat{\A}\R_{\hat{\A}}}],\mathcal{N}(\mathcal{F}_{(x_0,x_1)}^{(r_0,r_1,r_2,r_3)}(\proj{\psi'}_{\A\R_{\A}}))\right). 
    \end{align}
    Similar to the technique we use for the correctness, averaging over the uniform distribution of $(r_0,r_1,r_2,r_3)$, applying the convexity of the trace distance in~\cite[Theorem~9.3]{Nielsen_2010} and substituting Eq.~\eqref{eqn:unencrypted_unitaries_and_channels} and Eq.~\eqref{eqn:encrypted_unitaries_and_channels}, we conclude that for any $\ket{\psi}\in {\hat{\A}\R_{\hat{\A}}}$, there must exist $\ket{\psi'}\in{\A\R_{\A}}$ and $\mathcal{N}\in {\rm CPTP}(\A\R_{\hat{\A}},\hat{\A}\R_{\hat{\A}})$ such that for any $\mathcal{F}_{(x_0,x_1)}$
    \begin{align}
        \epsilon_c\geq  \Delta\left(\hat{\mathcal{F}}_{(x_0,x_1)} [\proj{\psi}_{\hat{\A}\R_{\hat{\A}}}],\mathcal{N}(\mathcal{F}_{(x_0,x_1)}(\proj{\psi'}_{\A\R_{\A}}))\right).
    \end{align}
    This shows that $Q'$ has $\epsilon_c$-circuit privacy.

\end{proof}
\begin{remark}
    We can reduce quantum homomorphic encryption which allows the delegation of $\mathcal{F}_{(x_0,x_1)}$ to quantum homomorphic encryption which allows the delegation of $\mathcal{F}_{(x_0,x_1)}^{(r_0,r_1,r_2,r_3)}$. However, the converse is not necessarily true. Quantum homomorphic encryption allowing $\mathcal{F}_{(x_0,x_1)}$ is weaker than quantum homomorphic encryption allowing $\mathcal{F}_{(x_0,x_1)}^{(r_0,r_1,r_2,r_3)}$. 
\end{remark}
In the following part, we will prove that one can construct a standard oblivious transfer protocol by a quantum homomorphic encryption protocol with $\{\mathcal{F}_{(x_0,x_1)}\}_{x_0,x_1\in\{0,1\}}\subset \mathscr{F}$. The $\epsilon$-correctness, $\epsilon_d$-data privacy and $\epsilon_c$-circuit privacy of quantum homomorphic encryption translates to the $\epsilon$-completeness, $\frac{1}{2}(1+\epsilon_d)$-soundness against a cheating Bob and $\frac{1}{2}+\epsilon_c$-soundness against a cheating Alice of standard oblivious transfer.

Now we clarify once more the notations before we state the lemmas. Recall the notations in the actual protocol. Alice's message is $\sigma\in\mathscr{S}({\hat{\A}})$ and the purification of Alice's message is $\ket{\psi}\in {\hat{\A}\R_{\hat{\A}}}$. Bob's program is $\proj{x_0,x_1}_{\B}\in\mathscr{S}(\B) $. Bob's channel is $\hat{\mathcal{F}}_{(x_0,x_1)}\in {\rm CPTP} ({\hat{\A}}, {\hat{\O}})$ which maps Alice's message $\sigma\in \mathscr{S}({\hat{\A}})$ to Bob's message $\hat{\mathcal{F}}_{(x_0,x_1)}(\sigma)\in\mathscr{S}({\hat{\O}})$. Alice obtains $\hat{\mathcal{F}}_{(x_0,x_1)}(\proj{\psi}_{\hat{\A}\R_{\hat{\A}}})\in\mathscr{S}({\hat{\O} \R_{\hat{\A}}})$. 

Recall the notations in the ideal protocol. Alice's input is $\rho'\in \mathscr{S}({\A})$ and the purification of Alice's input is  $\ket{\psi'}\in {\A \R_{\A}}$. Charlie's channel is $\mathcal{F}_{(x_0,x_1)}\in {\rm CPTP}({\A},{\O})$ which maps Alice's input $\rho'$ to Charlie's message $\mathcal{F}_{(x_0,x_1)}(\rho')$. Alice post-processes $\mathcal{F}_{(x_0,x_1)}(\proj{\psi'}_{\A\R_{\A}})\in\mathscr{S}({\O \R_{\A}})$ with $\mathcal{N}\in {\rm CPTP}({\O \R_{\A}}, {\hat{\O} \R_{\hat{\A}}})$ and obtains $\mathcal{N}(\mathcal{F}_{(x_0,x_1)}(\proj{\psi'}_{\A\R_{\A}}))\in \mathscr{S}({\hat{\O} \R_{\hat{\A}}})$.

We sketch the proof of the reduction from quantum oblivious transfer to quantum homomorphic encryption. In Protocol~\ref{prot:ot_from_qhe}, we show how one can realize the standard oblivious transfer protocol using quantum homomorphic encryption. 
In Lemma~\ref{lemma:completeness}, we translate the $\epsilon$-correctness of quantum homomorphic encryption to the $\epsilon$-completeness of standard oblivious transfer. In Lemma~\ref{lemma:soundness_against_bob}, we translate the $\epsilon_d$-data privacy to the $\frac{1}{2}(1+\epsilon_d)$-soundness against a cheating Bob. In Lemma~\ref{lemma:soundness_against_alice}, we translate the $\epsilon_c$-circuit privacy of a quantum homomorphic encryption protocol to the $(\frac{1}{2}+\epsilon_c)$-soundness of a quantum oblivious transfer protocol against a cheating Alice. We piece these lemmas together in Theorem~\ref{thm:construction}, where we show how a quantum homomorphic encryption protocol can apply a standard oblivious transfer protocol while taking into account the properties of completeness and soundness. 

In Protocol~\ref{prot:ot_from_qhe}, we reduce standard oblivious transfer to quantum homomorphic encryption. Alice inputs $i\in\{0,1\}$ and Bob inputs $(x_0,x_1)\in\{0,1\}^2$ in standard oblivious transfer. Now Alice and Bob make use of a black-box quantum homomorphic encryption protocol $(\mathscr{F},\KeyGen, \Enc, \Eval, \Dec)$ with $\{\mathcal{F}_{(x_0,x_1)}\}_{x_0,x_1\in\{0,1\}}\subset \mathscr{F}$. Alice generates Alice's key $k$ with $\KeyGen$ and prepares Alice's input state $\proj{i,0}_{\A}\in \mathscr{S}({\A})$. Alice then encrypts and sends Alice's message $\sigma =\Enc_k(\proj{i, 0}_{\A})\in\mathscr{S}({\hat{\A}})$ to Bob. Bob evaluates and sends Bob's message $\theta=\hat{\mathcal{F}}_{(x_0,x_1)}(\sigma)\in \mathscr{S}({\hat{\O}})$ back to Alice. Alice decrypts and obtains Alice's output state $\tau=\Dec_k(\theta)\in \mathscr{S}(\O)$. Finally, Alice performs the optimal positive operator valued measure $\{M_{\hat{x}}\}_{\hat{x}\in\{0,1\}}$ on $\O$ to guess $\hat{x}$ for $x_i$ and outputs $\hat{x}$.

\begin{algorithm}
    \caption{Applying standard oblivious transfer with quantum homomorphic encryption}
    \label{prot:ot_from_qhe} 
    \begin{algorithmic}[1]
        \Require $i\in \{0,1\}$, $(x_0,x_1) \in \{0,1\}^2$. \Comment{Alice's and Bob's inputs}
        \Ensure ${\rm output}_A\in \{0,1\}$. \Comment{Alice's output. }
        \State Alice: $k\leftarrow \KeyGen$. 
        \State Alice: Encode $\sigma\leftarrow \Enc_k(\proj{i,0}_{\A})$ and send it to Bob. 
        \State Bob: Evaluate $\theta\leftarrow \mathcal{F}_{(x_0,x_1)}(\sigma)$ and send it to Alice. 
        \State Alice: Decrypt $\tau \leftarrow \Dec_k(\theta)$. 
        \State Alice: Measure $\{M_{x}\}_{x \in \{0,1\}}$ with outcome $\hat{x}$. \Comment{Alice measures to determine $\hat{x}$. }
        \State Alice: ${\rm output}_A\leftarrow \hat{x}$. 
    \end{algorithmic}
\end{algorithm}

\begin{lemma}\label{lemma:completeness}
    If the quantum homomorphic encryption in Protocol~\ref{prot:ot_from_qhe} has $\epsilon$-correctness, 
    then the constructed standard oblivious transfer has $\epsilon$-completeness.
\end{lemma}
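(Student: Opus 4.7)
The plan is to directly transfer the $\epsilon$-correctness guarantee on the quantum homomorphic encryption scheme into a bound on the probability that Alice outputs $\hat{x}=x_i$, noting that Protocol~\ref{prot:ot_from_qhe} does not include any abort step, so the first part of the completeness requirement (both parties always accept) is immediate.

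First I would identify the ideal output that Alice \emph{would} obtain if the encryption scheme were perfectly correct. Alice's input is the computational basis state $\proj{i,0}_{\A}$, so by the second remark after Definition~\ref{def:sot_channels} we have $\mathcal{F}_{(x_0,x_1)}(\proj{i,0}_{\A}) = \frac{\mathbbm{I}_2}{2}\otimes \proj{x_i}_{\O_2}$. Hence a computational-basis measurement on the second output qubit of $\O$, which corresponds to the choice $M_{\hat{x}} = \mathbbm{I}_2 \otimes \proj{\hat{x}}_{\O_2}$, would yield $\hat{x}=x_i$ with probability $1$ when applied to the ideal state.

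Next I would apply $\epsilon$-correctness from Definition~\ref{def:correctness} (with a trivial reference system) to the input $\proj{i,0}_{\A}$. This gives
\begin{align}
\Delta\bigl(\tau,\; \mathcal{F}_{(x_0,x_1)}(\proj{i,0}_{\A})\bigr) = \Delta\bigl(\Dec_k \hat{\mathcal{F}}_{(x_0,x_1)} \Enc_k(\proj{i,0}_{\A}),\; \mathcal{F}_{(x_0,x_1)}(\proj{i,0}_{\A})\bigr)\leq \epsilon .
\end{align}
Invoking the operational interpretation of trace distance (any measurement outcome probability can differ by at most $\Delta(\tau,\tau_{\text{ideal}})$ between the two states), the probability of obtaining $\hat{x}=x_i$ on $\tau$ using the measurement $\{M_0,M_1\}$ above is at least $1-\epsilon$.

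Finally I would combine the two observations: since Protocol~\ref{prot:ot_from_qhe} has no aborting branch, $\Pr[(\text{Both accept})] = 1$ trivially, and the above argument gives $\Pr[\hat{x}=x_i]\geq 1-\epsilon$, which is exactly $\epsilon$-completeness in the sense of Definition~\ref{def:standard_ot}. There is no serious obstacle here; the only subtlety is to make explicit the choice of POVM $\{M_{\hat x}\}$ (the computational-basis measurement on the second output qubit) to certify that the measurement envisaged in Protocol~\ref{prot:ot_from_qhe} indeed attains the bound.
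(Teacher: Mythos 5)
Your proposal is correct and follows essentially the same route as the paper's proof: compute the ideal output $\mathcal{F}_{(x_0,x_1)}(\proj{i,0}_{\A}) = \frac{\mathbbm{I}_2}{2}\otimes\proj{x_i}_{\O_2}$, apply $\epsilon$-correctness to bound the trace distance to Alice's decrypted state, use the operational meaning of trace distance with the computational-basis measurement on the second output qubit to get $\Pr[\hat{x}=x_i]\geq 1-\epsilon$, and observe that the optimal POVM in Protocol~\ref{prot:ot_from_qhe} does at least as well. The paper spells out the variational form $\Delta(\rho,\sigma)=\sup_{0\le Q\le \mathbbm I}\Tr[Q(\sigma-\rho)]$ with $Q=\mathbbm{I}_2\otimes\proj{x_i}_{\O_2}$, which is exactly your ``operational interpretation of trace distance'' made explicit.
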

\begin{proof}
    From Definition~\ref{def:standard_ot}, the correctness of the standard oblivious transfer requires that the output is approximately correct if both Alice and Bob are honest. Note that for any $\ket{i,0}\in {\A}$ and any $(x_0, x_1)\in \{0,1\}^2$, we have 
    \begin{align}
        \mathcal{F}_{(x_0,x_1)}(\proj{i,0}_{\A})= \frac{\mathbbm{I}_2}{2}\otimes \proj{x_{i}}_{\O_2}. 
    \end{align}
    Due to the $\epsilon$-correctness of the quantum homomorphic encryption protocol defined in Definition~\ref{def:correctness}, we have 
    \begin{align}
        \Delta\left(\Dec_k\hat{\mathcal{F}}_{(x_0,x_1)}\Enc_k[\proj{i,0}_{\A}] ,\frac{\mathbbm{I}_2}{2}\otimes\proj{x_{i}}_{\O_2}\right)\leq \epsilon. 
    \end{align}
    According to~\cite[Eq.~(9.22)]{Nielsen_2010}, the trace distance can be written as
    \begin{align}
         & \Delta\left(\Dec_k\hat{\mathcal{F}}_{(x_0,x_1)}\Enc_k[\proj{i,0}_{\A}] ,\frac{\mathbbm{I}_2}{2}\otimes\proj{x_{i}}_{\O_2}\right) \nonumber\\
         & = \sup_{0\leq Q\leq \mathbbm{I}_{4}} \Tr \left[Q\left(\frac{\mathbbm{I}_2}{2}\otimes\proj{x_{i}}_{\O_2} - \Dec_k\hat{\mathcal{F}}_{(x_0,x_1)}\Enc_k[\proj{i,0}_{\A}] \right)\right]. 
    \end{align}
    Let $Q=\mathbbm{I}_2\otimes\proj{x_{i}}_{\O_2}$, we obtain 
    \begin{align}
         & \Delta\left(\Dec_k\hat{\mathcal{F}}_{(x_0,x_1)}\Enc_k[\proj{i,0}_{\A}] ,\frac{\mathbbm{I}}{2}\otimes\proj{x_{i}}_{\O_2}\right) \nonumber\\
         & \geq 1- \Tr\left(\frac{\mathbbm{I}}{2}\otimes\proj{x_{i}}_{\O_2} \Dec_k\hat{\mathcal{F}}_{(x_0,x_1)}\Enc_k[\proj{i,0}_{\A}] \right). 
    \end{align}
    Suppose that Alice performs the positive operator valued measure $\{\proj{0}_{\O},\proj{1}_{\O}\}$ and obtains the measurement outcome $\hat{x}'$. Thus
    \begin{align}
        \Pr[({\hat{x}'=x_i})]=\Tr\left(\frac{\mathbbm{I}}{2}\otimes\proj{x_{i}}_{\O_2}
        \Dec_k\hat{\mathcal{F}}_{(x_0,x_1)}\Enc_k[\proj{i,0}_{\A}
        ] \right)\geq 1-\epsilon. 
    \end{align}
    Since the optimal measurement performs better than the above measurement, we have 
    \begin{align}
        \Pr[\hat{x}=x_i] \geq \Pr[\hat{x}'=x_i] \geq 1-\epsilon.
    \end{align}
    Hence the standard oblivious transfer protocol in Definition~\ref{def:standard_ot} has $\epsilon$-completeness. 
\end{proof}

\begin{lemma}\label{lemma:soundness_against_bob}
    In Protocol~\ref{prot:ot_from_qhe}, if the quantum homomorphic encryption protocol has $\epsilon_d$-data privacy, then the constructed standard oblivious transfer protocol has $\frac{1}{2}(1+\epsilon_d)$-soundness against Bob. 
\end{lemma}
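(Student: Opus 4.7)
The plan is to reduce the cheating-Bob guessing problem to a two-state distinguishing problem and then invoke the data-privacy bound together with the Holevo--Helstrom theorem, mirroring how classical soundness of OT reduces to the IND-style security of an underlying encryption scheme.

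First I would fix Alice's honest behaviour in Protocol~\ref{prot:ot_from_qhe}: on input $i \in \{0,1\}$ she prepares $\proj{i,0}_{\A}$, draws a fresh key $k \leftarrow \KeyGen$, and sends $\sigma_i(k) = \Enc_k(\proj{i,0}_{\A})$ to Bob. From Bob's perspective (who does not know $k$), the only information he holds about $i$ is the mixed state
\begin{align}
    \overline{\sigma}_i \;=\; \mathbbm{E}_k\bigl(\Enc_k[\proj{i,0}_{\A}]\bigr), \qquad i \in \{0,1\}.
\end{align}
By Definition~\ref{def:data_privacy} applied to the two input states $\rho = \proj{0,0}_{\A}$ and $\rho' = \proj{1,0}_{\A}$, the data-privacy assumption gives $\Delta(\overline{\sigma}_0,\overline{\sigma}_1)\leq \epsilon_d$.

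Next I would observe that in Protocol~\ref{prot:ot_from_qhe} Alice never aborts; she only performs a terminal measurement to produce $\hat x \in \{0,1\}$. Hence the event ``Alice accepts'' is trivially satisfied, and the quantity to bound is simply $\Pr[\hat{i}=i]$ when $i$ is uniform and Bob applies an arbitrary guessing strategy. Any such cheating strategy can, without loss of generality, be absorbed into a two-outcome POVM $\{M_0, M_1\}$ on Bob's received system $\hat{\A}$, possibly after tensoring with ancillas that Bob prepared himself (which contain no information about $i$). Thus
\begin{align}
    \Pr[\hat{i}=i] \;=\; \tfrac{1}{2}\Tr(M_0 \overline{\sigma}_0) + \tfrac{1}{2}\Tr(M_1 \overline{\sigma}_1).
\end{align}

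Finally, applying the Holevo--Helstrom theorem (as cited in the proof of Theorem~\ref{thm:srot_bound}), the optimal value over POVMs $\{M_0,M_1\}$ equals $\tfrac{1}{2}\bigl(1+\Delta(\overline{\sigma}_0,\overline{\sigma}_1)\bigr)$, so combining with the data-privacy bound yields
\begin{align}
    \Pr[\hat{i}=i] \;\leq\; \tfrac{1}{2}\bigl(1+\Delta(\overline{\sigma}_0,\overline{\sigma}_1)\bigr) \;\leq\; \tfrac{1}{2}(1+\epsilon_d),
\end{align}
which is exactly the claimed $\tfrac{1}{2}(1+\epsilon_d)$-soundness against Bob. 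I do not anticipate a serious obstacle here; the only minor subtlety is to justify that the ``Alice accepts'' indicator contributes a factor of $1$ in this protocol (so that the bound is tight and no further slack from aborts needs to be tracked), and to ensure that allowing Bob arbitrary local ancillas and unitaries before his final guess does not strengthen him beyond what Holevo--Helstrom already permits on $\overline{\sigma}_0,\overline{\sigma}_1$.
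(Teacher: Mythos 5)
Your proof is correct and follows essentially the same approach as the paper: invoke the data-privacy bound on the averaged ciphertexts $\mathbbm{E}_k(\Enc_k[\proj{i,0}_{\A}])$ for $i\in\{0,1\}$, then apply the Holevo--Helstrom theorem to bound Bob's guessing probability by $\tfrac{1}{2}(1+\epsilon_d)$. The only (minor and correct) addition you make is spelling out explicitly that Alice never aborts in Protocol~\ref{prot:ot_from_qhe} and that Bob's local ancillas cannot help, points the paper leaves implicit.
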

\begin{proof}
    Suppose that the quantum homomorphic encryption protocol has $\epsilon_d$-data privacy. 
    Consider an honest Alice and a malicious Bob. Suppose that Alice's input is $\proj{i,0}_{\A}$. Because Bob does not know Alice's key, Bob perceives Alice's message as $\mathbbm{E}(\Enc_k[\proj{i,0}_{\A}])$ where the expectation is taken over a uniform distribution over all keys $k$. 
    From the $\epsilon_d$-data privacy of the quantum homomorphic encryption protocol defined in Definition~\ref{def:data_privacy}, we have
    \begin{align}
        \Delta\left(\mathbbm{E}(\Enc_k[\proj{0,0}_{\A}]),\mathbbm{E}(\Enc_k[\proj{1,0}_{\A}])\right)\leq \epsilon_d. 
    \end{align}
    Now Bob measures Alice's message and guesses $\hat{i}$ for $i$. Using the Holevo-Helstrom theorem in~\cite{Helstrom_1967,Holevo_1973} (or in a textbook, e.g.~\cite[Theorem 3.4]{Watrous_2018}), the cheating probability of Bob, i.e., the probability that Bob guesses the given state correctly when Bob is given either of two states, each with a probability of $\frac{1}{2}$, is at most 
    \begin{align}
        \Pr[\hat{i}=i] \leq \frac{1}{2} (1+\epsilon_d). 
    \end{align}
    Thus, the standard oblivious transfer protocol in Definition~\ref{def:standard_ot} has $\frac{1}{2} (1+\epsilon_d)$-soundness against a cheating Bob. 
\end{proof}

Before we proceed to Lemma~\ref{lemma:soundness_against_alice}, we prove two more claims. In Claim~\ref{claim:chosen_basis}, we show that the Schmidt basis of Alice's input can be chosen as the computational basis in Protocol~\ref{prot:ot_from_qhe} in the ideal protocol. 
\begin{claim}\label{claim:chosen_basis}
    Consider Protocol~\ref{prot:ot_from_qhe} in the ideal protocol. Suppose that the input $\ket{\psi'}\in {\A \R_{\A}}$ is of the form 
    \begin{align}\label{eqn:psi_prime}
        \ket{\psi'} =\sum_{i,i',j,j'\in \{0,1\}} a_{ii'jj'} \ket{i,i'}_{\A}\ket{j,j'}_{\R_{\A}},  
    \end{align}
    and $\mathcal{N}\in {\rm CPTP}({\O \R_{\A}}, {\hat{\O} \R_{\hat{\A}}})$ is any channel. Then there exists an input $\ket{\psi''}\in {\A \R_{\A}}$ of the form 
    \begin{align}\label{eqn:psi_double_prime}
        \ket{\psi''} = \sum_{i,i'\in \{0,1\}} \sqrt{p_{ii'}}\ket{i,i'}_{\A}\ket{i,i'}_{\R_{\A}} ,
    \end{align}
    and a post-processing channel $\mathcal{N}'\in {\rm CPTP}({\O \R_{\A}}, {\hat{\O} \R_{\hat{\A}}})$ such that for any $x_0,x_1 = 0,1$,
    \begin{align}
        \mathcal{N}\left(\mathcal{F}_{(x_0,x_1)}(\proj{\psi'}_{\A\R_{\A}})\right)=\mathcal{N}'\left(\mathcal{F}_{(x_0,x_1)}(\proj{\psi''}_{\A \R_{\A}})\right). 
    \end{align}
\end{claim}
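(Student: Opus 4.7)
The plan is to exploit the fact that, by Definition~\ref{def:sot_channels}, every $\mathcal{F}_{(x_0,x_1)}$ acts on system $\A$ as a measure-and-prepare channel in the computational basis $\{\ket{i,i'}\}$. Consequently $(\mathcal{F}_{(x_0,x_1)}\otimes \mathcal{I}_{\R_{\A}})[\proj{\psi'}]$ depends on $\proj{\psi'}$ only through the coefficients $\langle i,i'|_{\A}\proj{\psi'}\ket{i,i'}_{\A}$. So all the information about the off-diagonal blocks of $\proj{\psi'}$ in the $\A$-basis is discarded, and what remains can be repackaged by choosing $\ket{\psi''}$ to carry the diagonal information and moving the rest into the post-processing channel $\mathcal{N}'$.

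First, I would expand $\ket{\psi'}$ as in Eq.~\eqref{eqn:psi_prime} and compute
\begin{equation*}
(\mathcal{F}_{(x_0,x_1)}\otimes \mathcal{I}_{\R_{\A}})(\proj{\psi'}_{\A\R_{\A}}) = \sum_{i,i'\in\{0,1\}} \frac{\mathbbm{I}_2}{2}\otimes \proj{x_{i\oplus i'}\oplus i'}_{\O_2}\otimes \sigma_{(i,i')},
\end{equation*}
where $\sigma_{(i,i')} := \sum_{j,j',n,n'} a_{ii'jj'}\, a_{ii'nn'}^{*}\, \ketbra{j,j'}{n,n'}_{\R_{\A}}$ is a positive semidefinite operator on $\R_{\A}$ that is completely independent of $(x_0,x_1)$.

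Next, I would set $p_{ii'} := \Tr(\sigma_{(i,i')})$, which gives $\sum_{i,i'} p_{ii'} = \braket{\psi'}{\psi'} = 1$, so the state $\ket{\psi''}$ in Eq.~\eqref{eqn:psi_double_prime} is a legitimate normalized pure state. I would then introduce a measure-and-prepare CPTP map $\mathcal{E}\in \mathrm{CPTP}(\R_{\A},\R_{\A})$ defined by
\begin{equation*}
\mathcal{E}(\omega) = \sum_{i,i'\in\{0,1\}} \langle i,i'|\omega|i,i'\rangle \, \tilde{\sigma}_{(i,i')},
\end{equation*}
where $\tilde{\sigma}_{(i,i')} := \sigma_{(i,i')}/p_{ii'}$ if $p_{ii'}>0$ and $\tilde{\sigma}_{(i,i')}$ is chosen to be an arbitrary fixed density matrix otherwise. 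Finally, I would set $\mathcal{N}' := \mathcal{N}\circ (\mathcal{I}_{\O}\otimes \mathcal{E})$ and verify directly, using the form of $(\mathcal{F}_{(x_0,x_1)}\otimes \mathcal{I}_{\R_{\A}})(\proj{\psi''})$, that $(\mathcal{I}_{\O}\otimes \mathcal{E})$ sends $p_{ii'}\proj{i,i'}_{\R_{\A}}$ back to $\sigma_{(i,i')}$, reproducing the target state above and hence yielding $\mathcal{N}'(\mathcal{F}_{(x_0,x_1)}(\proj{\psi''})) = \mathcal{N}(\mathcal{F}_{(x_0,x_1)}(\proj{\psi'}))$ for every $(x_0,x_1)$.

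There is no genuine obstacle beyond careful bookkeeping: the essential content is the observation that $\mathcal{F}_{(x_0,x_1)}$ is dephasing on $\A$, after which everything follows by inspection. The only mild subtlety is handling the indices $(i,i')$ where $p_{ii'}=0$, which is harmless because then $\sigma_{(i,i')}=0$ as well, so the arbitrary choice of $\tilde{\sigma}_{(i,i')}$ in the definition of $\mathcal{E}$ does not affect the output state.
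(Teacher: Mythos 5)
Your proof is correct and follows essentially the same route as the paper's: both exploit that $\mathcal{F}_{(x_0,x_1)}$ dephases $\A$ in the computational basis, define $p_{ii'}$ as the trace of the conditional $\R_{\A}$ operator $\sigma_{(i,i')}$, and absorb the conditional states into a measure-and-prepare channel on $\R_{\A}$ pre-composed with $\mathcal{N}$. Your write-up is marginally more careful in that it explicitly verifies normalization of $\ket{\psi''}$ and handles the degenerate case $p_{ii'}=0$, a detail the paper leaves implicit.
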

\begin{proof}
    Here, we construct $\mathcal{N}'$ explicitly. 
    Consider an input $\ket{\psi'}\in {\A \R_{\A}}$ of the form Eq.~\eqref{eqn:psi_prime}. 
    Using the definition of $ \mathcal{F}_{(x_0,x_1)}$ in Definition~\ref{def:sot_channels}, we have 
    \begin{align}
        \mathcal{F}_{(x_0,x_1)}(\proj{\psi'}_{\A\R_{\A}})=\sum_{i,i'\in \{0,1\}}p_{ii'}\frac{\mathbbm{I}_2}{2}\otimes\proj{x_{i\oplus i'}\oplus i'}_{\O_2}\otimes \rho_{ii'}, 
    \end{align}
    where $\tau_{ii'}= \Tr_{\A}(\proj{i,i'}_{\A} \proj{\psi'}_{\A\R_{\A}})\in \mathscr{S}({\R_{\A}})$, 
    and $p_{ii'} = \Tr(\tau_{ii'})$ and $\rho_{ii'} = \frac{\tau_{ii'}}{p_{ii'}}$. Consider an input $\ket{\psi''}\in {\A \R_{\A}}$ of the form Eq.~\eqref{eqn:psi_double_prime}. Hence, \begin{align}\label{eqn:ideal_output}
        \mathcal{F}_{(x_0,x_1)}(\proj{\psi''})=\sum_{i,i'\in \{0,1\}}p_{ii'}\frac{\mathbbm{I}_2}{2}\otimes\proj{x_{i\oplus i'}\oplus i'}_{\O_2} \otimes \proj{i,i'}_{\R_{\A}}. 
    \end{align}
    By observation, we can construct $\mathcal{N}''\in {\rm CPTP}(\O \R_{\A},\O \R_{\A}) $ such that 
    \begin{align}
        \mathcal{N}''(\rho) = \sum_{i,i'\in\{0,1\}}\Tr_{\R_{\A}}(\proj{i,i'}_{\R_{\A}}\rho) \otimes \rho_{ii'},
    \end{align}
    which satisfies
    \begin{align}\label{eqn:state}
        \mathcal{F}_{(x_0,x_1)}(\proj{\psi'}_{\A\R_{\A}}) =\mathcal{N}''(\mathcal{F}_{(x_0,x_1)}(\proj{\psi''}_{\A\R_{\A}})).
    \end{align}
    Hence we can construct $\mathcal{N}' = \mathcal{N}\circ \mathcal{N}''$ which satisfies 
    \begin{align}\label{eqn:simulation}
        \mathcal{N}(\mathcal{F}_{(x_0,x_1)}(\proj{\psi'}_{\A\R_{\A}}))=\mathcal{N}'(\mathcal{F}_{(x_0,x_1)}(\proj{\psi''}_{\A\R_{\A}})).
    \end{align}
    Therefore, for any input $\ket{\psi'}\in {\A \R_{\A}}$ and any channel $\mathcal{N}\in {\rm CPTP}({\O \R_{\A}}, {\hat{\O} \R_{\hat{\A}}})$, there exist $\ket{\psi''}\in {\A \R_{\A}}$ and $\mathcal{N}'\in {\rm CPTP}({\O \R_{\A}}, {\hat{\O} \R_{\hat{\A}}})$ such that for any $\mathcal{F}_{(x_0,x_1)}$, Eq.~\eqref{eqn:simulation} holds.
\end{proof}

In Claim~\ref{claim:correct_chance} we show that the cheating probability of Alice is at most $\frac{1}{2}$ in the ideal protocol if the Schmidt basis of Alice's input is the computational basis. 
\begin{claim}\label{claim:correct_chance}
    Consider Protocol~\ref{prot:ot_from_qhe} in the ideal protocol. For any input $\ket{\psi''}\in {\A \R_{\A}}$ of the form Eq.~\eqref{eqn:psi_double_prime}, any channel $\mathcal{N}'\in {\rm CPTP}({\O \R_{\A}}, {\hat{\O} \R_{\hat{\A}}})$, and any positive operator valued measure $\{M_{(\hat{x}_0,\hat{x}_1)}\}_{\hat{x}_0,\hat{x}_1\in\{0,1\}}$ on ${\hat{\O}  \R_{\hat{\A}}}$ where $(\hat{x}_0,\hat{x}_1)\in \{0,1\}^2$ are Alice's guesses for $(x_0,x_1)$, the probability that Alice guesses correctly satisfies
    \begin{align}
        \Pr[(\hat{x}_0,\hat{x}_1)=(x_0,x_1)] \leq \frac{1}{2} ,
    \end{align}
    when $(x_0, x_1)$ are chosen uniformly at random.
\end{claim}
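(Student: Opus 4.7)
The plan is to reduce the quantum guessing problem to a classical Bayesian inference problem and then exploit the structural fact that Alice's measurement outcome reveals at most one of the two bits of $(x_0,x_1)$.

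First, since $\mathcal{N}'$ does not depend on $(x_0,x_1)$, data processing applies: for any POVM $\{M_{(\hat{x}_0,\hat{x}_1)}\}$ on $\hat{\O}\R_{\hat{\A}}$, the operators $\{\mathcal{N}'^{*}(M_{(\hat{x}_0,\hat{x}_1)})\}$ form a POVM on $\O\R_{\A}$ with exactly the same guessing probability on the pre-channel states. Hence I may assume, without loss of generality, that Alice applies a POVM directly to $\mathcal{F}_{(x_0,x_1)}(\proj{\psi''}_{\A\R_{\A}})$, and the task reduces to upper bounding
\begin{align}
\frac{1}{4}\sum_{(x_0,x_1)\in\{0,1\}^2}\Tr\!\left(M'_{(x_0,x_1)}\,\mathcal{F}_{(x_0,x_1)}(\proj{\psi''}_{\A\R_{\A}})\right),
\end{align}
over all POVMs $\{M'_{(\hat{x}_0,\hat{x}_1)}\}$ on $\O\R_{\A}$.

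Second, I invoke Eq.~\eqref{eqn:ideal_output}. The first-qubit factor $\mathbbm{I}_2/2$ is independent of $(x_0,x_1)$ and carries no information, so I may trace it out. The remaining state is diagonal in the fixed product basis $\{\ket{b}_{\O_2}\otimes\ket{i,i'}_{\R_{\A}}\}$, with probability $p_{ii'}$ assigned to the single outcome $b=x_{i\oplus i'}\oplus i'$ for each $(i,i')$. The optimal POVM on a classical ensemble is projective in this basis, so the problem reduces to a classical maximum-a-posteriori estimation of $(x_0,x_1)$ from the observation $(b,i,i')$, where $(i,i')\sim p_{ii'}$ is independent of $(x_0,x_1)$ and $b=x_{i\oplus i'}\oplus i'$ is a single bit.

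Third, a direct counting argument handles the classical problem. Given any observation $(b,i,i')$, the relation $b=x_{i\oplus i'}\oplus i'$ pins down $x_{i\oplus i'}$ to be $b\oplus i'$ but places no constraint on $x_{\overline{i\oplus i'}}$. Under the uniform prior on $(x_0,x_1)$, the posterior is therefore uniform on exactly two pairs, so any guess $(\hat{x}_0,\hat{x}_1)$ matches the truth with conditional probability $\frac{1}{2}$. Averaging over observations with their marginal weights, which sum to $1$, yields $\Pr[(\hat{x}_0,\hat{x}_1)=(x_0,x_1)]\leq\frac{1}{2}$. I do not anticipate any significant obstacle: the entire argument rests on the observation that $\mathcal{F}_{(x_0,x_1)}$ reveals only the single bit $x_{i\oplus i'}\oplus i'$, which determines exactly one of $x_0,x_1$ and leaves the other information-theoretically hidden.
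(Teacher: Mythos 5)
Your proposal is correct and follows essentially the same line as the paper: both push the POVM through the adjoint channel $\mathcal{N}'^{*}$ (using that it is unital), expand $\mathcal{F}_{(x_0,x_1)}(\proj{\psi''})$ explicitly, and exploit that the output pins down only the single bit $x_{i\oplus i'}$. The paper finishes with a one-line operator bound — replacing $\proj{x_{i\oplus i'}\oplus i'}_{\O_2}$ by $\mathbbm{I}_2$ and using $\sum_{x_0,x_1}M'_{(x_0,x_1)}=\mathbbm{I}$ — whereas you take a small detour through dephasing the POVM (valid since all post-channel states commute) to reduce to a classical posterior argument; the two are equivalent, with yours slightly longer but arguably more transparent about \emph{why} the bound is exactly $\tfrac12$.
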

\begin{proof}
    Suppose that in the ideal protocol, Alice inputs $\ket{\psi''}\in {\A \R_{\A}}$ of the form Eq.~\eqref{eqn:psi_double_prime}. Alice applies $\mathcal{N}'\in {\rm CPTP}({\O \R_{\A}}, {\hat{\O} \R_{\hat{\A}}})$. Let $\B$ denotes Bob's register. The definition of soundness against Alice requires Bob to input $(x_0,x_1)$ uniformly at random, or equivalently to input $\frac{1}{4}\sum_{x_0,x_1}\proj{x_0,x_1}_{\B}$. The state of both Alice and Bob in the ideal protocol is 
    \begin{align}\label{eqn:random_program_ideal_output}
        \rho_{\rm ideal} = \frac{1}{4}\sum_{x_0,x_1\in\{0,1\}} \mathcal{N'}(\mathcal{F}_{(x_0,x_1)}(\proj{\psi''}_{\A \R_{\A}}))\otimes\proj{x_0,x_1}_{\B}. 
    \end{align}
    Suppose that Alice measures $\{M_{(\hat{x}_0,\hat{x}_1)}\}_{\hat{x}_0,\hat{x}_1\in\{0,1\}}$ and guesses $(\hat{x}_0,\hat{x}_1)$ for $(x_0,x_1)$. The probability that Alice guesses correctly is
    \begin{align}\label{eqn:ideal_correct}
        \Pr[{(\hat{x}_0,\hat{x}_1)} = {(x_0,x_1)}] &= \sum_{x_0,x_1\in\{0,1\}}\Tr\left(M_{(x_0,x_1)}\otimes\proj{x_0,x_1}_{\B} \rho_{\rm ideal}\right) \\
        &= \frac{1}{4}\sum_{x_0,x_1\in\{0,1\}}\Tr\left(M_{(x_0,x_1)}\mathcal{N}'(\mathcal{F}_{(x_0,x_1)}(\proj{\psi''}_{\A \R_{\A}}))\right) 
        ,
    \end{align}
    where we substituted Eq.~\eqref{eqn:random_program_ideal_output} and took the trace over $\B$. Let $\mathcal{N}'^*$ be the adjoint of $\mathcal{N}'$ and $M_{(x_0,x_1)}' = \mathcal{N}'^*(M_{(x_0,x_1)})$. $\mathcal{N}'$ is trace preserving, thus $\mathcal{N}'^*$ is unital. Since $\{M_{(x_0,x_1)}\}_{x_0,x_1\in\{0,1\}}$ is a positive operator valued measure on $\hat{\O}\R_{\hat{\A}}$, $\{M_{(x_0,x_1)}\}_{x_0,x_1\in\{0,1\}}$ is also a positive operator valued measure on $\O\R_{\A}$. By replacing $\mathcal{N}'$ and $\{M_{(x_0,x_1)}\}_{x_0,x_1\in\{0,1\}}$ with $\mathcal{N}'^*$ and $\{M_{(x_0,x_1)}'\}_{x_0,x_1\in\{0,1\}}$, we obtain
    \begin{align}\label{eqn:ideal_correct_modified}
        \Pr[{(\hat{x}_0,\hat{x}_1)}={(x_0,x_1)}]=\frac{1}{4}\sum_{x_0,x_1\in\{0,1\}}\Tr\left(M_{(x_0,x_1)}'\mathcal{F}_{(x_0,x_1)}(\proj{\psi''}_{\A \R_{\A}})\right). 
    \end{align}
    Recall that $\ket{\psi''}$ has the form Eq.~\eqref{eqn:psi_double_prime} and thus  $\mathcal{F}_{(x_0,x_1)}(\proj{\psi''}_{\A\R_{\A}})$ has the form Eq.~\eqref{eqn:ideal_output}. Substituting both Eq.~\eqref{eqn:psi_double_prime} and Eq.~\eqref{eqn:ideal_output} into Eq.~\eqref{eqn:ideal_correct_modified}, we obtain
    \begin{align}\label{eqn:ideal_correct_simplified}
        & \Pr[(\hat{x}_0,\hat{x}_1)=(x_0,x_1)|\rho_{\rm ideal}]\nonumber \\
        & = \frac{1}{4}\sum_{x_0,x_1,i,i'\in\{0,1\}} \Tr \left(p_{ii'}M_{(x_0,x_1)}'\frac{\mathbbm{I}_2}{2}\otimes\proj{x_{i\oplus i'}\oplus i'}_{\O_2} \otimes \proj{i,i'}_{\R_{\A}} \right). 
    \end{align}
    To upper bound the probability, we replace $\proj{x_{i\oplus i'}\oplus i'}_{\O_2}$ with $\mathbbm{I}_{2}$. Substituting $\sum_{x_0,x_1\in\{0,1\}}M_{(x_0,x_1)}'= \mathbbm{I}_{\O\R_{\A}}$ as well as $\sum_{i,i'\in\{0,1\}}p_{ii'}=1$, we obtain that the probability that Alice guesses correctly is at most $\frac{1}{2}$ in the ideal protocol
    \begin{equation}\label{eqn:ideal_correct_final}
        \Pr[(\hat{x}_0,\hat{x}_1)=(x_0,x_1)|\rho_{\rm ideal}]\leq \frac{1}{2}\sum_{i,i'\in\{0,1\}}p_{ii'} = \frac{1}{2}, 
    \end{equation}
    which completes the proof. 

\end{proof}
With Claim~\ref{claim:chosen_basis} and Claim~\ref{claim:correct_chance} in hand, we are ready to prove the bound on soundness against Alice in Lemma~\ref{lemma:soundness_against_alice}. The circuit privacy bounds the trace distance between Alice's state in the ideal protocol and that in the actual protocol. The trace distance further bounds the difference between Alice's cheating probability in the actual model and the ideal model. Note that Alice's cheating probability in the ideal protocol is $\frac{1}{2}$. Hence we can upper-bound Alice's cheating probability in the actual protocol. 
\begin{lemma}\label{lemma:soundness_against_alice}
    If the quantum homomorphic encryption protocol In Protocol~\ref{prot:ot_from_qhe} has $\epsilon_c$-circuit privacy, 
    then the constructed standard oblivious transfer protocol has $(\frac{1}{2}+\epsilon_c)$-soundness against Alice. 
\end{lemma}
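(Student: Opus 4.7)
The plan is to start with an arbitrary cheating strategy of Alice in the actual protocol, use circuit privacy to transport that strategy into the ideal protocol, and then invoke Claim~\ref{claim:chosen_basis} and Claim~\ref{claim:correct_chance} to bound the resulting cheating probability there. Concretely, let a malicious Alice prepare an arbitrary purification $\ket{\psi}\in\hat{\A}\R_{\hat{\A}}$ of her message, submit the $\hat{\A}$ register to Bob, receive back $\hat{\mathcal{F}}_{(x_0,x_1)}(\sigma)$, and finally perform any positive operator valued measure $\{M_{(\hat{x}_0,\hat{x}_1)}\}_{\hat{x}_0,\hat{x}_1\in\{0,1\}}$ on $\hat{\O}\R_{\hat{\A}}$ to produce her guess. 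Since Bob is honest and inputs $(x_0,x_1)$ uniformly, the joint state of Alice and Bob's program register just before the measurement is
\begin{align}
\rho_{\rm actual}=\frac{1}{4}\sum_{x_0,x_1\in\{0,1\}}(\hat{\mathcal{F}}_{(x_0,x_1)}\otimes\mathcal{I}_{\R_{\hat{\A}}})[\proj{\psi}_{\hat{\A}\R_{\hat{\A}}}]\otimes\proj{x_0,x_1}_{\B},
\end{align}
and Alice's cheating probability is $\Pr[(\hat{x}_0,\hat{x}_1)=(x_0,x_1)]=\sum_{x_0,x_1}\Tr(M_{(x_0,x_1)}\otimes\proj{x_0,x_1}_{\B}\,\rho_{\rm actual})$.

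Next, I would apply Definition~\ref{def:circuit_privacy} to the chosen $\ket{\psi}$: this yields a purification $\ket{\psi'}\in\A\R_{\A}$ and a post-processing channel $\mathcal{N}\in{\rm CPTP}(\O\R_{\A},\hat{\O}\R_{\hat{\A}})$ such that, for each of the four strong oblivious transfer channels $\mathcal{F}_{(x_0,x_1)}$, the trace distance between $\hat{\mathcal{F}}_{(x_0,x_1)}[\proj{\psi}]$ and $\mathcal{N}(\mathcal{F}_{(x_0,x_1)}(\proj{\psi'}))$ is at most $\epsilon_c$. Defining the ideal joint state
\begin{align}
\rho_{\rm ideal}=\frac{1}{4}\sum_{x_0,x_1\in\{0,1\}}\mathcal{N}(\mathcal{F}_{(x_0,x_1)}(\proj{\psi'}_{\A\R_{\A}}))\otimes\proj{x_0,x_1}_{\B},
\end{align}
the convexity (or joint convexity/monotonicity) of the trace distance under averaging over the classical register $\B$ gives $\Delta(\rho_{\rm actual},\rho_{\rm ideal})\le\epsilon_c$. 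Now I invoke Claim~\ref{claim:chosen_basis}, which allows us to replace $\ket{\psi'}$ by an input $\ket{\psi''}$ of the Schmidt form in Eq.~\eqref{eqn:psi_double_prime} together with a new post-processing channel $\mathcal{N}'$, so that $\rho_{\rm ideal}$ is unchanged.

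To finish, I would use the operational meaning of the trace distance: for any binary-valued event (here, the event $(\hat{x}_0,\hat{x}_1)=(x_0,x_1)$, realised by the positive operator valued measure $\{M_{(x_0,x_1)}\otimes\proj{x_0,x_1}_{\B}\}$), the difference in acceptance probabilities between two states is bounded by their trace distance. Hence
\begin{align}
\Pr[(\hat{x}_0,\hat{x}_1)=(x_0,x_1)\mid\rho_{\rm actual}]\le\Pr[(\hat{x}_0,\hat{x}_1)=(x_0,x_1)\mid\rho_{\rm ideal}]+\epsilon_c.
\end{align}
By Claim~\ref{claim:correct_chance}, applied to $\ket{\psi''}$, $\mathcal{N}'$ and the measurement $\{M_{(\hat{x}_0,\hat{x}_1)}\}$, the ideal-protocol cheating probability is at most $\tfrac{1}{2}$. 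Combining these gives the claimed bound $P_A^\star\le\tfrac{1}{2}+\epsilon_c$. The main obstacle is really a bookkeeping one: one must make sure that the single $(\ket{\psi'},\mathcal{N})$ supplied by the circuit privacy definition works simultaneously for all four values of $(x_0,x_1)$ (which is exactly the quantification order in Definition~\ref{def:circuit_privacy}), and that swapping $(\ket{\psi'},\mathcal{N})$ for $(\ket{\psi''},\mathcal{N}')$ via Claim~\ref{claim:chosen_basis} does not disturb the trace distance estimate; once this is checked, the rest is a direct application of the Holevo–Helstrom-style bound on distinguishing events via trace distance.
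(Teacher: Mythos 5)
Your proposal is correct and follows essentially the same route as the paper: invoke Definition~\ref{def:circuit_privacy} to obtain $(\ket{\psi'},\mathcal{N})$, use convexity of trace distance to bound $\Delta(\rho_{\rm actual},\rho_{\rm ideal})\le\epsilon_c$, swap to the Schmidt-form input via Claim~\ref{claim:chosen_basis}, bound the ideal cheating probability by $\tfrac12$ via Claim~\ref{claim:correct_chance}, and combine with the Holevo--Helstrom operational bound on trace distance. You also correctly read the lemma as bounding soundness against a cheating \emph{Alice} (i.e.\ $P_A^\star\le\tfrac12+\epsilon_c$), despite the misprint ``against Bob'' in the lemma statement.
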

\begin{proof}
    Suppose that in the actual protocol, Alice inputs $\ket{\psi}\in {\hat{\A}\R_{\hat{\A}}}$. Again Let $\B$ denote Bob's register. The definition of soundness against Alice requires Bob to input $(x_0,x_1)$ uniformly at random, or equivalently Bob to input $\frac{1}{4}\sum_{x_0,x_1}\proj{x_0,x_1}_{\B}$. Bob applies $\hat{\mathcal{F}}_{(x_0,x_1)}$ according to his values for $(x_0,x_1)$. Alice and Bob's joint state in the actual protocol is 
    \begin{align}\label{eqn:random_program_actual_output}
        \rho_{\rm actual} = \frac{1}{4}\sum_{x_0,x_1\in\{0,1\}}\hat{\mathcal{F}}_{(x_0,x_1)}[\proj{\psi}_{\hat{\A}\R_{\hat{\A}}}]\otimes \proj{x_0,x_1}_{\B}.
    \end{align}
    Suppose that Alice measures $\{M_{(\hat{x}_0,\hat{x}_1)}\}_{\hat{x}_0,\hat{x}_1\in\{0,1\}}$ and guesses $(\hat{x}_0,\hat{x}_1)$ for $(x_0,x_1)$. 
    The probability that Alice can guess correctly is
    \begin{align}\label{eqn:actual_correct}
        \Pr[{(\hat{x}_0,\hat{x}_1)}={(x_0,x_1)}| \rho_{\rm actual}]=\sum_{{x_0,x_1\in\{0,1\}}}\Tr \left(M_{(x_0,x_1)}\otimes \proj{x_0,x_1}_{\B} \rho_{\rm actual}\right).
    \end{align}
    The absolute value of Eq.~\eqref{eqn:ideal_correct} minus Eq.~\eqref{eqn:actual_correct} can be bounded by the trace distance between $\rho_{\rm ideal}$ and $\rho_{\rm actual}$, as shown in the Holevo-Helstrom theorem in~\cite{Helstrom_1967,Holevo_1973} (or a modern version in~\cite[Theorem 3.4]{Watrous_2018})
    \begin{align}\label{eqn:difference_between_ideal_and_actual}
        \left|\Pr[(\hat{x}_0,\hat{x}_1)=(x_0,x_1)|\rho_{\rm actual}]-\Pr[{(\hat{x}_0,\hat{x}_1)}=(x_0,x_1)|\rho_{\rm ideal}]\right|\leq \Delta(\rho_{\rm actual},\rho_{\rm ideal}). 
    \end{align}
    Substituting Eq.~\eqref{eqn:random_program_ideal_output} and Eq.~\eqref{eqn:random_program_actual_output} into Eq.~\eqref{eqn:difference_between_ideal_and_actual} and applying the convexity of the trace distance in~\cite[Theorem~9.3]{Nielsen_2010}, we obtain
    \begin{align}
        \Delta\left(\rho_{\rm actual},\rho_{\rm ideal}\right)\leq \frac{1}{4}\sum_{x_0,x_1\in\{0,1\}}\Delta\left(\hat{\mathcal{F}}_{(x_0,x_1)}[\proj{\psi}_{\hat{\A}\R_{\hat{\A}}}],\mathcal{N'}\left(\mathcal{F}_{(x_0,x_1)}(\proj{\psi''}_{\A \R_{\A}})\right)\right) \leq \epsilon_c ,
    \end{align}
    where the last inequality is due to the $\epsilon_c$-circuit privacy of the quantum homomorphic encryption protocol defined in Definition~\ref{def:circuit_privacy}, which implies that for any $\ket{\psi}$ we can find $\ket{\psi''}$ and $\mathcal{N}'$ such that for any $(x_0,x_1)$
    \begin{align}
        \Delta\left(\hat{\mathcal{F}}_{(x_0,x_1)}[\proj{\psi}_{\hat{\A}\R_{\hat{\A}}}],\mathcal{N'}\left(\mathcal{F}_{(x_0,x_1)}(\proj{\psi''}_{\A \R_{\A}})\right)\right)\leq \epsilon_c. 
    \end{align}
    Thus,
    \begin{align}\label{eqn:difference_between_ideal_and_actual_final}
        \left|\Pr[{(\hat{x}_0,\hat{x}_1)}={(x_0,x_1)}|\rho_{\rm actual}]-\Pr[{(\hat{x}_0,\hat{x}_1)}={(x_0,x_1)}|\rho_{\rm ideal}]\right|\leq \epsilon_c. 
    \end{align}
    Substituting Eq.~\eqref{eqn:ideal_correct_final} in Claim~\ref{claim:correct_chance} into Eq.~\eqref{eqn:difference_between_ideal_and_actual_final}, we obtain that the cheating probability of Alice is at most 
    \begin{align}
        \Pr[{(\hat{x}_0,\hat{x}_1)}={(x_0,x_1)}|\rho_{\rm actual}]\leq \frac{1}{2}+\epsilon_c.  
    \end{align}
    Hence, the oblivious transfer protocol in Definition~\ref{def:standard_ot} has $(\frac{1}{2}+\epsilon_c)$-soundness. 
\end{proof}
We can immediately prove Theorem~\ref{thm:construction} with Protocol~\ref{prot:ot_from_qhe}, Lemma~\ref{lemma:completeness}, Lemma~\ref{lemma:soundness_against_bob} and Lemma~\ref{lemma:soundness_against_alice}. 
\begin{theorem}\label{thm:construction}
    Suppose that there is a quantum homomorphic encryption protocol $Q$ with $\epsilon$-correctness, $\epsilon_d$-data privacy and $\epsilon_c$-circuit privacy that delegates $\mathcal{F}_{(x_0,x_1)}$. Then one can use $Q$ to execute standard oblivious transfer with $\epsilon$-completeness, $(\frac{1}{2}+\epsilon_c)$-soundness against a cheating Alice and $\frac{1}{2}(1+\epsilon_d)$-soundness against a cheating Bob. 
\end{theorem}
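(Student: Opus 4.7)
The plan is to combine the three lemmas established earlier in this section. First I would observe that Protocol~\ref{prot:ot_from_qhe} is the explicit reduction that turns the quantum homomorphic encryption scheme $Q$ (with $\{\mathcal{F}_{(x_0,x_1)}\}_{x_0,x_1\in\{0,1\}}\subset \mathscr{F}$) into a candidate standard oblivious transfer scheme: Alice encrypts the computational basis state $\ket{i,0}_{\A}$, Bob evaluates the strong OT channel $\mathcal{F}_{(x_0,x_1)}$ homomorphically, and Alice decrypts and measures in the computational basis on the second output qubit $\O_2$ to produce her guess $\hat{x}$ for $x_i$. This fixes the candidate protocol whose completeness and soundness parameters I must then read off.

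Next, I would verify the three properties required by Definition~\ref{def:standard_ot} one by one, by directly invoking the corresponding lemma. The $\epsilon$-completeness of the constructed protocol follows from Lemma~\ref{lemma:completeness} applied to the $\epsilon$-correctness of $Q$. The $\frac{1}{2}(1+\epsilon_d)$-soundness against a cheating Bob follows from Lemma~\ref{lemma:soundness_against_bob} applied to the $\epsilon_d$-data privacy of $Q$. Finally, the $(\frac{1}{2}+\epsilon_c)$-soundness against a cheating Alice follows from Lemma~\ref{lemma:soundness_against_alice} applied to the $\epsilon_c$-circuit privacy of $Q$.

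I do not expect any substantive obstacle at this stage: the theorem is a consolidation of three reductions that have already been carried out in full detail. The genuinely delicate step, handling a cheating Alice who prepares an arbitrary purified input on a reference system, is absorbed into Lemma~\ref{lemma:soundness_against_alice} via Claim~\ref{claim:chosen_basis} and Claim~\ref{claim:correct_chance}, which together ensure that the simulator can always be chosen so that Alice's optimal cheating probability in the ideal protocol is the trivial bound $1/2$. The circuit-privacy guarantee then transfers that ideal-world bound to the actual world up to an additive error $\epsilon_c$. With those lemmas in hand, the proof of Theorem~\ref{thm:construction} itself reduces to a single line citing Protocol~\ref{prot:ot_from_qhe} together with Lemmas~\ref{lemma:completeness},~\ref{lemma:soundness_against_bob}, and~\ref{lemma:soundness_against_alice}.
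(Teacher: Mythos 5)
Your proposal matches the paper's own proof exactly: instantiate Protocol~\ref{prot:ot_from_qhe} with $Q$ and then read off $\epsilon$-completeness, $\frac{1}{2}(1+\epsilon_d)$-soundness against Bob, and $(\frac{1}{2}+\epsilon_c)$-soundness against Alice from Lemmas~\ref{lemma:completeness}, \ref{lemma:soundness_against_bob}, and~\ref{lemma:soundness_against_alice} respectively. This is correct and is precisely the consolidation step the paper performs.
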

\begin{proof}
    Suppose that we apply Protocol~\ref{prot:ot_from_qhe} with a quantum homomorphic encryption protocol with $\epsilon$-correctness, $\epsilon_d$-data privacy and $\epsilon_c$-circuit privacy. Lemma~\ref{lemma:completeness} shows that the $\epsilon$-completeness of the standard oblivious transfer is satisfied, Lemma~\ref{lemma:soundness_against_bob} shows that the $\frac{1}{2}(1+\epsilon_d)$-soundness against a cheating Bob is satisfied, and Lemma~\ref{lemma:soundness_against_alice} shows that the $\frac{1}{2}+\epsilon_c$-soundness against a cheating Alice is satisfied, which completes the proof. 
\end{proof}

\section{Bounds for quantum homomorphic encryption}\label{sec:bounds_for_qhe}

\subsection{Lower bounds for quantum homomorphic encryption}\label{sec:lower_bounds_for_qhe}

Here, we prove a lower bound for a broad family of quantum homomorphic encryption protocols. The reduction in Theorem~\ref{thm:construction} can translate a bound on quantum oblivious transfer to the bound on quantum homomorphic encryption that can delegate $\mathcal{F}_{(x_0,x_1)}$. Lemma~\ref{lemma:sot_from_clifford} shows that quantum homomorphic encryption allowing Cliffords can simulate quantum homomorphic encryption allowing $\mathcal{F}_{(x_0,x_1)}$. Combining Theorem~\ref{thm:construction}, Theorem~\ref{thm:srot_bound} and Lemma~\ref{lemma:sot_from_clifford}, we obtain a lower bound for quantum homomorphic encryption with Cliffords.

\begin{corollary}\label{cor:lower_bound_for_qhe}
    A quantum homomorphic encryption protocol with $\epsilon$-correctness, $\epsilon_d$-data privacy and $\epsilon_c$-circuit privacy that allows the delegated computation of $\mathcal{F}_{(x_0,x_1)}$ satisfies
    \begin{align}
        \epsilon_c + \epsilon_d + 4\sqrt{\epsilon} \geq \frac{1}{2}. 
    \end{align}
    Moreover, the above bound also applies to quantum homomorphic encryption allowing $\mathcal{F}_{(x_0,x_1)}^{(r_0,r_1,r_2,r_3)}$ or with Cliffords. 
\end{corollary}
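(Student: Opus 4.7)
The plan is to chain together the three ingredients already assembled in the paper: the reduction from standard oblivious transfer to quantum homomorphic encryption in Theorem~\ref{thm:construction}, the lower bound for (semi-random/standard) oblivious transfer in Theorem~\ref{thm:srot_bound}, and, for the Clifford version of the statement, the randomization lemma (Lemma~\ref{lemma:sot_from_clifford}). Since the corollary is a composition result, no new technical machinery is needed; the task is just to verify the arithmetic of the substitution.

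First, I would start from an arbitrary quantum homomorphic encryption protocol that delegates $\{\mathcal{F}_{(x_0,x_1)}\}_{x_0,x_1\in\{0,1\}}$ and has $\epsilon$-correctness, $\epsilon_d$-data privacy, and $\epsilon_c$-circuit privacy, and invoke Theorem~\ref{thm:construction} to obtain a standard oblivious transfer protocol (namely the one given by Protocol~\ref{prot:ot_from_qhe}) with completeness parameter $\delta=\epsilon$, soundness against a cheating Alice $P_A^\star=\tfrac{1}{2}+\epsilon_c$, and soundness against a cheating Bob $P_B^\star=\tfrac{1}{2}(1+\epsilon_d)$.

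Next, I would plug these three parameters into the oblivious transfer lower bound $P_A^\star+2P_B^\star+4\sqrt{\delta}\geq 2$ from Theorem~\ref{thm:srot_bound}. The substitution gives
\begin{align}
    \left(\tfrac{1}{2}+\epsilon_c\right) + \left(1+\epsilon_d\right) + 4\sqrt{\epsilon} \;\geq\; 2,
\end{align}
which rearranges to the claimed $\epsilon_c+\epsilon_d+4\sqrt{\epsilon}\geq \tfrac{1}{2}$. To handle the ``moreover'' clause, I would appeal to Lemma~\ref{lemma:sot_from_clifford}: any quantum homomorphic encryption protocol delegating the Clifford family $\mathcal{F}_{(x_0,x_1)}^{(r_0,r_1,r_2,r_3)}$ (and hence any Clifford-supporting protocol) can be turned, with the same correctness, data-privacy, and circuit-privacy parameters, into one delegating $\mathcal{F}_{(x_0,x_1)}$ by having Bob randomize internally over $(r_0,r_1,r_2,r_3)$; the previous paragraph then applies verbatim.

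There is no genuine obstacle here, the entire proof is an explicit substitution plus one randomization step. The only point that deserves a sanity check is that Theorem~\ref{thm:srot_bound} applies to standard (not only semi-random) oblivious transfer, which is guaranteed by the equivalence in Theorem~\ref{thm:equivalence_between_ot_and_srot}. All conceptual work has already been done in the earlier sections.
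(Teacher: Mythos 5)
Your proposal is correct and follows exactly the composition the paper intends: apply Theorem~\ref{thm:construction} to obtain a standard oblivious transfer protocol with parameters $\delta=\epsilon$, $P_A^\star=\tfrac{1}{2}+\epsilon_c$, $P_B^\star=\tfrac{1}{2}(1+\epsilon_d)$, substitute into the bound $P_A^\star+2P_B^\star+4\sqrt{\delta}\geq 2$ from Theorem~\ref{thm:srot_bound}, and invoke Lemma~\ref{lemma:sot_from_clifford} for the Clifford extension. The arithmetic checks out, and the paper itself gives no separate proof beyond the sentence telling the reader to combine these three results, so your write-out is precisely what was intended.
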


\subsection{Upper bounds for quantum homomorphic encryption}\label{sec:upper_bounds_for_qhe}

In this section, we relate the circuit privacy of quantum homomorphic encryption that delegates a set of channels to the maximal error probability of multi-channel quantum hypothesis testing over the set of channels. In this way, we can prove upper bounds on circuit privacy of quantum homomorphic encryption. 

We first introduce the multi-channel quantum hypothesis testing problem. Suppose that we are given a quantum channel $\mathcal{F}$ which is chosen from a set $\mathscr{F}$ of quantum channels. We apply the quantum channel $\mathcal{F}$ on an input $\ket{\psi'}\in{\A\R_{\A}}$, perform a positive operator valued measure $\{M_{\mathcal{F}'}\}_{\mathcal{F}'\in\mathscr{F}}$ on the output $\mathcal{F}(\proj{\psi'}_{\A\R_{\A}})\in\mathscr{S}(\O\R_{\A})$ and use the measurement outcome $\mathcal{F}'$ to guess for $\mathcal{F}$. The maximal error probability of multi-channel quantum hypothesis testing over $\mathscr{F}$ is 
\begin{align}\label{eqn:maximal_error_probability}
    p_{\max}(\mathscr{F}) = \min_{\ket{\psi'}\in {\A\R_{\A}},\{M_{\mathcal{F}'}\}_{\mathcal{F}'\in\mathscr{F}}}\max_{\mathcal{F}\in\mathscr{F}} \Tr\left((\mathbbm{I}-M_{\mathcal{F}})\mathcal{F}(\proj{\psi'}_{\A\R_{\A}})\right). 
\end{align}

Now, consider the circuit privacy of quantum homomorphic encryption that allows the delegation of a set of channels $\mathscr{F}$. Suppose that Alice is malicious and Bob is honest. Recall the actual protocol in Figure~\ref{fig:qhe_cheating_alice} and the ideal protocol in Figure~\ref{fig:qhe_ideal}. In the actual protocol and the ideal protocol, Alice's final state is $\hat{\mathcal{F}}(\proj{\psi}_{\hat{\A}\R_{\hat{\A}}})\in \mathscr{S}({ \hat{\O}\R_{\hat{\A}}})$ and $\mathcal{F}(\proj{\psi'}_{\A \R_{\A}})\in \mathscr{S}({\O \R_{\A}})$ respectively. 
The circuit privacy quantifies how well Alice in the ideal protocol can simulate Alice in the actual protocol. There is always a possible strategy for the simulation: Alice in the ideal protocol performs multi-channel quantum hypothesis testing and then uses the measurement outcome to simulate Alice in the actual protocol. More precisely, first, Alice in the ideal protocol sends $\ket{\psi'}\in{\A\R_{\A}}$ to Charlie, obtains $\mathcal{F}(\proj{\psi'}_{\A\R_{\A}})\in \mathscr{S}(\O\R_{\A})$, measures $\{M_{\mathcal{F}'}\}_{\mathcal{F}'\in\mathscr{F}}$ on $\mathcal{F}(\proj{\psi'}_{\A\R_{\A}})$ and guesses $\mathcal{F}'$ for $\mathcal{F}$; second, Alice in the ideal protocol use $\hat{\mathcal{F}}'(\proj{\psi}_{\hat{\A}\R_{\hat{\A}}})$ to simulate $\hat{\mathcal{F}}(\proj{\psi}_{\hat{\A}\R_{\hat{\A}}})$. We can then use this strategy to upper bound the circuit privacy of quantum homomorphic encryption. 

\begin{theorem}\label{thm:general_upper_bound}
    The circuit privacy of a quantum homomorphic encryption protocol that delegates a set $\mathscr{F}$ of channels is upper bounded by the maximal error probability of multi-channel quantum hypothesis testing over a set $\mathscr{F}$ of channels
    \begin{align}
        \epsilon_c\leq p_{\max}(\mathscr{F}). 
    \end{align}
\end{theorem}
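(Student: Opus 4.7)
The plan is to exhibit an explicit simulation channel achieving the stated bound, formalising the strategy sketched just before the theorem. Let $\ket{\psi'}\in\A\R_{\A}$ and POVM $\{M_{\mathcal{F}'}\}_{\mathcal{F}'\in\mathscr{F}}$ on $\O\R_{\A}$ be optimisers in the definition~\eqref{eqn:maximal_error_probability} of $p_{\max}(\mathscr{F})$. For any purification $\ket{\psi}\in\hat{\A}\R_{\hat{\A}}$ of Alice's message in the actual protocol, I would define the post-processing channel
\begin{align}
\mathcal{N}(\rho) = \sum_{\mathcal{F}'\in\mathscr{F}} \Tr(M_{\mathcal{F}'}\rho)\, \hat{\mathcal{F}}'[\proj{\psi}_{\hat{\A}\R_{\hat{\A}}}],
\end{align}
which is manifestly CPTP since $\{M_{\mathcal{F}'}\}$ is a POVM and each $\hat{\mathcal{F}}'[\proj{\psi}]$ is a valid density operator. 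Intuitively, $\mathcal{N}$ runs the optimal multi-channel hypothesis test on Charlie's output in the ideal protocol to guess which $\mathcal{F}$ Bob used, then replays the corresponding encrypted evaluation on the true purification $\ket{\psi}$.

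Next, I would bound the trace distance in~\eqref{eqn:circuit_privacy} for this choice. Setting $p_{\mathcal{F}} = \Tr(M_{\mathcal{F}}\mathcal{F}(\proj{\psi'}_{\A\R_{\A}}))$, isolating the $\mathcal{F}'=\mathcal{F}$ term in $\mathcal{N}(\mathcal{F}(\proj{\psi'}_{\A\R_{\A}}))$, and applying the triangle inequality on the Schatten $1$-norm gives
\begin{align}
\Delta\!\left(\hat{\mathcal{F}}[\proj{\psi}_{\hat{\A}\R_{\hat{\A}}}],\, \mathcal{N}(\mathcal{F}(\proj{\psi'}_{\A\R_{\A}}))\right) \leq 1 - p_{\mathcal{F}} \leq p_{\max}(\mathscr{F}),
\end{align}
where the final inequality holds because $\ket{\psi'}$ and $\{M_{\mathcal{F}'}\}$ optimise~\eqref{eqn:maximal_error_probability}, so $1 - p_{\mathcal{F}} = \Tr((\mathbbm{I}-M_{\mathcal{F}})\mathcal{F}(\proj{\psi'}_{\A\R_{\A}})) \leq p_{\max}(\mathscr{F})$ uniformly in $\mathcal{F}\in\mathscr{F}$. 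The key observation powering this step is that on a correct guess the simulated state coincides exactly with the target, so only the error weight $1-p_{\mathcal{F}}$ contributes.

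Plugging this into the min-min-max characterisation of $\epsilon_c$ in Remark~\ref{rmk:circuit_privacy}, with $\ket{\psi'}$ fixed by the hypothesis-testing optimiser and $\mathcal{N}$ constructed as above from the given $\ket{\psi}$ (both of which Definition~\ref{def:circuit_privacy} permits to depend on $\ket{\psi}$ but not on $\mathcal{F}$), delivers $\epsilon_c \leq p_{\max}(\mathscr{F})$. I do not anticipate a serious obstacle: once the measure-then-prepare strategy is pinned down, the bound reduces to a one-line triangle-inequality estimate. The only minor subtlety is that for uncountable $\mathscr{F}$ one would have to replace the sum with an integral against a measurable POVM, but this is unnecessary for the finite Clifford-type sets relevant to the rest of the paper.
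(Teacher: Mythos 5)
Your proof is correct and follows essentially the same approach as the paper: the identical measure-then-prepare post-processing channel $\mathcal{N}$, the identical bound $1-p_{\mathcal{F}}=\Tr((\mathbbm{I}-M_{\mathcal{F}})\mathcal{F}(\proj{\psi'}))$ on the trace distance (you obtain it via the triangle inequality on the Schatten $1$-norm, the paper via convexity of trace distance plus the trivial bound $\Delta\le 1$ — these are the same estimate), and the same plug-in to the min-min-max characterization. The only cosmetic difference is that you assume the optimizers of $p_{\max}(\mathscr{F})$ are attained, whereas the paper carries arbitrary $\ket{\psi'}$ and $\{M_{\mathcal{F}'}\}$ through the inequality and minimizes at the end; as you note, this is immaterial for the finite channel sets relevant here.
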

\begin{proof}
    Suppose that Alice in the actual protocol obtains $\hat{\mathcal{F}}(\proj{\psi}_{\hat{\A}\R_{\hat{\A}}})$ and that Alice in the ideal protocol obtains $\mathcal{F}(\proj{\psi'}_{\A\R_{\A}})$. Alice in the ideal protocol performs the multi-channel quantum hypothesis testing and uses the measurement outcome to simulate Alice in the actual protocol. That is, Alice in the ideal protocol sends a state $\ket{\psi'}\in{\A\R_{\A}}$ to Charlie, obtains $\mathcal{F}(\proj{\psi'}_{\A\R_{\A}})\in \mathscr{S}(\O\R_{\A})$, measures the positive operator valued measurement $\{M_{\mathcal{F}'}\}_{\mathcal{F}'\in\mathscr{F}}$, obtains the measurement outcome $\mathcal{F}'$ and then constructs $\hat{\mathcal{F}}(\proj{\psi}_{\hat{\A}\hat{\R}_{\hat{\A}}})$. The corresponding quantum channel $\mathcal{N}\in{\rm CPTP}(\O\R_{\A},\hat{\O}\R_{\hat{A}})$ is given by
    \begin{align}\label{eqn:mathcal_N}
        \mathcal{N}(\rho)= \sum_{\mathcal{F}'\in\mathscr{F}} \Tr(M_{\mathcal{F}'} \rho) \hat{\mathcal{F}'}(\proj{\psi}_{\hat{\A}\R_{\hat{\A}}}). 
    \end{align}
    Substituting Eq.~\eqref{eqn:mathcal_N} into Eq.~\eqref{eqn:circuit_privacy}, applying the convexity of the trace distance in~\cite[Theorem~9.3]{Nielsen_2010} and note that trace distance is upper-bounded by $1$, we obtain
    \begin{align}\label{eqn:trace_distance_hypothesis_testing}
        \Delta(\hat{\mathcal{F}}(\proj{\psi}_{\hat{\A}\R_{\hat{\A}}}),\mathcal{N}(\mathcal{F}(\proj{\psi'}_{\A\R_{\A}})))
        \leq \Tr((\mathbbm{I}-M_{\mathcal{F}})\mathcal{F}(\proj{\psi'}_{\A\R_{\A}})),  
    \end{align}
    which holds for all possible $\mathcal{F}\in\mathscr{F}$, $\ket{\psi'}\in {\A\R_{\A}}$,  $\{M_{\mathcal{F}'}\}_{\mathcal{F}'\in\mathscr{F}}$, and $\ket{\psi}\in{\hat{\A}\R_{\hat{\A}}}$. We now maximize over all possible $\mathcal{F}\in\mathscr{F}$, minimize over all possible $\ket{\psi'}\in {\A\R_{\A}}$ and $\{M_{\mathcal{F}'}\}_{\mathcal{F}'\in\mathscr{F}}$, and maximize over all possible $\ket{\psi}\in{\hat{\A}\R_{\hat{\A}}}$. Recall Remark~\ref{rmk:circuit_privacy} and Eq.~\eqref{eqn:maximal_error_probability}, we obtain
    \begin{align}
        \epsilon_c\leq p_{\max}(\mathscr{F}), 
    \end{align}
    which completes the proof. 
\end{proof}
By applying a trivial positive operator valued measure $\{M_{\mathcal{F}}=\frac{1}{|\mathscr{F}|}\mathbbm{I}_{\O\R_{\A}}\}_{\mathcal{F}\in\mathscr{F}}$, we can immediately obtain a trivial upper bound $\epsilon_c\leq 1-\frac{1}{|\mathscr{F}|}$ for an arbitrary $\mathscr{F}$. We can further derive a better upper bound for $\mathscr{F}=\{\mathcal{F}_{(x_0,x_1)}\}_{x_0,x_1\in\{0,1\}}$. 
\begin{corollary}\label{cor:sot_upper_bound}
    For any quantum homomorphic encryption which only allows the delegated computation of $\mathcal{F}_{(x_0,x_1)} $, the circuit privacy is bounded by 
    \begin{align}
        \epsilon_c\leq \frac{1}{2}. 
    \end{align}
\end{corollary}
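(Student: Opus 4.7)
The plan is to invoke Theorem~\ref{thm:general_upper_bound}, which gives $\epsilon_c \leq p_{\max}(\mathscr{F})$ for $\mathscr{F} = \{\mathcal{F}_{(x_0,x_1)}\}_{x_0,x_1\in\{0,1\}}$, and then exhibit an explicit input state together with an explicit four-outcome POVM that certifies $p_{\max}(\mathscr{F}) \leq \tfrac{1}{2}$. Since Eq.~\eqref{eqn:maximal_error_probability} minimizes over inputs and measurements, any concrete choice yields an upper bound.

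First, I would take the input $\ket{\psi'} = \ket{0,0}_{\A}$ with trivial reference system. From the definition of $\mathcal{F}_{(x_0,x_1)}$, the output is
\begin{align}
\mathcal{F}_{(x_0,x_1)}(\proj{0,0}_{\A}) \;=\; \frac{\mathbbm{I}_2}{2}\otimes \proj{x_0}_{\O_2},
\end{align}
so the four channels collapse to two distinct output states: $(0,0)$ and $(0,1)$ both give $\tfrac{\mathbbm{I}_2}{2}\otimes\proj{0}_{\O_2}$, while $(1,0)$ and $(1,1)$ both give $\tfrac{\mathbbm{I}_2}{2}\otimes\proj{1}_{\O_2}$. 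Hence no measurement can distinguish the two channels within each pair — only $x_0$ is accessible — so the best one can do is randomize uniformly within each pair.

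Concretely, I would take the POVM
\begin{align}
M_{(0,0)} = M_{(0,1)} = \tfrac{1}{2}\,\mathbbm{I}_2\otimes\proj{0}_{\O_2},\qquad M_{(1,0)} = M_{(1,1)} = \tfrac{1}{2}\,\mathbbm{I}_2\otimes\proj{1}_{\O_2},
\end{align}
which clearly sums to $\mathbbm{I}_{\O}$. A direct computation gives $\Tr(M_{(x_0,x_1)}\,\mathcal{F}_{(x_0,x_1)}(\proj{0,0}_{\A})) = \tfrac{1}{2}$ for every $(x_0,x_1)\in\{0,1\}^2$, so the error probability is exactly $\tfrac{1}{2}$ for each channel. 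This yields $p_{\max}(\mathscr{F})\leq \tfrac{1}{2}$, and combining with Theorem~\ref{thm:general_upper_bound} completes the proof.

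There is no real obstacle here beyond picking the right product input and symmetric POVM: because the channels measure the input in the computational basis and then output a single classical bit determined by $x_{i\oplus i'}\oplus i'$, any single query can extract at most one bit of classical information about the two-bit string $(x_0,x_1)$, which forces $p_{\max}(\mathscr{F})\geq \tfrac{1}{2}$ as well (so the bound is actually tight, though only one direction is needed for the corollary).
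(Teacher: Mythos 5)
Your proposal is correct and takes essentially the same route as the paper: invoke Theorem~\ref{thm:general_upper_bound}, choose the computational-basis product input $\ket{0,0}_{\A}$ (the paper appends an inert reference $\ket{0,0}_{\R_{\A}}$, which changes nothing), and use the POVM $M_{(x_0,x_1)}=\tfrac{1}{2}\mathbbm{I}_2\otimes\proj{x_0}_{\O_2}$ to certify $p_{\max}(\mathscr{F})\leq\tfrac12$. Your closing observation that $p_{\max}(\mathscr{F})\geq\tfrac12$ is a plausible extra remark about tightness but is not argued rigorously (a full argument would need to handle arbitrary entangled probes over $\A\R_{\A}$); as you note, it is not needed for the corollary, and the part that is needed is exactly the paper's proof.
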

\begin{proof}
    Let $\ket{\psi'}=\ket{0,0}\ket{0,0}$ and $\{M_{(x_0,x_1)}=\mathbbm{I}_2\otimes\frac{1}{2}\proj{x_0}_{\O_2}\}_{_0,x_1\in\{0,1\}}$. Hence 
    \begin{align}
        \mathcal{F}_{(x_0,x_1)}(\proj{\psi'}_{\A\R_{\A}})=\frac{\mathbbm{I}_2}{2}\otimes \proj{x_0}_{\O_2} \otimes \proj{0,0}_{\R_{\A}}.
    \end{align}
    Thus we have 
    \begin{equation}
        p_{\max}(\mathscr{F})\leq\max_{x_0,x_1\in\{0,1\}}\Tr\left((\mathbbm{I}_4-M_{(x_0,x_1)})\mathcal{F}_{(x_0,x_1)}(\proj{\psi'}_{\A\R_{\A}})\right)=\frac{1}{2}. 
    \end{equation}
    We then immediately obtain Corollary~\ref{cor:sot_upper_bound} from Theorem~\ref{thm:general_upper_bound}. 
\end{proof}

\section{Conclusion}

In conclusion, we formally define quantum homomorphic encryption and its information-theoretic circuit privacy, correctness and data privacy. We then reduce quantum oblivious transfer to quantum homomorphic encryption with strong oblivious transfer channels. We show our reduction works for a broad class of quantum homomorphic encryption by further reducing quantum homomorphic encryption with strong oblivious transfer channels to quantum homomorphic encryption with Cliffords. Combining our reduction and a lower bound for quantum oblivious transfer, we obtain a lower bound for a broad class of quantum homomorphic encryption
\begin{align}
    \epsilon_d+\epsilon_c+ 4\sqrt{\epsilon} \geq \frac{1}{2}. 
\end{align}
To be complete, we also prove an upper bound for quantum homomorphic encryption by constructing a simulation strategy for Alice from multi-channel quantum hypothesis testing. We show that the circuit privacy of quantum homomorphic encryption that delegates a set $\mathscr{F}$ of channels is upper bounded by the maximal error probability of multi-channel quantum hypothesis testing
\begin{align}
    \epsilon_c\leq p_{\max}(\mathscr{F}). 
\end{align}
As a corollary, we find $\epsilon_c \leq \frac{1}{2}$ for quantum homomorphic encryption that only delegates $\mathcal{F}_{(x_0,x_1)}$.

In Figure~\ref{fig:qhe_bound}, we present lower bounds for quantum homomorphic encryption protocols that only allow the delegation of $\mathcal{F}_{(x_0,x_1)}$ with perfect correctness. Each point in Figure~\ref{fig:qhe_bound} is denoted by $(\epsilon_d,\epsilon_c)$. The line corresponds to the lower bound, and the shaded area indicates the impossible region. The diamond point $(0,\frac{1}{2})$ can be achieved using~\cite{Ouyang_2018} asymptotically. The square point $(1,0)$ can be trivially achieved by Alice sending her input to Bob without encryption. It is still unknown whether the line is reachable at points other than the diamond point. In particular, we do not have quantum homomorphic encryption protocols which trade some data privacy for some circuit privacy, and thus we cannot achieve other points on the line without additional resources. (If Alice and Bob have shared randomness or can use weak coin flipping, then interpolation would be possible.) 

We note that the impossible region could potentially be enlarged if we consider more powerful function classes so that, for example, $1$-out-of-$n$ oblivious transfer can be reduced to homomorphic encryption, reducing the probability that Alice and Bob correctly guess their opponent's inputs by chance. We leave this for future work.

\begin{figure}[!htpb]
    \centering
    \includegraphics[scale=0.65]{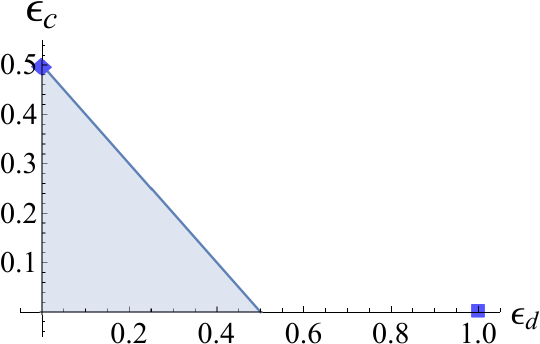}
    \caption{The lower bound for quantum homomorphic encryption that only allows to delegate $\mathcal{F}_{(x_0,x_1)}$ with perfect correctness. Each point is denoted by $(\epsilon_d,\epsilon_c)$. The line corresponds to the lower bound. The diamond point $(0,\frac{1}{2})$ and the square point $(1,0)$ are known to be achievable. The shaded area is impossible.}
    \label{fig:qhe_bound}
\end{figure} 


Our definitions and techniques can also apply to reductions from other cryptographic primitives (e.g. secure multi-party computation~\cite{Lo_1997,Colbeck_2007}, coin-flipping~\cite{Mochon_2007,Chailloux_2009,Aharonov_2016,Miller_2020}, bit commitment~\cite{Lo_and_Chau_1997,Mayers_1997}) to quantum homomorphic encryption, which may allow better privacy and correctness trade-offs.

\section{Acknowledgements}\label{sec:acknow}
We thank Christopher Portmann for useful discussions on security definitions. MT, YO and YH are supported by 
the National Research Foundation, Singapore and A*STAR under its CQT Bridging Grant. They are also funded by the Quantum Engineering programme grant NRF2021-QEP2-01-P06.

\appendix
\section{Equivalence between standard oblivious transfer and semi-random oblivious transfer}\label{sec:equivalence_between_ot_and_srot}
In this subsection, we give the detailed proof for Theorem~\ref{thm:equivalence_between_ot_and_srot}. 
\begin{proof}
    First, we reduce semi-random oblivious transfer to standard oblivious transfer, as is shown in Protocol~\ref{prot:semi_random_ot_from_standard_ot}. Bob inputs two data bits $(x_0,x_1)\in\{0,1\}^2$ in semi-random oblivious transfer. Alice generates $i\in\{0,1\}$ uniformly at random. Then Alice and Bob perform standard oblivious transfer in which Alice inputs $i$ and Bob inputs $(x_0,x_1)$ and from which Alice obtains $A$ and Bob obtains $B$. Each party accepts in semi-random oblivious transfer if and only if she or he accepts in standard oblivious transfer. If Alice accepts, Alice's output is ${\rm output}_A=(i,A)$.

    \begin{algorithm}[!htbp]
        \caption{Semi-random oblivious transfer from standard oblivious transfer}\label{prot:semi_random_ot_from_standard_ot}
        \begin{algorithmic}[1]
            \Require $(x_0,x_1)\in\{0,1\}^2$. \Comment{Bob's input. }
            \Ensure ${\rm output}_A\in\{0,1\}^2\cup\{\Abort\}$, ${\rm output}_B\in\{\Accept,\Abort\}$. \Comment{Alice's and Bob's outputs. }
            \State Alice: Generate $i\leftarrow \$ $. \Comment{Alice generates a bit uniformly at random}
            \State Alice and Bob: Perform $(A,B)\leftarrow {\rm StandardOT}(i,(x_0,x_1))$. 
            \If{$A= {\Abort}$}
                \State Alice: ${\rm output}_A\leftarrow{\Abort}$. 
            \Else
                \State Alice: ${\rm output}_A\leftarrow (i,A)$. \Comment{Alice accepts in {\rm SemirandomOT} iff Alice accepts in {\rm StandardOT}. } 
            \EndIf
            \State Bob: ${\rm output}_B\leftarrow B$. \Comment{Bob accepts in {\rm SemirandomOT} iff Bob accepts in {\rm StandardOT}. } 
        \end{algorithmic}
    \end{algorithm}
    Now we show that the $\delta$-completeness, $P_A^\star$-soundness against Alice and $P_B^\star$-soundness against Bob also translate from standard oblivious transfer to semi-random oblivious transfer in the reduction. 
    \begin{itemize}
        \item Completeness: Suppose that both Alice and Bob are honest. Then $i$ is uniformly at random. Due to the $\delta$-completeness of standard oblivious transfer, both parties accept, and with a probability of at least $1-\delta$, $y=x_{i}$. Hence the $\delta$-completeness of semi-random oblivious transfer is satisfied. 
        \item Soundness against a cheating Alice: Suppose that Alice is malicious while Bob is honest. Due to the $P_A^\star$-soundness against a cheating Alice of standard oblivious transfer, with a probability of at most $P_A^\star$, Alice can guess Bob's $(x_0,x_1)$ correctly and both parties accept. Therefore, the $P_A^\star$-soundness against a cheating Alice of semi-random oblivious transfer is satisfied. 
        \item Soundness against a cheating Bob: Suppose that Alice is honest while Bob is malicious. Due to the $P_B^\star$-soundness against a cheating Alice of standard oblivious transfer, with a probability of at most $P_B^\star$, Bob can guess Alice's $i$ correctly and both parties accept. Thus, the $P_B^\star$-soundness against a cheating Alice of semi-random oblivious transfer is satisfied. 
    \end{itemize}

    Second, we reduce standard oblivious transfer to semi-random oblivious transfer, as is shown in Protocol~\ref{prot:standard_ot_from_semi_random_ot}. Alice inputs $i\in\{0,1\}$ and Bob inputs $(x_0,x_1)\in\{0,1\}^2$ in standard oblivious transfer. Bob generates $(y_0,y_1)\in\{0,1\}^2$ uniformly at random. Then Alice and Bob perform semi-random oblivious transfer in which Bob inputs $(y_0,y_1)$ and from which Alice obtains $A$ and Bob obtains $B$. They then declare if they have aborted to the other party. Each party accepts in standard oblivious transfer if and only if both accept in semi-random oblivious transfer. If both accept, Alice sets $(j,\hat{y})=A$. Then Alice encrypts and sends the request $r=i\oplus j$ to Bob, Bob encrypts and sends the data bits $s_0=x_r\oplus y_0$ and $s_1=x_{\overline{r}}\oplus y_1$ to Alice. Alice decrypts the data and Alice's output is ${\rm output}_A=s_j\oplus \hat{y}$.

    \begin{algorithm}[!htbp]
        \caption{Standard oblivious transfer from semi-random oblivious transfer}\label{prot:standard_ot_from_semi_random_ot}
        \begin{algorithmic}[1]
            \Require $i\in\{0,1\}$, $(x_0,x_1)\in\{0,1\}^2$. \Comment{Alice's and Bob's inputs. }
            \Ensure ${\rm output}_B\in \{0,1\}\cup\{\Abort\}$, ${\rm output}_B\in\{\Accept,\Abort\}$. \Comment{Alice's and Bob's outputs. }
            \State Bob: Generate $y_0,y_1\leftarrow \$ $. \Comment{Bob generates two bits uniformly at random. } 
            \State Alice and Bob: Perform $(A,B)\leftarrow {\rm SemirandomOT}(y_0,y_1)$. 
            \State Alice and Bob: Declare if they have aborted to the other party. 
            \If{$(A= {\Abort})\lor(B={\Abort})$}
                \State Alice: ${\rm output}_A\leftarrow \Abort$. 
                \State Bob: ${\rm output}_B\leftarrow \Abort$. 
            \Else
                \State Alice: $(j,\hat{y})\leftarrow A$. 
                \State Alice: Send $r\leftarrow i\oplus j$ to Bob. 
                \State Bob: Send $s_0\leftarrow x_r\oplus y_0$ and $s_1\leftarrow x_{\overline{r}}\oplus y_1$ to Alice. 
                \State Alice: ${\rm output}_A\leftarrow s_j\oplus \hat{y}$. \Comment{Alice accepts in {\rm StandardOT} iff both accept in {\rm SemirandomOT}. }
                \State Bob: ${\rm output}_B\leftarrow {\Accept}$. \Comment{Bob accepts in {\rm StandardOT} iff both accept in {\rm SemirandomOT}. }
            \EndIf
        \end{algorithmic}
    \end{algorithm}
    
    Now we show that the $\delta$-completeness, $P_A^\star$-soundness against Alice and $P_B^\star$-soundness against Bob also translate from semi-random oblivious transfer to standard oblivious transfer in the reduction. 
    \begin{itemize}
        \item Completeness: Suppose that both Alice and Bob are honest. Due to the $\delta$-completeness of semi-random oblivious transfer, both parties accept, and with a probability of at least $1-\delta$, $\hat{y}=y_{j} $. Considering that $\hat{x}=x_{i}\oplus y_{j}\oplus \hat{y}$, thus with a probability of at least $1-\delta$, $\hat{x}=x_{i}$. Therefore, the $\delta$-completeness of standard oblivious transfer is satisfied. 
        \item Soundness against a cheating Alice: Suppose that Alice is malicious and Bob is honest. Due to the $P_A^\star$-soundness against a cheating Alice of semi-random oblivious transfer, with a probability of at most $P_A^\star$, Alice can guess Bob's $(y_0,y_1)$ and both parties accept. Since $y_0y_1$ is uniformly random and $x_r=s_0\oplus y_0$ and $ x_{\overline{r}})=s_1\oplus y_1$, with a probability of at most $P_A^\star$, Alice can guess Bob's $(x_0,x_1)$ and both parties accept. Hence, the $P_A^\star$-soundness against a cheating Alice of standard oblivious transfer is satisfied. 
        \item Soundness against a cheating Bob: Suppose that Alice is honest and Bob is malicious. Due to the $P_B^\star$-soundness against a cheating Bob of semi-random oblivious transfer, with a probability of at most $P_0^\star$, Bob can guess Alice's $j$ and both parties accept. Considering that $j$ is uniformly random and that $i=r\oplus j$, with a probability of at most $P_B^\star$, Bob can guess Alice's $i$ and both parties accept. Therefore, the $P_B^\star$-soundness against a cheating Bob of standard oblivious transfer is satisfied. 
    \end{itemize}
\end{proof}


\bibliographystyle{quantum}
\bibliography{quantumref}

\end{document}